\newcommand{\calC}{{\mathcal{C}}}
\newcommand{\tran}{^\intercal}
\newcommand{\R}{\mathbb{R}}
\newcommand{\convdis}{\stackrel{d}{\Rightarrow}}
\newcommand{\calI}{{\mathcal{I}}}
\newcommand{\sign}{\operatorname{sign}}
\newcommand{\diag}{\operatorname{diag}}
\newcommand{\argmin}{\operatorname{argmin}}
\renewcommand{\algorithmicrequire}{\textbf{STEP}}
\newtheorem{assumption}{Assumption}
\newcommand{\EE}[2][]{\mathbb{E}_{#1}\left[#2\right]}
\newcommand{\N}{\mathcal{N}}
\newcommand{\convp}{\stackrel{p}{\rightarrow}}
\newcommand{\Var}[2][]{\operatorname{Var}_{#1}\left[#2\right]}
\newcommand{\Cov}[2][]{\operatorname{Cov}_{#1}\left[#2\right]}
\newcommand{\calE}{{\mathcal{E}}}
\newcommand{\Rom}[1]{\uppercase\expandafter{\romannumeral #1\relax}}
\newcommand{\rom}[1]{\lowercase\expandafter{\romannumeral #1\relax}}
\newcommand{\bbT}{{\mathbb{T}}}
\newcommand{\bbQ}{{\mathbb{Q}}}
\newcommand{\bfr}{{\mathbf{r}}}
\newcommand{\Exp}[1]{\operatorname{exp}\left[#1\right]}
\newcommand{\bfs}{{\mathbf{s}}}
\newcommand{\bbL}{{\mathbb{L}}}
\newcommand{\bbM}{{\mathbb{M}}}
\newcommand{\bbD}{{\mathbb{D}}}
\newtheorem{theorem}{Theorem}
\newtheorem{lemma}[theorem]{Lemma} 
\newtheorem{proposition}[theorem]{Proposition} 
\newtheorem{remark}[theorem]{Remark}
\newcommand*\if@single[3]{%
  \setbox0\hbox{${\mathaccent"0362{#1}}^H$}%
  \setbox2\hbox{${\mathaccent"0362{\kern0pt#1}}^H$}%
  \ifdim\ht0=\ht2 #3\else #2\fi
  }
\newcommand*\rel@kern[1]{\kern#1\dimexpr\macc@kerna}
\newcommand*\widebar[1]{\@ifnextchar^{{\wide@bar{#1}{0}}}{\wide@bar{#1}{1}}}
\newcommand*\wide@bar[2]{\if@single{#1}{\wide@bar@{#1}{#2}{1}}{\wide@bar@{#1}{#2}{2}}}
\newcommand*\wide@bar@[3]{%
  \begingroup
  \def\mathaccent##1##2{%
    \if#32 \let\macc@nucleus\first@char \fi
    \setbox\z@\hbox{$\macc@style{\macc@nucleus}_{}$}%
    \setbox\tw@\hbox{$\macc@style{\macc@nucleus}{}_{}$}%
    \dimen@\wd\tw@
    \advance\dimen@-\wd\z@
    \divide\dimen@ 3
    \@tempdima\wd\tw@
    \advance\@tempdima-\scriptspace
    \divide\@tempdima 10
    \advance\dimen@-\@tempdima
    \ifdim\dimen@>\z@ \dimen@0pt\fi
    \rel@kern{0.6}\kern-\dimen@
    \if#31
      \overline{\rel@kern{-0.6}\kern\dimen@\macc@nucleus\rel@kern{0.4}\kern\dimen@}%
      \advance\dimen@0.4\dimexpr\macc@kerna
      \let\final@kern#2%
      \ifdim\dimen@<\z@ \let\final@kern1\fi
      \if\final@kern1 \kern-\dimen@\fi
    \else
      \overline{\rel@kern{-0.6}\kern\dimen@#1}%
    \fi
  }%
  \macc@depth\@ne
  \let\math@bgroup\@empty \let\math@egroup\macc@set@skewchar
  \mathsurround\z@ \frozen@everymath{\mathgroup\macc@group\relax}%
  \macc@set@skewchar\relax
  \let\mathaccentV\macc@nested@a
  \if#31
    \macc@nested@a\relax111{#1}%
  \else
    \def\gobble@till@marker##1\endmarker{}%
    \futurelet\first@char\gobble@till@marker#1\endmarker
    \ifcat\noexpand\first@char A\else
      \def\first@char{}%
    \fi
    \macc@nested@a\relax111{\first@char}%
  \fi
  \endgroup
}
\author{Sifan Liu$^{1}$, Snigdha Panigrahi$^{2}$ \\  $^{1}$Department of Statistics, Stanford University\\  $^{2}$Department of Statistics, University of Michigan}
\date{}  
\begin{document}

\title{Selective Inference with Distributed Data}

\maketitle

\begin{abstract}
As datasets grow larger, they are often distributed across multiple machines that compute in parallel and communicate with a central machine through short messages. In this paper, we focus on sparse regression and propose a new procedure for conducting selective inference with distributed data. Although many distributed procedures exist for point estimation in the sparse setting, few options are available for estimating uncertainties or conducting hypothesis tests based on the estimated sparsity. We solve a generalized linear regression on each machine, which then communicates a selected set of predictors to the central machine. The central machine uses these selected predictors to form a generalized linear model (GLM). To conduct inference in the selected GLM, our proposed procedure bases approximately-valid selective inference on an asymptotic likelihood. The proposal seeks only aggregated information, in relatively few dimensions, from each machine which is merged at the central machine for selective inference. By reusing low-dimensional summary statistics from local machines, our procedure achieves higher power while keeping the communication cost low. This method is also applicable as a solution to the notorious p-value lottery problem that arises when model selection is repeated on random splits of data. 
\end{abstract}

\section{Introduction}
\label{sec:1}

In the past few years, it has become increasingly important to be able to solve problems with a large number of training samples and predictors. 
Typically, such big datasets cannot be stored or analyzed on a single machine.
Distributed frameworks are widely used when dealing with big datasets spread over multiple machines \citep{bekkerman2011scaling,bertsekas2015parallel}.
One of the simplest and most popular approaches in this framework is  ``divide-and-conquer", which is also known as ``split-and-merge" or ``one-shot" approach; 
see for example the early work by \citet{mcdonald2009efficient,zinkevich2010parallelized,zhang2012communication}.
Most of these approaches use only one round of communication.
Each local machine estimates the unknown parameter, using its subset of the training data, and communicates the estimator to a central machine which merges the local estimators to obtain a global estimator.

In our paper, we focus on a sparse regression setting where only a few of the measured predictors affect the response.
The goal in this setup is usually two-fold: (i) select relevant predictors, and model the response by using the estimated sparsity, (ii) provide uncertainties or conduct hypothesis tests for the selected regression parameters, and all this while, respect the distributed nature of data. 
Substantial progress has been made on the first goal for sparse problems.
For example, \citet{lee2015communication} average locally computed, debiased Lasso estimators, and show that the averaged estimator achieves the same estimation rate as the full-sample Lasso, as long as the number of machines is not too large.
\citet{chen2014split} prove that the models aggregated via majority voting, based on variables selected by local machines, are consistent under some conditions.
More recently, \citet{battey2018distributed} provide an approach for conducting hypothesis testing in the distributed setting. However, these methods are limited to models that are fixed before the selection of relevant predictors, and do not offer inference in models that are formed only after selection with distributed data. As a result, the communication cost in prior work scales with the number of original predictors, which can be unnecessarily large in sparse settings. 


We introduce a new procedure for selective inference with distributed data.
Selective inference is a rigorous approach that accounts for the fact that the same data, used to select models, is re-used when providing confidence intervals and p-values.
Several ingenious tools have been developed to provide selective inference in sparse regression problems; please see papers by \cite{benjamini2005false, berk2013valid, belloni2015uniform, lee2016exact, tian2018selective, charkhi2018asymptotic, bachoc2019valid, panigrahi2021integrative}.
In this paper, our procedure re-uses data from all machines to base approximately-valid selective inference on an asymptotic ``selective likelihood". 
Having identified relevant predictors at different machines, we describe the relationship between our response and the predictors through a generalized linear model (GLM).
Our procedure only requires some aggregated information from each machine to deliver selective inference in a GLM with selected predictors.
For this reason, the developed techniques are also applicable to settings when datasets are distributed across different sites due to security, privacy, or ethical concerns, as encountered in the areas of differential privacy \citep{balcan2012distributed}, and federated learning \citep{mcmahan2017communication}.
More precisely, these sites can now merge aggregated information—without having to share their individual data—to infer in the selected GLM. 
The communication cost of our inferential procedure is only linear in the dimension of the selected model, which is relatively smaller than the initial dimension of the problem.
Finally, our procedure can be easily adapted to address the ``p-value lottery" problem that arises with model selection on random splits of data. 
The proposal serves as an efficient alternative to multi-splitting in  \cite{dezeure2015high}, and multi-carving in  \cite{schultheiss2021multicarving}.
Multi-carving is a more powerful version of multi-splitting, and is based on techniques that are known as carving \citep{fithian2014optimal, panigrahi2018carving}.
Closely aligned with the conceptual framework of multi-carving, our asymptotic selective likelihood uses the entire dataset more efficiently than multi-spitting.
At the same time, our procedure is significantly faster than  existing implementations of multi-carving without recourse to Markov chain Monte Carlo (MCMC) sampling.

The remaining paper is structured as follows.
We provide a slightly more technical account of our contributions after outlining the problem setup, and review related work in Section \ref{sec:2}.
In Section \ref{sec:3}, we describe our procedure for selective inference with distributed data.
In Section \ref{sec:4}, we provide an asymptotic justification for our selective likelihood in a GLM with selected predictors.
We discuss an adaptation of our procedure to solve the p-value lottery problem in Section \ref{sec:5}.
Section \ref{sec: experiments} illustrates an application of our method on simulated datasets and on a publicly available, medical dataset on intensive care unit (ICU) admissions.
We conclude the paper with a discussion in Section \ref{sec: conclusion}.
Proofs of our technical results are collected in the Appendix.

\section{Problem setup and background}
\label{sec:2}

In this section, we describe the distributed setup and introduce some background on selective inference with a single machine. Other related work is summarized at the end.
\subsection{Setup} 

 We consider a distributed setup with $K$ local machines, all connected to a central machine, referred to as machine 0.
Suppose that we observe $n$ i.i.d. observations 
$$(y_i, x_{i,1}, \ldots, x_{i,p})\in \R^{p+1},\  i\in [n],$$
where $[n]=\{1,2,\ldots, n\}$ for $n\in \mathbb{N}$.
Let $\calC^{(k)}\subset[n]$ denote the index set of the samples stored at machine $k$ for $0\leq k\leq K$. The index sets $\calC^{(k)}$ are disjoint and form a partition of $[n]$. Let $n_k=|\calC^{(k)}|$ be the cardinality of $\calC^{(k)}$. Let $\rho_k = \dfrac{n_k}{n}$ be the proportion of samples processed by local machine $k$. Then 
\[n=\sum_{k=0}^K n_k,\quad 1=\sum_{k=0}^K \rho_k.
\]
In the matrix form, $Y^{(k)}$ represents a response vector with 
$\left\{y_i: i \in \calC^{(k)}\right\}$
as its $n_k$ entries.
Similarly, $X^{(k)}$ represents a predictor matrix with
$\left\{(x_{i,1}, \ldots, x_{i,p}): i \in \calC^{(k)}\right\}$
in its $n_k$ rows. 
Let $D^{(k)}=\left(Y^{(k)},X^{(k)}\right)$.


Suppose that the $K$ local machines solve (in parallel) a generalized linear regression with the Lasso penalty.
The loss function in our problem is derived from the log-likelihood of a distribution in the exponential family 
$$
f(y_i\mid \theta)= \exp\left( \frac{y_i \theta - A(\theta)}{\sigma^2}\right)\cdot c(y_i;\sigma),
$$
where the canonical mean parameter $\theta$ is linked to the $p$ predictors as 
$$\theta= x_i\tran\beta,$$
and $\sigma^2$ is the dispersion parameter. 
We assume that $\sigma$ is either known, or can be consistently estimated.
The loss function for machine $k$ is given by
\begin{align*}
\ell^{(k)}(\beta; D^{(k)}) = \frac{1}{\sqrt{n}\rho_k} \sum_{i\in\calC^{(k)}} \left\{A(x_i\tran\beta)-y_ix_i\tran\beta \right\}.
\end{align*}
Let $\Lambda^{(k)} = \text{diag}\left(\lambda^{(k)}_1, \ldots, \lambda^{(k)}_p\right)$ be a diagonal matrix of positive regularization parameters.
Machine $k$ solves
\begin{equation}
\underset{\beta \in \R^p}{\text{minimize}} \; \ell^{(k)}(\beta; D^{(k)})+  \|\Lambda^{(k)}\beta\|_1.
\label{GL:regression}
\end{equation}
To simplify notations, we write $\Lambda^{(k)}= \Lambda$ for all $k\in [K]$.
But our procedure also works when $\Lambda^{(k)}$ are different across machines.
Denote by $\widehat\beta^{\Lambda,(k)}$ the Lasso estimator (the solution of problem~\eqref{GL:regression}), and 
denote by 
$$\widehat{E}^{(k)}=\left\{j\in [p]: \widehat\beta^{\Lambda,(k)}_j\neq 0\right\}$$
the indices of the selected predictors.
We use the symbol $E^{(k)}$ to denote the value of $\widehat{E}^{(k)}$ which we observe for our specific data $D^{(k)}$.
Fix $|E^{(k)}|=d^{(k)}$.

\subsection{Problem}

As described above, the $K$ local machines return as output the sets $\{E^{(k)},k\in[K]\}$, which are communicated to the central machine. 
The central machine aggregates the selected sets of predictors as
\begin{equation}
\label{aggregate}
\widehat{E}=\texttt{Aggregate}\left(\left\{\widehat{E}^{(k)}, k\in [K]\right\}\right).
\end{equation}
Suppose, a local machine $k'$ (or its output) was not used while forming the aggregated set $\widehat{E}$.
In this case, we can simply combine $D^{(k')}$ with $D^{(0)}$, and proceed with our prescription.
For now, we focus on aggregation rules that satisfy
$\widehat{E} \supseteq \cup_{k\in[K]}\widehat{E}_k$.
As a concrete example, we may consider 
$$\widehat{E}=\underset{k\in [K]}{\bigcup} \widehat{E}^{(k)}.$$
Later in Appendix~\ref{sec: general aggregation}, we describe the method for general aggregation rules with a slight modification.

Consistent with our notations, we let $E$ be the observed value of $\widehat{E}$, and let $|E|=d$.
The central machine models our real-valued response as a GLM with the following density
\begin{equation}
f(y\mid x_{E},\beta_E) = \exp\left( \frac{y x_{E}\tran\beta_E - A(x_{E}\tran\beta_E)}{\sigma^2}\right)\cdot c(y;\sigma).
\label{sel:model}
\end{equation}
Here, for a vector $x$ and $E\subset[p]$, $x_E$ denotes the subvector of $x$ with indices in $E$. 
Similarly, for a matrix $X$, $X_E$ consists of the columns of $X$ with indices in $E$.

Equivalently, we may summarize the modeling workflow in our paper as follows. 
Having selected a subset of important predictors, each local machine communicates a base GLM to the central machine.
The central machine combines these base models, through $E$, to form the selected GLM in \eqref{sel:model}.

Some key questions arise when we seek selective inference in the selected GLM.
\begin{enumerate}
\item Can the central machine re-use data from the local machines to deliver selective inference?
Of course, naive inference, which uses all the data without adjusting for selection bias in the selected GLM, falls short of coverage guarantees, as illustrated on one of our simulated instances in Figure~\ref{fig: naive}. (See Section~\ref{sec: experiments} for details of this simulation.)
\item Selective inference must respect the distributed nature of data, as done at the time of selection. 
What information does the central machine seek from the local machines for selective inference?
How many exchanges of communication does it take to compute selective inference with distributed data?
\end{enumerate}

\begin{figure}
\centering
\includegraphics[width=.5\textwidth]{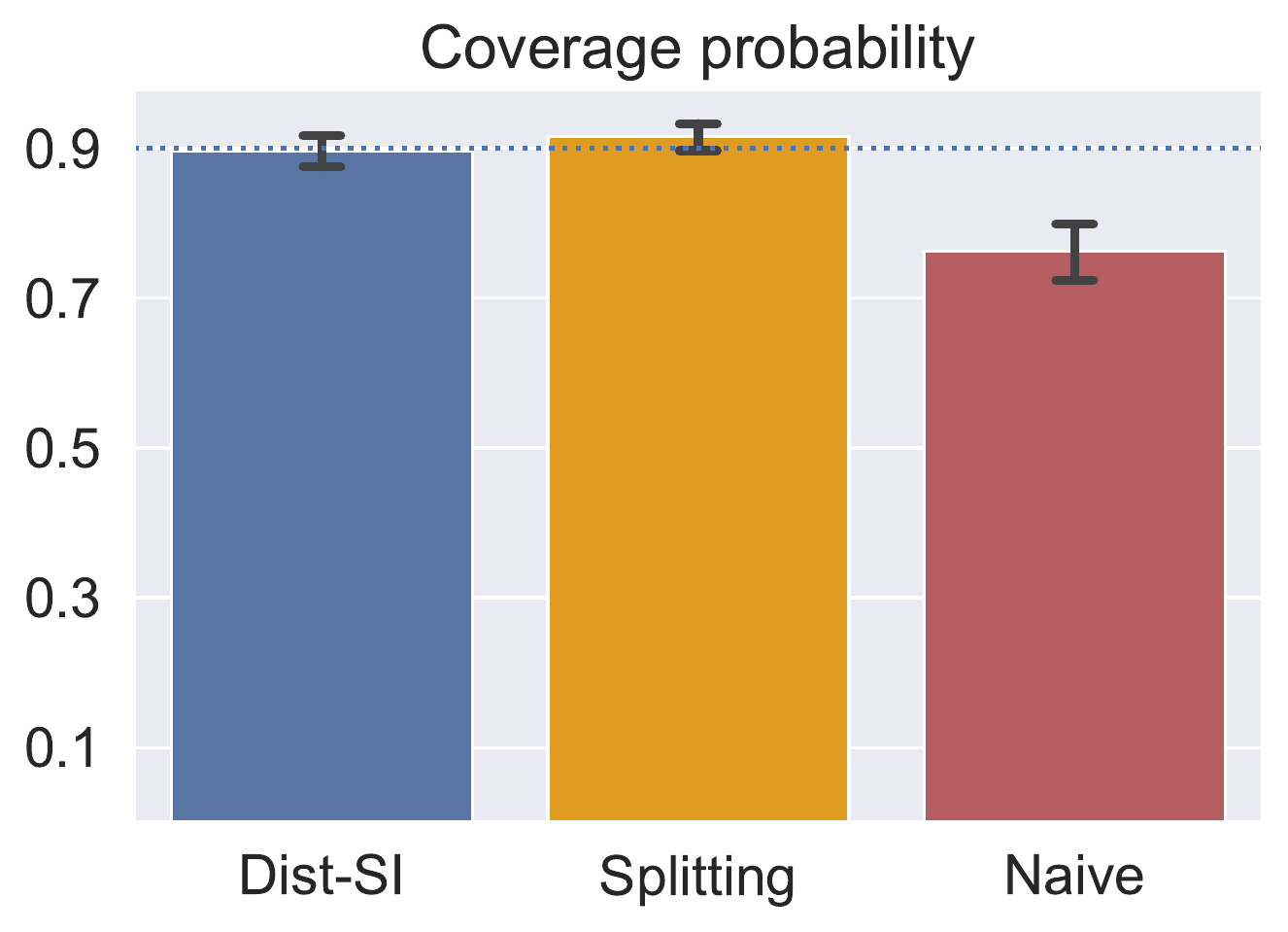}
\caption{Coverage probabilities of ``Dist-SI" (our procedure, distributed selective inference), ``Splitting", and ``Na\"{i}ve" methods for a Gaussian linear model, with $2$ local machines and a central machine. The $2$ local machines and the central machine each have $1000$ samples. The pre-specified level of coverage is $0.9$ which is indicated by the dotted horizontal line at $0.9$.}
\label{fig: naive}
\end{figure}

\subsection{Some background}
We provide some background in a rather simple setup with $K=1$, and consider the special case of a linear model. Machine $1$ solves problem~\eqref{GL:regression} with penalty $\Lambda^{(1)}=\lambda I_p$
on the subset $D^{(1)}$.
After observing $\widehat{E}=E$, each observation in our response vector is modeled independently as
\begin{equation*}
y \sim \mathcal{N}( x_E\tran \beta_E, \sigma^2).
\end{equation*}
This linear model is the GLM \eqref{sel:model} with
$$A(x_E\tran \beta_E)= \frac{1}{2} (x_E\tran \beta_E)^2, \; c(y;\sigma)= -\frac{y^2}{2\sigma^2} .$$
Can we re-use data from machine $1$, and combine it with $D^{(0)}$ to form inference in the selected linear model?
Recent work in selective inference solves an asymptotically related problem by forming a ``selective likelihood",  which we review very briefly below.

First, we note that the Gaussian regression in \eqref{GL:regression} can be re-written as
\begin{equation}
\underset{\beta \in \mathbb{R}^p}{\text{minimize}} \; \frac{1}{\sqrt{n}} \sum_{i\in [n]}  \frac{1}{2}(y_i - x_i\tran\beta)^2+ \lambda \|\beta\|_1-\sqrt{n}\omega_n\tran\beta,
\label{randomized:Lasso}
\end{equation}
where 
\begin{align*}
\omega_n = \frac{1}{n_k} \sum_{i\in\calC^{(1)}} \frac{1}{2}x_i(y_i - x_i\tran \widehat\beta^{\lambda,(1)})-\frac{1}{n} \sum_{i\in [n]}  \frac{1}{2}x_i(y_i - x_i\tran \widehat\beta^{\lambda,(1)}).
\label{equ: def omega}
\end{align*}
Regression of the form \eqref{randomized:Lasso}, with an added randomization variable $\omega_n$, is called the randomized Lasso in \cite{harris2016selective}, and is motivated by the randomized-response approach in \cite{tian2018selective}.
\citet{panigrahi2021integrative} show that 
$$\sqrt{n}\omega_n \convdis \mathcal{N}(0_p, \Sigma_\Omega)$$ where
$\Sigma_\Omega= \dfrac{\rho_0 \sigma^2}{\rho_1}\mathbb{E}[x_i x_i\tran],$
and $\omega_n$ is independent of other data variables involved during selection.
By casting selection into the randomized framework, a ``selective likelihood" can be obtained by conditioning the usual likelihood on a subset of the selection event
$$\left\{\widehat{E}=E\right\}.$$
In principle, conditioning on the above event or a subset of the same event yields us valid selective inference, and simultaneously permits us to re-use data from the selection steps.

Suppose the $n$ observations in our response vector are drawn as independent Gaussian variables with the same variance, and say, we solve \eqref{randomized:Lasso} with $\sqrt{n}\omega_n \sim \mathcal{N}(0_p, \Sigma_\Omega)$.
A recent procedure by \cite{panigrahi2022approximate} uses the selective likelihood to construct inference in the selected linear model: it centers interval estimates around the maximum likelihood estimator (MLE) of the selective likelihood, and estimates its variance by using the observed Fisher information matrix.
A major obstruction to make the procedure operational is the intractability of the selective likelihood.
For Gaussian data, the afore-mentioned paper provides tractable, approximately-valid selective inference by solving a convex optimization problem which yields a direct approximation to the selective likelihood function. 
Let $\widehat{\beta}^{(S)}_{E}$ and $\widehat{I}^{(S)}_{E,E}$ be the selective MLE and the observed Fisher information matrix respectively, derived from the score and curvature of the approximate likelihood. 
For a prespecified significance level $\alpha$, a two-sided $100\cdot(1-\alpha)\%$ confidence interval is now given by
$$\widehat{\beta}^{(S)}_{E,j} \pm z_{1-\alpha/2}\cdot  \frac{\widehat\sigma_j^{(S)}}{\sqrt{n}} \text{ for } j\in [E], $$
where $$\widehat\sigma_j^{(S)}=\sqrt{ \left(\widehat{I}^{(S)}_{E,E}\right)^{-1}_{j,j} }$$
is the estimated variance of the $j^{\text{th}}$ entry of $\sqrt{n}\widehat\beta^{(S)}_{E}$, and $z_{1-\alpha/2}$ is the $(1-\alpha/2)$-th quantile of a standard normal distribution.
The procedure closely resembles classical inference via maximum likelihood, except that the standard estimators are replaced by their selective (selection-corrected) analogs.

\subsection{Contributions and other related work}

Our paper develops a new procedure to deliver approximately-valid selective inference with distributed data.
We make three main contributions.
Re-using data from machines through a conditional approach is challenging in the distributed setup.
This is because a conditional approach proceeds by deriving an explicit representation of selection which is unavailable for distributed data. 
As a first contribution, we identify a simple representation for selection by developing a randomized framework in our problem.
Even with this representation, adopting a conditional approach is not immediate.
As reviewed in the preceding discussion, approximately-valid selective inference has been developed only for linear Gaussian models.
Not only does our paper provide an asymptotic selective likelihood for distributed data, but also establishes consistency of the likelihood function for a large class of generalized linear regression problems.
Second, our algorithm for selective inference simply requires aggregated information from each machine.
To infer under the selected GLM in \eqref{sel:model}, the central machine solves a straightforward, convex optimization after merging aggregated information from the local machines.
As a result, our techniques apply in situations when direct data-sharing between sites is not possible, or communication across sites is expensive.
Third, p-values computed under our asymptotic selective likelihood can be easily adapted to address the p-value lottery problem.
In this sense, our procedure is related to the multi-carving approach for improved replication \citep{schultheiss2021multicarving}, but yields us a much faster, sampling-free solution to this problem.

We conclude this section with some more related work. 
Within the distributed setting, much work has been devoted to the averaged M-estimator \citep{mcdonald2009efficient,zinkevich2010parallelized,zhang2012communication,rosenblatt2016optimality}. 
\citet{rosenblatt2016optimality} show that the averaged M-estimator is first-order equivalent to the centralized M-estimator in the fixed-dimension setting. \citet{dobriban2021distributed} study the efficiency of an estimator based on weighted average in the linear regression setting, as dimensions grow with sample sizes.
Some methods have taken a likelihood-centric approach, e.g., \citet{jordan2016} propose a surrogate likelihood where higher-order derivatives in a Taylor-series expansion of the full log-likelihood are replaced by local approximations.
\citet{lin2011aggregated} propose an aggregated estimator for generalized linear models (GLM), where the locally computed MLE and Hessian of the likelihood are merged for efficiency gains.
In work by \cite{huang2005sampling,neiswanger2013asymptotically,wang2013parallelizing,scott2016bayes,minsker2017robust,srivastava2018scalable}, distributed MCMC algorithms combine local posterior samples to obtain a global posterior distribution.

In the selective inference literature, a selective likelihood was appended to priors for Bayesian inference post selection in \cite{panigrahi2018scalable, panigrahi2021integrative}. 
The focus in these settings was on a category of variable selection rules that can be written as a set of polyhedral constraints on data. 
For the same category of selection rules, a separate section of papers \citep{lee2016exact, hyun2018exact, le2022more} construct an exact pivot to form confidence intervals and p-values. 
We, however, note that exact selective inference is only available for Gaussian models.
Moving beyond Gaussian models, \cite{taylor2018post} provide an asymptotic scheme to base selective inference on a GLM, within the usual regression context. For blackbox selection, \citet{liu2022black} propose to learn the selective likelihood by learning from the selection events of bootstrapped datasets.
We take a different approach in the distributed setup by casting the problem into a randomized framework, and provide an asymptotic likelihood function for the selected GLM.
A randomized framework for selective inference has been considered for better power in \cite{harris2016selective, panigrahi2017mcmc}, for a more efficient use of data in \cite{rasines2021splitting, panigrahi2022treatment}, and for stability in \cite{zrnic2020post}.

\section{Selective inference with distributed data}
\label{sec:3}

Before proceeding further, we fix some notations.
Recall, $|E^{(k)}|=d^{(k)}$, and $|E|=d$.
Let $\bar d=\underset{k\in [K]}{\sum}d^{(k)}$. 
For $k\in [K]$, the vector $\widehat{B}^{(k)}\in \R^{d^{(k)}}$ collects the nonzero components of the Lasso estimator $\widehat\beta^{\Lambda,(k)}$, and $\widehat{S}^{(k)}=\sign(\widehat{B}^{(k)})\in \R^{d^{(k)}}$ is the vector of signs for predictors that were selected by machine $k$.
Let $\widehat{Z}^{(k)}\in\R^{p - d^{(k)}}$ be the subgradient of the Lasso penalty for the inactive predictors, at the solution of the Lasso algorithm $\widehat\beta^{\Lambda,(k)}$.
As before, we will use the symbols $B^{(k)}$, $S^{(k)}$ and $Z^{(k)}$ for the observed values of $\widehat{B}^{(k)}$, $\widehat{S}^{(k)}$ and $\widehat{Z}^{(k)}$ respectively, and we use
$$\gamma^{(k)}= \Lambda\begin{pmatrix} S^{(k)} \\ Z^{(k)}\end{pmatrix}$$
to denote the $\R^p$-valued subgradient of the Lasso penalty at the solution.
Assuming that active predictors are stacked before the inactive ones in the gradient of the loss functions, we have
$$
-\nabla \ell^{(k)}(\widehat\beta^{\Lambda,(k)}; D^{(k)}) = \gamma^{(k)},
$$
and 
$$\|Z^{(k)}\|_\infty \leq 1, \quad \diag(S^{(k)})B^{(k)}>0, \text{ for } k\in [K].$$
We let
$$\widehat{B} =\begin{pmatrix} (\widehat{B}^{(1)})\tran & (\widehat{B}^{(2)})\tran & \ldots & (\widehat{B}^{(K)})\tran \end{pmatrix}\tran$$
Similarly, let the symbols $\widehat{S}$, $\widehat{Z}$ denote vectors that stack the corresponding quantities.
Let $B$, $S$, and $Z$ represent their observed values.

\subsection{Communication with central machine}

We begin by describing the two exchanges of communication, between the local machines and the central machine.

In exchange 1, the central machine sends the aggregated set of predictors $E$ to every local machine. 
After receiving $E$, local machine $k$ computes on $D^{(k)}$: 
\begin{enumerate}
\item[(i)] the standard MLE in the selected GLM as 
\begin{align*}
\widehat\beta_E^{(k)}=\underset{\beta\in\R^d}{\argmin} \frac1{\sqrt n \rho_k}\sum_{i\in\calC_k} A(x_{i,E}\tran\beta)-y_ix_{i,E}\tran\beta;
\end{align*}
\item[(ii)] the observed Fisher information (obs-FI) matrix at the MLE as
\[
\widehat\calI^{(k)}_{E,E}=\frac{1}{n_k} (X_E^{(k)})^{\intercal} \widehat{W}^k X_E^{(k)},\quad \text{where } \widehat{W}^{(k)}=\diag\left(\nabla^2 A(X^{(k)}_E \widehat\beta^{(k)}_E )\right).
\]
\end{enumerate}
Here, $\nabla^2 A(X^{(k)}_E \widehat\beta^{(k)}_E )$ denote the second derivative of $A$ evaluated at each coordinate of the vector $X^{(k)}_E \widehat\beta^{(k)}_E$.
Similarly, the central machine computes $\widehat\beta_E^{(0)}$ and $\widehat\calI^{(0)}_{E,E}$, using $D^{(0)}$.

In exchange 2, each local machine passes on these two quantities, $\widehat\beta_E^{(k)}$ and $\widehat\calI^{(k)}_{E,E}$, to the central machine. 
Suppose, $E$ were a fixed subset of predictors with no dependence on data.
It is well known in this situation that aggregating the standard MLE and observed Fisher information from each local machine delivers asymptotically valid inference; e.g., see the work by \cite{lin2011aggregated}.
Because the choice of model is data-dependent, each machine must now return some extra information aside from returning just the usual estimators.
Specifically, our procedure requires a part of the subgradient vector $\gamma^{(k)}$ after solving \eqref{GL:regression}, that is,
\begin{enumerate}
\item[] local machine $k$ sends $\gamma^{(k)}_{E}$ to the central machine, alongside the standard MLE and corresponding obs-FI matrix in the selected GLM.
\end{enumerate}
Relative to standard inference in the selected GLM, this extra information, per machine, does not come at any additional communication cost.
In fact, we note that the communication cost for selective inference is only $O(d^2)$ per machine.

\subsection{Merging information from local machines}
\label{sec: algorithm}

At the outset, the central machine forms the estimator
\begin{align}
\widehat\beta_E =\widehat\calI_{E,E}^{-1} \sum_{k\in  \{0\} \cup [K]}\rho_k \widehat\calI_{E,E}^{(k)} \widehat\beta^{(k)}_E.
\label{equ: aggregated MLE}
\end{align}
where
$$
\widehat\calI_{E,E}= \sum_{k\in  \{0\} \cup [K]} \rho_k \widehat\calI^{(k)}_{E,E}.
$$
The estimator in \eqref{equ: aggregated MLE}, proposed previously by \cite{lin2011aggregated}, merges the local MLE and the obs-FI matrix computed by each machine.
Given some regularity conditions, this estimator is asymptotically equivalent to the MLE using the full data, for a fixed $E\subset [p]$. 

In our problem, the central machine takes into account the data-dependent nature of our model by computing the selective MLE and the selective obs-FI that are selection-corrected values of $\widehat\beta_E$ and $
\widehat\calI_{E,E}$, respectively.
Both these values are derived from an asymptotic selective likelihood.
Deferring details of the selective likelihood and a theoretical justification of our procedure to the next section, we outline our algorithm for selective inference here.
We define some matrices for this purpose.

For an index set $E\subset[p]$, define $J_{E}\in\R^{|E|\times p}$ as the matrix that selects the elements in $E$, i.e., $J_E[i,j]$ is 1 if the $i$-th element of $E$ is the $j$-th element in $[p]$.
Let $g^{(j)}_k=J_{E^{(k)}} \gamma^{(j)} $ and $g^{(j)}=J_E \gamma^{(j)}$
collect components of $\gamma^{(k)}$, the subgradient from machine $k$, in the sets $E^{(j)}$ and $E$, respectively.
We compute the matrices $\widehat\Gamma$, $\widehat\Psi$, $\widehat\tau$, $\widehat\Theta$, $\widehat\Pi$, and $\widehat\kappa$ as follows.
The $(j,k)$ block of $\widehat\Gamma^{-1}$ is a $d^{(j)}\times d^{(k)}$ matrix given by
$$
\left\{\widehat\Gamma^{-1}\right\}_{j,k} = \begin{cases} 
      \left(\rho_k+\dfrac{\rho_k^2}{\rho_0}\right)\widehat\calI_{ E^{(k)},E^{(k)}} & \text{ if } j=k, \\[1.2em]
      \dfrac{\rho_j\rho_k}{\rho_0}\widehat\calI_{E^{(j)},E^{(k)}} & \text{ if } j\neq k
   \end{cases}.
$$
The $(k,1)$ block of $\widehat\Gamma^{-1}\widehat\Psi$ and $\widehat\Gamma^{-1}\widehat\tau$ are given by
\begin{align*}
\left\{\widehat\Gamma^{-1} \widehat\Psi\right\}_{k} &= \dfrac{\rho_k}{\rho_0} \widehat\calI_{E^{(k)},E}; \quad
\left\{\widehat\Gamma^{-1}\widehat\tau\right\}_{k}=-\rho_k g^{(k)}_{k} - \dfrac{\rho_k}{\rho_0} \sum_{j=1}^K \rho_j g^{(j)}_{k}.
\end{align*}
Finally, define $\widehat\Theta$, $\widehat\Pi$, $\widehat\kappa$ as 
\begin{align*}
\widehat\Theta^{-1} ={\frac{1}{\rho_0}} \widehat\calI_{E,E}-\widehat\Psi\tran \widehat\Gamma^{-1}\widehat\Psi;\quad \widehat\Theta^{-1}\widehat\Pi &= \widehat\calI_{E,E}; \quad \widehat\Theta^{-1} \widehat\kappa=\widehat\Psi\tran\widehat\Gamma^{-1}\widehat\tau+{\sum_{j=1}^K\frac{\rho_j}{\rho_0} g^{(j)} }. 
\end{align*}

The central machine delivers selective inference by solving the following interior-point algorithm
\begin{equation}
\label{optimization:inference}
\widehat{V}_{\widehat\beta_E}^{\star}= \underset{V\in\R^{\bar d}}{\argmin}\; \frac{1}{2}(\sqrt{n}V - \widehat\Psi \sqrt{n}\widehat\beta_E -\widehat\tau)\tran \widehat\Gamma^{-1}(\sqrt{n}V - \widehat\Psi \sqrt{n}\widehat\beta_E -\widehat\tau) + \text{Barr}_{\mathcal{O}_S}(\sqrt{n}V),
\end{equation}
where
$$\mathcal{O}_S=\left\{\left(V^{(1)}, V^{(2)},\ldots, V^{(K)}\right): V^{(k)}\in \mathbb{R}^{d^{(k)}}, \ \sign(V^{(k)})= S^{(k)} \text{ for } k\in [K]\right\}\subset \mathbb{R}^{\bar d},$$
and $\text{Barr}_{\mathcal{O}_S}(V)$ is a barrier penalty that takes the value $\infty$ if $V= \left(V^{(1)}, V^{(2)},\ldots, V^{(K)}\right) \notin \mathcal{O}_S$.
The selective MLE and the selective obs-FI are equal to
\begin{equation}
\label{selective:MLE}
\sqrt{n}\widehat\beta^{(S)}_E = \sqrt{n} \widehat\Pi^{-1}\widehat\beta_E - \widehat\Pi^{-1}\widehat\kappa + \widehat\calI_{E,E}^{-1}\widehat\Psi\tran \widehat\Theta^{-1}(\widehat\Psi \sqrt{n}\widehat\beta_E +\widehat\tau -\sqrt{n}\widehat{V}_{\widehat\beta_E}^{\star}),
\end{equation}
\begin{equation}
\label{obs:FI}
 \widehat\calI^{(S)}_{E,E} = \widehat\calI_{E,E}\left(\widehat\Theta^{-1}+ \widehat\Psi\tran \widehat\Gamma^{-1}\widehat\Psi - \widehat\Psi\tran \widehat\Gamma^{-1} \left(\widehat\Gamma^{-1}  + \nabla^2\text{Barr}_{\mathcal{O}_S}\left(\sqrt{n}\widehat{V}_{\widehat\beta_E}^{\star}\right)\right)^{-1}\widehat\Gamma^{-1} \widehat\Psi\right)^{-1}  \widehat\calI_{E,E},
\end{equation}
respectively. 
Algorithm~\ref{algo} summarizes our procedure for selective inference under the selected GLM in \eqref{sel:model}.

\begin{algorithm}
\setstretch{1.2}
\caption{Selective inference with Distributed Data}
\label{algo}
\SetKwInOut{Input}{Input}
\SetKwInOut{Output}{Output}
\vspace{2mm}

\algorithmicrequire{ \textbf{1:} Variable Selection at Local Machines}

Machine $k$ solves \eqref{GL:regression} and sends $E^{(k)} =\text{Support}(\widehat\beta^{\Lambda,(k)})$ to the central machine.

\vspace{2mm}
\algorithmicrequire{ \textbf{2:} Modeling with selected predictors}

Central Machine aggregates $E^{(k)}$ and forms the selected GLM in \eqref{sel:model}.

\vspace{2mm}
\algorithmicrequire{ \textbf{3:} Communication with Central Machine} 

 \hspace*{0.2cm} Exchange 1: Central machine sends the set $E$ to the local machines.

\hspace*{0.2cm}  Exchange 2: Local machine $k$ sends back the following information
\begin{align*}
&\text{local estimators: } \  \widehat\beta_{E}^{(k)}, \; \widehat\calI_{E,E}^{(k)};
&\text{subgradient at $\widehat\beta^{\Lambda,(k)}$: } \ \gamma^{(k)}_{E}
\end{align*}

\algorithmicrequire{ \textbf{4:} Selective Inference at Central Machine}

\begin{enumerate}[label=(\Alph*), leftmargin=1.1cm]
\item Compute the aggregated MLE $\widehat\beta_E$ defined by Equation~\eqref{equ: aggregated MLE}.

\item Solve the $\bar d$-dimensional convex optimization in \eqref{optimization:inference}.

\item Compute $\widehat\beta^{(S)}_E$ and  $\widehat\calI^{(S)}_{E,E}$ as stated in \eqref{selective:MLE} and \eqref{obs:FI}.
\end{enumerate}

\hspace*{0.2cm} 
Let
$$\widehat\sigma_j^{(S)}=\sqrt{ \left(\widehat{I}^{(S)}_{E,E}\right)^{-1}_{j,j} }$$
\vspace{2mm}
\hspace*{0.2cm} Compute two-sided p-values at level $\alpha$ as
$$ 2\cdot \text{min}\left( \Phi\left(\frac{\sqrt{n}}{\widehat\sigma_j^{(S)}}\widehat{\beta}^{(S)}_{E,j}\right), \bar\Phi\left(\frac{\sqrt{n}}{\widehat\sigma_j^{(S)}}\widehat{\beta}^{(S)}_{E,j}\right)\right)$$
where $\Phi=1-\bar\Phi$ is the CDF of the standard normal distribution.

\vspace{2mm}
\hspace*{0.2cm} Compute two-sided $100\cdot(1-\alpha)\%$ confidence intervals as
$$\widehat{\beta}^{(S)}_{E,j} \pm z_{1-\alpha/2} \cdot \frac{\widehat\sigma_j^{(S)}}{\sqrt{n}} \text{ for } j\in [E].$$
\end{algorithm}

\section{Theory}
\label{sec:4}

 We provide an asymptotic justification for our procedure in this section.
 In the remaining section, we let our parameter of interest be $\beta_E=\beta_{E,n}$ which is the population minimizer for the regression problem
$$
\underset{b}{\text{argmin}} \frac{1}{\sqrt{n}} \mathbb{E}\left[\sum_{i\in [n]} \left\{A(x_{i, E}\tran b)-y_ix_{i,E}\tran b \right\}\right].
$$

\subsection{A randomized representation of selection}
\label{sec: randomization}
We obtain our asymptotic selective likelihood from a conditional distribution of the aggregated MLE. 
As a first step in this direction, we develop a randomized framework which yields us a representation of selection with distributed data.

We use $\nabla A(X\beta)$ to denote the vector in $\R^n$ whose $i^{\text{th}}$ coordinate is the first derivative of $A(\cdot)$ at $x_i\tran\beta$.
Similar notations are used for higher derivatives of $A(\cdot)$.
Define the randomization variables 
\begin{equation}
\Omega=\left({\omega_n^{(1)}}\tran,\ldots,{\omega_n^{(K)}}\tran\right)\tran,
\label{randomization:distributed}
\end{equation}
where
\begin{align*}
\omega_n^{(k)}&= 
\frac{1}{ n} X\tran(\nabla A(X\widehat\beta^{\Lambda,(k)}) - Y) - \frac{1}{ n_k} X^{(k),\intercal}(\nabla A(X^{(k)} \widehat\beta^{\Lambda,(k)}) - Y^{(k)}),
\end{align*}
and $X$ and $Y$ are obtained by stacking $X^{(k)}$ and $Y^{(k)}$, for $k\in [K]$. 
We can re-formulate the generalized linear regression \eqref{GL:regression} for machine $k$ as
\begin{equation}
\underset{\beta \in \R^p}{\text{minimize}} \; \frac{1}{\sqrt{n}} \sum_{i\in [n]} \left\{A(x_i\tran\beta)-y_ix_i\tran\beta \right\}+  \|\Lambda\beta\|_1 -(\sqrt{n}{\omega^{(k)}_n})\tran\beta.
\label{GL:regression:randomized}
\end{equation}

Consider the following assumptions. 

\begin{assumption}
\label{assump: missed variables}
For $k\in [K]$, let $\widetilde{E}^{(k)}= E\setminus E^{(k)}$. 
For $j\in \widetilde{E}^{(k)}$, either $\beta_{j,n}=O(n^{-1/2})$ or $X_{E^{(k)}}\calI_{E^{(k)},E^{(k)}}^{-1} \calI_{E^{(k)},j}  = X_j$.
\end{assumption}

\begin{assumption}
\label{assump: glm regularity}
The aggregated MLE \eqref{equ: aggregated MLE}, in the selected GLM, can be written as
\begin{align*}
\sqrt n(\widehat\beta_E-\beta_{E,n})=-\calI_{E,E}^{-1} \nabla\ell(\beta_E)+o_p(1),
\end{align*}
where $\calI_{E,E}$ is the Fisher information in the same model. 
\end{assumption}

Assumption \ref{assump: missed variables} states conditions on predictors that are present in the selected GLM, but are not selected by machine $k$. 
The conditions imply that our asymptotic assertions hold as long as such predictors are either weak in strength, or have a high partial correlation with a predictor in the selected set $E^{(k)}$. 
The regularity condition in Assumption \ref{assump: glm regularity} states that the  aggregated MLE admits an asymptotically linear representation.
This condition is satisfied when the standard MLE, based on the full data, admits the same linear representation, and when the aggregated MLE is asymptotically equivalent to the standard MLE.
The latter fact has been shown to hold under some regularity conditions in \cite{lin2011aggregated}.

The next Theorem \ref{thm: normality of omega} provides the asymptotic distribution for the randomization variables in \eqref{randomization:distributed}, and Proposition \ref{prop: asymp indep} finds their joint distribution with other variables involved in selection.

\begin{theorem}
\label{thm: normality of omega}
Let $U=\diag(\rho_1^{-1},\ldots,\rho_{K}^{-1})-\mathbf{1}_{K\times K}$, and let $W=\diag(\nabla^2 A(X_E\beta_{E,n} ) )$.
Suppose
\begin{align*}
\calI=\EE{\frac1n X\tran WX }
\end{align*}
is the full Fisher information matrix at $\beta_{E,n}$.
Define
\begin{align*}
\Sigma_{\Omega}=U\otimes \calI,
\end{align*}
the Kronecker product of $U$ and $\calI$.
We have
\begin{align*}
\sqrt n\, \Omega \convdis \N_{pK}\left(\mathbf{0},\Sigma_{\Omega} \right).
\end{align*}
\end{theorem}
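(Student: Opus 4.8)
The plan is to reduce $\sqrt n\,\Omega$ to a linear statistic in the i.i.d.\ observations and apply a multivariate central limit theorem, reading off the covariance directly from the sampling design. The starting point is that each block of $\Omega$ is a difference of a full-sample average and a local (machine-$k$) average of a \emph{common} per-sample gradient contribution. Writing $h_i(\beta)=x_i\left(\nabla A(x_i\tran\beta)-y_i\right)$, both terms of $\omega_n^{(k)}$ are evaluated at the same point $\widehat\beta^{\Lambda,(k)}$, so
$$
\omega_n^{(k)}=\frac1n\sum_{i\in[n]}h_i(\widehat\beta^{\Lambda,(k)})-\frac1{n_k}\sum_{i\in\calC^{(k)}}h_i(\widehat\beta^{\Lambda,(k)}).
$$
I would then Taylor-expand $h_i(\widehat\beta^{\Lambda,(k)})$ around the population target $\beta_{E,n}$ (embedded in $\R^p$ with zeros off $E$). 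At the fixed population point, the common additive mean $\EE{h_i(\beta_{E,n})}$ cancels \emph{exactly} in the difference, because the two averages place total weight $\frac1n\cdot n=1$ and $\frac1{n_k}\cdot n_k=1$ on that constant; hence the leading term is a difference of centered averages of $\bar h_i=h_i(\beta_{E,n})-\EE{h_i(\beta_{E,n})}$, and no first-order stationarity condition on $\beta_{E,n}$ is needed.

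I expect the linearization to be the main obstacle. With $\nabla h_i(\beta)=\nabla^2 A(x_i\tran\beta)\,x_ix_i\tran$, the first-order remainder has the form $\left(\tfrac1n\sum_{[n]}\nabla h_i-\tfrac1{n_k}\sum_{\calC^{(k)}}\nabla h_i\right)(\widehat\beta^{\Lambda,(k)}-\beta_{E,n})$. The Hessian difference is itself a difference of two sample averages each converging to $\calI$, hence $O_p(n^{-1/2})$ by the CLT, while Assumption~\ref{assump: glm regularity} together with the restricted root-$n$ consistency of the Lasso solution gives $\widehat\beta^{\Lambda,(k)}-\beta_{E,n}=O_p(n^{-1/2})$; multiplying by $\sqrt n$ leaves $O_p(n^{-1/2})=o_p(1)$, and the second-order Taylor term is controlled similarly under a local Lipschitz bound on $\nabla^2 A$. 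The more delicate point is that the Hessian weight $\nabla^2 A$ is evaluated at the \emph{machine-$k$} linear predictor $x_i\tran\widehat\beta^{\Lambda,(k)}$, which uses only $E^{(k)}$, whereas the target covariance is built from $W=\diag(\nabla^2 A(X_E\beta_{E,n}))$ on the full aggregated model $E$. This is exactly where Assumption~\ref{assump: missed variables} enters: predictors in $E\setminus E^{(k)}$ are either $O(n^{-1/2})$ in strength or exactly reproducible from $X_{E^{(k)}}$ through the stated Fisher-weighted identity, so the machine-$k$ and aggregated linear predictors — and hence the two limiting weights — agree up to $o_p(1)$, producing a single limiting weight $W$ common to every machine.

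With the linearization in hand, using $n_k=\rho_k n$,
$$
\sqrt n\,\omega_n^{(k)}=\frac1{\sqrt n}\sum_{i\in[n]}\bar h_i-\frac1{\rho_k\sqrt n}\sum_{i\in\calC^{(k)}}\bar h_i+o_p(1).
$$
Setting $T_m=\tfrac1{\sqrt n}\sum_{i\in\calC^{(m)}}\bar h_i$ for $m\in\{0\}\cup[K]$, these partial sums are independent across machines (disjoint i.i.d.\ samples) and satisfy $T_m\convdis\N(\mathbf 0,\rho_m V)$ with $V=\Var{h_i(\beta_{E,n})}$, by the i.i.d.\ CLT. Since $\sqrt n\,\omega_n^{(k)}=\sum_m T_m-\rho_k^{-1}T_k+o_p(1)$ is a fixed linear image of the jointly Gaussian vector $(T_0,\ldots,T_K)$, the continuous mapping theorem yields joint normality of $\sqrt n\,\Omega$ with mean $\mathbf 0$. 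Reading off the covariance with $\Cov{T_m,T_{m'}}=\delta_{mm'}\rho_m V$ and $\sum_m\rho_m=1$, a direct computation gives $\Cov{\sqrt n\,\omega_n^{(k)},\,\sqrt n\,\omega_n^{(l)}}=(\rho_k^{-1}\delta_{kl}-1)V$, which is precisely $U\otimes V$ with $U=\diag(\rho_1^{-1},\ldots,\rho_K^{-1})-\mathbf 1_{K\times K}$. It remains to identify $V=\calI$ through the exponential-family information identity: conditionally on $x_i$, the increment $h_i$ has covariance $\nabla^2 A(x_{i,E}\tran\beta_{E,n})\,x_ix_i\tran$ (via the mean–variance relation for the family in \eqref{sel:model}), whose expectation equals $\calI=\EE{\tfrac1n X\tran W X}$, completing the proof.
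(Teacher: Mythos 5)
Your overall skeleton matches the paper's proof: linearize $\sqrt n\,\Omega$ into a difference of full-sample and local-sample averages of a common i.i.d.\ term plus a remainder, then apply a CLT and read off the covariance. Your block-sum argument via the independent partial sums $T_0,\ldots,T_K$ with $\Cov{\sqrt n\,\omega_n^{(k)},\sqrt n\,\omega_n^{(l)}}=(\rho_k^{-1}\delta_{kl}-1)V$ is correct and, if anything, cleaner than the paper's pairwise-covariance computation; the observation that the common mean cancels exactly between the two averages is also sound. The gap is in your remainder analysis. You bound the first-order Taylor term by the claim $\widehat\beta^{\Lambda,(k)}-\beta_{E,n}=O_p(n^{-1/2})$ in parameter space, citing Assumption~\ref{assump: glm regularity} and ``restricted root-$n$ consistency of the Lasso.'' Neither delivers this: Assumption~\ref{assump: glm regularity} concerns the aggregated MLE $\widehat\beta_E$ on $E$, not the machine-$k$ Lasso, and the machine-$k$ Lasso is root-$n$ consistent for the machine-$k$ population target $\beta^*_{E^{(k)},n}=\calI_{E^{(k)},E^{(k)}}^{-1}\calI_{E^{(k)},E}\,\beta_{E,n}$ (Lemma~\ref{lem: property of beta*}), not for $\beta_{E,n}$. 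Under the second branch of Assumption~\ref{assump: missed variables} these two targets can differ by $O(1)$: take $E=\{1,2\}$, $E^{(k)}=\{1\}$, $X_2=X_1$, $\beta_{E,n}=(1,1)\tran$; then $X_{E^{(k)}}\calI_{E^{(k)},E^{(k)}}^{-1}\calI_{E^{(k)},2}=X_2$ holds, yet $\beta^*_{E^{(k)},n}=2$, so the embedded difference has norm $\sqrt 2$. In that regime your product bound degrades from $O_p(n^{-1/2})\cdot O_p(n^{-1/2})$ to $O_p(n^{-1/2})\cdot O(1)$, which after multiplication by $\sqrt n$ is $O_p(1)$, not $o_p(1)$. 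Your invocation of Assumption~\ref{assump: missed variables} to match the Hessian \emph{weights} does not repair this, because the failure is in the vector the Hessian difference is applied to, not in the point at which $\nabla^2 A$ is evaluated.

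The repair is exactly the paper's two-step split. Write $\widehat\beta^{\Lambda,(k)}-\beta_{E,n}=v_1+v_0$ with $v_1=\widehat\beta^{\Lambda,(k)}-\beta^*_{E^{(k)},n}$ and $v_0=\beta^*_{E^{(k)},n}-\beta_{E,n}$ (both embedded in $\R^p$). The piece $v_1$ is genuinely $O_p(n^{-1/2})$, so your product bound applies to it (this is Lemma~\ref{lem: R_2 - r_2}). For $v_0$, which may be $O(1)$, exact reproducibility kills the strong directions at the level of the design: $x_i\tran v_0=z_i\tran\beta_{\calE_k,n}$, where $z_i\tran=x_{i,E^{(k)}}\tran\calI_{E^{(k)},E^{(k)}}^{-1}\calI_{E^{(k)},\calE_k}-x_{i,\calE_k}\tran$ is i.i.d.\ and bounded in probability and $\beta_{\calE_k,n}=O(n^{-1/2})$ collects only the weak missed coefficients. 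One then applies your own device --- a difference of two sample averages of the same i.i.d.\ quantity is $O_p(n^{-1/2})$ --- to the summands $x_i w_i z_i\tran$ rather than to the Hessians $x_i w_i x_i\tran$, yielding a contribution $O_p(n^{-1/2})\cdot O(n^{-1/2})=O_p(n^{-1})$ (this is Lemma~\ref{lem: R_1 - r_1}). With this split your argument goes through; without it, the step fails precisely in the configuration that Assumption~\ref{assump: missed variables} is designed to allow.
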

The proof is detailed out in Appendix \ref{prf: normality of omega}.

\begin{proposition}
\label{prop: asymp indep}
Define the statistic
\begin{align*}
\widehat \beta_{-E}^\perp=\frac1{ n} X_{-E}\tran(\nabla A(X_E\widehat\beta_E ) - Y).
\end{align*}
Then, 
\begin{align*}
\sqrt n
\begin{pmatrix}
\widehat\beta_{E}-\beta_{E,n} \\  \widehat\beta_{-E}^\perp \\ \Omega
\end{pmatrix}
\convdis \N_{p(K+1)}\left(\mathbf{0},\begin{pmatrix} 
\calI_{E,E}^{-1} &\mathbf{0} &\mathbf{0} \\
\mathbf{0} & \calI/\calI_{E,E} & \mathbf{0}\\
\mathbf{0} &\mathbf{0} & \Sigma_{\Omega}
\end{pmatrix}\right),
\end{align*}
where $\calI/\calI_{E,E}=\calI_{-E,-E}-\calI_{-E,E}\calI_{E,E}^{-1}\calI_{E,-E} $ is the Schur complement of $\calI_{E,E}$.
\end{proposition}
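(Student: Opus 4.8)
The plan is to reduce all three blocks to asymptotically linear statistics built from the same i.i.d.\ summands, apply one joint central limit theorem, and then read off the limiting covariance by direct moment computations; the only nontrivial content is that every off-diagonal block vanishes. Abbreviate $T_1=\sqrt n(\widehat\beta_E-\beta_{E,n})$ and $T_2=\sqrt n\,\widehat\beta_{-E}^\perp$. Write $r_i = y_i - \nabla A(x_{i,E}\tran\beta_{E,n})$ for the working residual, $u_i = x_i r_i$, and
$$g = \tfrac{1}{\sqrt n}X\tran(\nabla A(X_E\beta_{E,n}) - Y) = -\tfrac{1}{\sqrt n}\sum_{i\in[n]} u_i,$$
with blocks $g_E, g_{-E}$. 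Since $\beta_{E,n}$ solves the population score equation for the $E$-block and $\EE{r_i \mid x_i}=0$ under the selected GLM \eqref{sel:model}, each $u_i$ is centered, so $g$ is a centered full-sample average.

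First I would establish the three linear representations. Assumption \ref{assump: glm regularity} gives $T_1 = -\calI_{E,E}^{-1} g_E + o_p(1)$. For the second block, expanding $b \mapsto \frac1n X_{-E}\tran(\nabla A(X_E b) - Y)$ to first order about $\beta_{E,n}$, using the Jacobian limit $\frac1n X_{-E}\tran W X_E \convp \calI_{-E,E}$ (with $W = \diag(\nabla^2 A(X_E\beta_{E,n}))$ as in Theorem \ref{thm: normality of omega}) and substituting the representation of $T_1$, yields
$$T_2 = g_{-E} - \calI_{-E,E}\calI_{E,E}^{-1} g_E + o_p(1),$$
i.e.\ the part of $g_{-E}$ orthogonal to $g_E$ in the information metric. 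For the randomization block I would invoke the representation underlying Theorem \ref{thm: normality of omega}: each $\sqrt n\,\omega_n^{(k)}$ equals the contrast $\frac{1}{\sqrt n}\sum_{i\in[n]} v_i - \rho_k^{-1}\frac{1}{\sqrt n}\sum_{i\in\calC^{(k)}} v_i + o_p(1)$ between the full-sample and the machine-$k$ averages of i.i.d.\ summands $v_i$, evaluated at the common probability limit of the $\widehat\beta^{\Lambda,(k)}$. The data-dependence of this evaluation point drops out at leading order because the two empirical Hessians share a common limit, and the common population mean cancels in the contrast, so $\sqrt n\,\omega_n^{(k)}$ is automatically centered.

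With all three blocks written as linear functionals of i.i.d.\ summands, joint asymptotic normality follows from the Cram\'er--Wold device and a Lindeberg--Feller CLT: partitioning $[n]$ by machine, the stacked vector is a sum of $K+1$ independent group contributions, each a scaled average of i.i.d.\ terms. To identify the covariance, the GLM variance relation $\Var{g} \to \calI$ gives $\Var{T_1} \to \calI_{E,E}^{-1}$ and, by the projection structure, $\Var{T_2} \to \calI_{-E,-E} - \calI_{-E,E}\calI_{E,E}^{-1}\calI_{E,-E} = \calI/\calI_{E,E}$; a one-line computation using $\Cov{g_E, g_{-E}} \to \calI_{E,-E}$ gives $\Cov{T_1, T_2} \to \mathbf{0}$, and the $\Omega$-block is $\Sigma_\Omega$ by Theorem \ref{thm: normality of omega}.

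The crux, and the step I expect to be the main obstacle, is showing the cross-block between $(T_1, T_2)$ and $\Omega$ is $\mathbf{0}$. Both $T_1$ and $T_2$ are linear in the full-sample average $g$, whereas each $\sqrt n\,\omega_n^{(k)}$ is a contrast between a full-sample average and the $\rho_k^{-1}$-scaled machine-$k$ average of i.i.d.\ terms. Writing $C = \Cov{u_i, v_i}$ (constant across $i$ by the i.i.d.\ assumption), the full--full cross term contributes $-C$ while the full--machine-$k$ cross term contributes $+\rho_k^{-1}(n_k/n)\,C = +C$, which cancel exactly for every $k$; the cancellation does not depend on the precise form of $v_i$. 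The remaining work is to justify the linearizations --- Taylor control of the data-dependent evaluation points and the $o_p(1)$ remainders --- so that only these leading i.i.d.\ averages survive. Combining block-diagonality with joint normality yields the stated limit with the three groups asymptotically independent.
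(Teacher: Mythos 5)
Your proposal is correct and follows essentially the same route as the paper's proof: linearize $\sqrt n(\widehat\beta_E-\beta_{E,n})$ via Assumption \ref{assump: glm regularity}, Taylor-expand $\sqrt n\,\widehat\beta^\perp_{-E}$ about $\beta_{E,n}$ to get the information-metric projection $g_{-E}-\calI_{-E,E}\calI_{E,E}^{-1}g_E$, represent $\sqrt n\,\omega^{(k)}$ as the full-sample versus machine-$k$ average contrast from Theorem \ref{thm: normality of omega}, and obtain block-diagonality from the exact cancellation $\Cov\bigl(\text{full average},\ \text{full average} - \rho_k^{-1}\,\text{machine-}k\text{ average}\bigr)=0$. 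The only cosmetic difference is that you state the Cram\'er--Wold/Lindeberg--Feller step explicitly, which the paper leaves implicit.
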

The proof is provided in Appendix \ref{prf: asymp indep}.

Our next result, Theorem \ref{sel:event:rep}, makes two contributions.
First, we identify a straightforward representation for (a subset of) our selection event.
We proceed by conditioning on this event to form the selective likelihood.
But we must characterize the unconditional distribution of variables in this representation before conditioning. Thus the second result is the asymptotic unconditional likelihood for a fixed set $E$ and fixed signs $S$, using the limiting law of variables in Theorem \ref{thm: normality of omega} and Proposition \ref{prop: asymp indep}.
Hereafter, we let $\Pi= \mathbb{E}[\ \widehat\Pi \ ]$, and similarly, $\kappa,\Theta,\Psi,\tau,\Gamma$ are the population versions of the matrices $\widehat\kappa,\widehat\Theta,\widehat\Psi,\widehat\tau,\widehat\Gamma$ that we defined in Section~\ref{sec:3}. 
We need the following assumption.
\begin{assumption}
\label{assump: densities}
Let the distribution of
$$\sqrt{n}\begin{pmatrix}
\widehat\beta_{E}-\beta_{E,n} \\  \widehat\beta_{-E}^\perp \\ \Omega
\end{pmatrix}$$
be absolutely continuous distributions on $\mathbb{R}^{p(K+1)}$, and let $p_n$ be the corresponding sequence of densities.
Assume that $p_n$ are absolutely equicontinuous and bounded.
\end{assumption}

The condition in Assumption \ref{assump: densities} together with the weak convergence in Proposition \ref{prop: asymp indep} implies that the densities $p_n$ converge to the corresponding limiting Gaussian density, uniformly on compact subsets of  $\mathbb{R}^{p(K+1)}$.

\begin{theorem}
\label{sel:event:rep}
The following assertions hold.
\begin{enumerate}[label=(\roman*).]
  \item $\left\{\widehat{E}^{(k)}=E^{(k)}, \  \widehat{S}^{(k)}=S^{(k)}, \ \widehat{Z}^{(k)} = Z^{(k)} \text{ for } k\in [K]\right\} = \left\{ \text{\normalfont sign}(\sqrt{n}\widehat{B})= S, \ \ \widehat{Z} = Z\right\}$.
  \item Under Assumption \ref{assump: densities}, the asymptotic distribution of
\begin{align}
\sqrt{n}\begin{pmatrix} \widehat\beta_{E} & \widehat \beta_{-E}^\perp & \widehat{B} \end{pmatrix} \Big\lvert \widehat{Z}=Z
\label{equ: betas cond Z}
\end{align}
leads us to the following asymptotic likelihood function
\begin{align*}
p_{Z}\left(\beta_{E,n}; \widehat\beta_{E}, \widehat\beta_{-E}^\perp,  \widehat{B}\right) \propto \varphi(\sqrt n\widehat\beta_E; \Pi \sqrt n\beta_{E,n} + \kappa; \Theta)\cdot \varphi(\sqrt n\widehat B; \Psi \sqrt n\widehat\beta_E + \tau;\Gamma).
\end{align*}
\end{enumerate}
\end{theorem}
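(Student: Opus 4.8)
The plan is to prove part (i) by a direct reparametrization of the Karush--Kuhn--Tucker (KKT) conditions, and part (ii) by casting selection into the randomized form \eqref{GL:regression:randomized}, linearizing, and then performing a change of variables followed by conditioning on $\widehat Z = Z$.

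For part (i), I would start from the stationarity conditions of the randomized problem \eqref{GL:regression:randomized} for machine $k$, which read $-\nabla\ell^{(k)}(\widehat\beta^{\Lambda,(k)};D^{(k)})=\gamma^{(k)}$, where $\gamma^{(k)}$ has active block $\Lambda S^{(k)}$ and inactive block $\Lambda Z^{(k)}$, together with $\diag(S^{(k)})B^{(k)}>0$ and $\|Z^{(k)}\|_\infty\le 1$. On the left-hand event the Lasso support is exactly $E^{(k)}$, so the active block $\widehat B^{(k)}$ has the prescribed dimension $d^{(k)}$ and satisfies $\diag(S^{(k)})\widehat B^{(k)}>0$; since scaling by $\sqrt n>0$ preserves signs, this is equivalent to $\sign(\sqrt n\widehat B^{(k)})=S^{(k)}$, while the inactive block is exactly the constraint $\widehat Z^{(k)}=Z^{(k)}$. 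The two descriptions are thus in bijection on the region where the support equals the observed $E^{(k)}$: fixing $\widehat Z=Z$ pins the inactive partition and its subgradients, and $\sign(\sqrt n\widehat B)=S$ pins the active signs. Stacking over $k\in[K]$ gives the claimed set equality.

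For part (ii), the first step is the randomized representation: at $\widehat\beta^{\Lambda,(k)}$ the gradient of the full-data loss minus $\sqrt n\omega_n^{(k)}$ equals $-\gamma^{(k)}$, so \eqref{GL:regression:randomized} reproduces the local solution of \eqref{GL:regression}. I would then linearize the KKT stationarity map of each machine around $\beta_{E,n}$, invoking Assumption \ref{assump: glm regularity} to replace the loss gradient by its asymptotically linear expansion, and Assumption \ref{assump: missed variables} to dispose of the contributions of variables in $\widetilde E^{(k)}=E\setminus E^{(k)}$ (these are either negligible at the $n^{-1/2}$ scale or collinear with $E^{(k)}$). This yields an affine relation expressing $\sqrt n\,\Omega$ as a fixed linear function of $(\sqrt n\widehat\beta_E,\sqrt n\widehat\beta_{-E}^\perp,\sqrt n\widehat B,\widehat Z)$, whose coefficient blocks are the population matrices $\Psi,\tau,\Gamma$ assembled in Section \ref{sec:3}. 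Distributionally, Proposition \ref{prop: asymp indep} gives that $\sqrt n(\widehat\beta_E-\beta_{E,n},\widehat\beta_{-E}^\perp,\Omega)$ converges to a centred Gaussian with block-diagonal covariance, and Assumption \ref{assump: densities} upgrades this to uniform convergence of densities on compacta, licensing use of the limiting Gaussian density as the asymptotic likelihood. I would then change variables from $\sqrt n\,\Omega$ to $(\sqrt n\widehat B,\widehat Z)$ through the affine KKT map, whose Jacobian is constant in $\beta_{E,n}$ and absorbed into the proportionality; condition on $\widehat Z=Z$; and note that $\widehat\beta_{-E}^\perp$ enters only through a factor whose limiting law (mean zero, covariance $\calI/\calI_{E,E}$) is free of $\beta_{E,n}$ and hence is absorbed into the proportionality constant. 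Completing the square in the two remaining Gaussian exponents and identifying the mean and covariance blocks with $(\Pi\sqrt n\beta_{E,n}+\kappa,\Theta)$ for the target and $(\Psi\sqrt n\widehat\beta_E+\tau,\Gamma)$ for the optimization variable produces the stated product factorization.

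I expect the main obstacle to be the bookkeeping in the linearization and the verification that the coefficient blocks coincide with $\Pi,\kappa,\Theta,\Psi,\tau,\Gamma$. This requires propagating the Kronecker structure $\Sigma_{\Omega}=U\otimes\calI$ of Theorem \ref{thm: normality of omega} through the aggregation over machines, tracking how the local information blocks $\calI_{E^{(j)},E^{(k)}}$ and the subgradient pieces $g^{(j)}_k$ combine, and checking that the asymptotic independence of Proposition \ref{prop: asymp indep} is exactly what decouples the target factor from the $\widehat\beta_{-E}^\perp$ factor after conditioning. The sign and support constraints play no role in the factorization itself—they only restrict the support of the density—so the algebra tying the linearized KKT map to the Section \ref{sec:3} matrices is the crux of the argument.
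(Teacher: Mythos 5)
Your proposal is correct and follows essentially the same route as the paper: part (i) via the KKT sign/subgradient reparametrization, and part (ii) via the linearized KKT map expressing $\sqrt{n}\Omega$ (up to $o_p(1)$) as an affine function of $(\sqrt{n}\widehat{B},\widehat{Z};\sqrt{n}\widehat\beta_E,\sqrt{n}\widehat\beta^\perp_{-E})$ (the paper's Lemma \ref{lem: rand:map}), a linear change of variables, substitution of the limiting Gaussian density under Assumption \ref{assump: densities}, conditioning on $\widehat{Z}=Z$, and completing the square to identify $\Pi,\kappa,\Theta,\Psi,\tau,\Gamma$ (the paper's Lemma \ref{lem: matrix:simplification}). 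The only caveat worth noting is that the decoupling of the $\widehat\beta^\perp_{-E}$ factor rests not just on the asymptotic independence in Proposition \ref{prop: asymp indep} but also on the union aggregation rule $E^{(k)}\subseteq E$, which zeroes out $\widehat\beta^\perp_{-E}$ in the blocks $\bbQ_1\tran\Sigma_\Omega^{-1}\bfr$ and $\bbQ_2\tran\Sigma_\Omega^{-1}\bfr$, exactly as the paper observes in Remark \ref{rmk on general aggregation}.
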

The proof is provided in Appendix \ref{prf: sel:event:rep}. We emphasize that the asymptotic likelihood function on the right-hand-side does not depend on $\widehat\beta^\perp_{-E}$.

\subsection{Asymptotic selective likelihood}
\label{sec: asymp sel likelihood}
Now, we turn to obtain the conditional distribution of the aggregated MLE.
We begin with the the distribution of \eqref{equ: betas cond Z}
which gives us the likelihood function $p_{n, Z}\left(\beta_{E,n}; \widehat\beta_{E}, \widehat\beta_{-E}^\perp,   \widehat{B}\right)$.
Because of the representation of selection in Theorem \ref{sel:event:rep}, inference conditional on this event proceeds by truncating $p_{n, Z}\left(\beta_{E,n}; \widehat\beta_{E}, \widehat\beta_{-E}^\perp,   \widehat{B}\right)$
to the event $\{\sqrt{n}\widehat{B}\in \mathcal{O}_S\}$.
Formally, we can base conditional inference on
\begin{equation}
\begin{aligned}
& \dfrac{p_{n, Z}\left(\beta_{E,n}; \widehat\beta_{E}, \widehat\beta_{-E}^\perp,   \widehat{B}\right) }{\int p_{n, Z}\left(\beta_{n,E}; b_{E}, b_{-E}^\perp,  B\right) 1_{\mathcal{O}_S}(\sqrt{n}B)db_{E}  db_{-E}^\perp  dB} 1_{\mathcal{O}_S}(\sqrt{n}\widehat{B})\\
&=\dfrac{p_{n, Z}\left(\beta_{E,n}; \widehat\beta_{E}, \widehat\beta_{-E}^\perp,   \widehat{B}\right) }{\mathbb{P}\left[\sqrt{n}\widehat{B}\in \mathcal{O}_S  \;  \Big\lvert \; \widehat{Z}=Z\right]} 1_{\mathcal{O}_S}(\sqrt{n}\widehat{B}).
\label{likelihood:n}
\end{aligned}
\end{equation}
The log-likelihood based on \eqref{likelihood:n} is equal to 
$$
\log p_{n, Z}\left(\beta_{E,n}; \widehat\beta_{E}, \widehat\beta_{-E}^\perp,   \widehat{B}\right) -\log \mathbb{P}\left[\sqrt{n}\widehat{B}\in \mathcal{O}_S  \;  \Big\lvert \; \widehat{Z}=Z\right].
$$
Note, the first term in the log-likelihood can be replaced with its asymptotic counterpart $p_{Z}\left(\beta_{E,n}; \widehat\beta_{E}, \widehat\beta_{-E}^\perp,  \widehat{B}\right)$, which was derived in Theorem \ref{sel:event:rep}. 

The main result in this section is to approximate the second term by a large-deviation limit for the log-probability, under some moment and regularity conditions that are usually made to ensure existence of the limit.
Consider a real-valued sequence $a_n$ that goes to infinity as $n\to \infty$, and $a_n = o(n^{1/2})$.
Assume that $\sqrt{n}\beta_{E,n} = a_n\beta_E\in {\mathbb{R}^{|E|}}$, where $\beta_E$ does not depend on $n$.

\begin{assumption}[Moment condition and convergence of remainder]
\label{as:1}
Based on our proof for Proposition \ref{prop: asymp indep}, we have
\begin{align}
\sqrt n
\begin{pmatrix}
\widehat\beta_{E}-\beta_{E,n} \\  \widehat\beta_{-E}^\perp \\ \Omega
\end{pmatrix} =
\sqrt n \bar E_n + R_n,
\label{linear:rep}
\end{align}
where $\bar E_n= \frac{1}{n}\sum_{i=1}^n e_{i,n}$  is the average of $n$ i.i.d. observations, and $R_n=o_p(1)$.
Assume that 
$$\mathbb{E}\left[\exp(\lambda\|e_{1,n}\|_2)\right]<\infty$$
for some $\lambda\in \mathbb{R}^{+}$,
and that
\begin{equation*} 
\displaystyle\lim_{n\to \infty}  \frac{1}{a_n^{2}}\log \mathbb{P}\left[\frac{1}{a_n}\| R_n \|_2 > \epsilon \right] =- \infty 
 \end{equation*}
for any $\epsilon >0$, where $e_{i,n}$ and $R_n$ are based on the linear representation in \eqref{linear:rep}.
\end{assumption}

 \begin{assumption}
 \label{as:2}
 Consider the asymptotically linear representation in \eqref{linear:rep}.
For a fixed convex set $\mathcal{R}_0\subseteq \mathbb{R}^{p(K+1)}$, and for $O=O_p(1)$, we impose the condition that
\begin{equation*}
\begin{aligned} 
& \displaystyle\lim_{n\to \infty} \dfrac{1}{a_n^2} \left\{ \log \mathbb{P}\left[ \frac{1}{a_n}\begin{pmatrix}\sqrt{n}\widehat{\beta}_E \\ \sqrt{n}\widehat{\beta}^\perp_{-E} \\ \sqrt{n}\Omega \end{pmatrix} \in \mathcal{R}_0\right] - \log \mathbb{P}\left[ \frac{1}{a_n} \begin{pmatrix}\sqrt{n}\widehat{\beta}_E \\ \sqrt{n}\widehat{\beta}^\perp_{-E} \\ \sqrt{n}\Omega \end{pmatrix} + \frac{1}{a_n}O\in \mathcal{R}_0 \right]\right\}=0.
\end{aligned}
\end{equation*}
\end{assumption}

\begin{theorem}
Suppose that the conditions in Assumption \ref{as:1} and Assumption \ref{as:2} are met.
Define 
\begin{equation*}
\begin{aligned}
& L_n = \inf_{b, B} \Bigg\{\frac{1}{2}\left(b- \Pi \beta_E - \frac{1}{a_n}\kappa\right)\tran \Theta^{-1}\left(b- \Pi \beta_E - \frac{1}{a_n}\kappa\right)   \\
&\;\;\;\;\;\;\;\;\;\;\;\;\;\;\;+  \frac{1}{2}\left(B- \Psi b -\frac{1}{a_n}\tau\right)\tran \Gamma^{-1}\left(B- \Psi b -\frac{1}{a_n}\tau\right) + \frac{1}{a_n^2}\text{\normalfont Barr}_{\mathcal{O}_S}(a_n B)\Bigg\}.
\label{lim:convx}
\end{aligned}
\end{equation*}
Then, we have
\begin{align*}
& \displaystyle\lim_{n\to \infty} \dfrac{1}{a_n^2} \log \mathbb{P}\left[ \sqrt{n} \widehat{B}\in \mathcal{O}_S \;  \Big\lvert \; \widehat{Z}= Z\right] + L_n = C,
\end{align*}
where $C$ is a constant that does not depend on $\beta_E$.
\label{thm:consistency}
\end{theorem}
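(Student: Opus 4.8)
The plan is to read the statement as a moderate-deviation computation at speed $a_n^2$. The essential observation is that the sign region $\mathcal{O}_S$ is invariant under multiplication by the positive scalar $a_n$, so that $\{\sqrt n\widehat B\in\mathcal{O}_S\}=\{\tfrac1{a_n}\sqrt n\widehat B\in\mathcal{O}_S\}$; since $\sqrt n\beta_{E,n}=a_n\beta_E$ drifts to infinity, landing in $\mathcal{O}_S$ is precisely a moderate-deviation event for the rescaled statistics $\tfrac1{a_n}(\sqrt n\widehat\beta_E,\sqrt n\widehat B)$. Using the factored conditional law of Theorem \ref{sel:event:rep}, the conditional probability $\mathbb{P}[\sqrt n\widehat B\in\mathcal{O}_S\mid\widehat Z=Z]$ equals, up to corrections to be controlled, an integral of $\varphi(\cdot;\Pi\sqrt n\beta_{E,n}+\kappa;\Theta)\,\varphi(\cdot;\Psi\,\cdot+\tau;\Gamma)$ over $\mathcal{O}_S$, with the spectator coordinate $\widehat\beta_{-E}^\perp$ contributing only a $\beta_E$-free factor. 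Substituting $\sqrt n\widehat\beta_E=a_n b$ and $\sqrt n\widehat B=a_n B$ turns each Gaussian exponent into $a_n^2$ times the quadratic forms appearing in \eqref{lim:convx} and the hard indicator into the scaled barrier; a Laplace evaluation should then deliver $\tfrac1{a_n^2}\log\mathbb{P}[\cdot]=-L_n+C+o(1)$, which is the claim.

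\textbf{Reduction to an i.i.d. average.} First I would pass from the true statistics to their leading i.i.d. part through the representation \eqref{linear:rep}, writing the rescaled vector as $\tfrac{1}{a_n}\sqrt n\bar E_n+\tfrac1{a_n}R_n$. The remainder is discarded at the required speed by the tail hypothesis $\lim_n a_n^{-2}\log\mathbb{P}[a_n^{-1}\|R_n\|_2>\epsilon]=-\infty$ in Assumption \ref{as:1}, which forbids $\tfrac1{a_n}R_n$ from altering a speed-$a_n^2$ log-probability. Next I would replace the data-dependent pieces entering the surrogate --- the plug-in centering $\Psi\sqrt n\widehat\beta_E$ in the conditional mean of $\sqrt n\widehat B$, which differs from $\Psi\sqrt n\beta_{E,n}$ by $\Psi\sqrt n(\widehat\beta_E-\beta_{E,n})=O_p(1)$, and the estimated matrices $\widehat\Psi,\widehat\Gamma,\widehat\tau,\widehat\Theta,\widehat\kappa,\widehat\Pi$ in place of their population limits --- by their non-random counterparts. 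Each such substitution perturbs the rescaled statistic by a term of order $\tfrac1{a_n}O$ with $O=O_p(1)$, and Assumption \ref{as:2} is exactly what certifies that these perturbations leave the speed-$a_n^2$ log-probability unchanged.

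\textbf{Moderate deviation principle and Laplace evaluation.} With the problem reduced to $\tfrac1{a_n}\sqrt n\bar E_n$, the exponential-moment bound $\mathbb{E}[\exp(\lambda\|e_{1,n}\|_2)]<\infty$ in Assumption \ref{as:1} lets me invoke a triangular-array moderate deviation principle: the sequence $\tfrac1{a_n}\sqrt n\bar E_n=\tfrac1{a_n\sqrt n}\sum_{i=1}^n e_{i,n}$ satisfies a large deviation principle at speed $a_n^2$ with the Gaussian rate function $x\mapsto\tfrac12 x\tran\Sigma^{-1}x$, where $\Sigma$ is the limiting covariance identified in Proposition \ref{prop: asymp indep}. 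I would then push this principle through the affine map $(b,\Omega)\mapsto B=\Psi b+\tau+(\text{linear in }\Omega)$ that defines $\widehat B$ via the contraction principle, and incorporate the barrier $a_n^{-2}\mathrm{Barr}_{\mathcal{O}_S}(a_n B)$ as a continuous, bounded-below perturbation through Varadhan's lemma. The resulting variational problem is precisely the joint infimum over $(b,B)$ of the objective in \eqref{lim:convx}, so its value is $-L_n$; the Jacobian $\log a_n$ terms are $o(a_n^2)$, while the Gaussian normalization, Laplace prefactors, and the marginal of $\widehat\beta_{-E}^\perp$ contribute only terms that are $o(1)$ or independent of $\beta_E$, the latter collected into $C$. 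Rearranging yields the stated limit.

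\textbf{Main obstacle.} The crux is not the Laplace step but the justification in the first two paragraphs that the weak, compact-set convergence underlying Theorem \ref{sel:event:rep} can be upgraded to a statement about genuinely growing, moderate-deviation arguments: I must show that the non-Gaussian corrections to the true conditional law, together with the remainder $R_n$ and all $O_p(1)$ plug-in discrepancies, contribute only $o(a_n^2)$ to the log-probability. This is what Assumptions \ref{as:1} and \ref{as:2} are engineered to supply, and the delicate bookkeeping is to retain the finite-$n$ centering corrections $\tfrac1{a_n}\kappa$, $\tfrac1{a_n}\tau$ and the scaled barrier inside $L_n$ --- rather than passing to their $n\to\infty$ limits --- so that the difference $\tfrac1{a_n^2}\log\mathbb{P}[\cdot]+L_n$, and not merely each piece separately, converges to a constant independent of $\beta_E$.
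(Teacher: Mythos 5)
Your proposal follows essentially the same route as the paper's proof of Theorem \ref{thm:consistency}: reduce to the i.i.d. linear representation \eqref{linear:rep}, using the tail condition in Assumption \ref{as:1} to discard $R_n$ and Assumption \ref{as:2} to absorb the $O_p(1)$ plug-in discrepancies, establish a speed-$a_n^2$ moderate-deviation principle with Gaussian rate, transfer it through the linear KKT map by the contraction principle, and evaluate the resulting constrained infimum on the sign cone while collecting the $\beta_E$-free conditioning and marginal terms into $C$ and retaining the finite-$n$ centerings $\frac{1}{a_n}\kappa$, $\frac{1}{a_n}\tau$ inside $L_n$. The only cosmetic difference is that the paper evaluates the LDP directly on the cone and identifies $\lim_n L_n$ with the constrained infimum via pointwise convergence of convex objectives with a unique minimum, rather than invoking Varadhan's lemma for the (discontinuous, extended-valued) barrier, which is the more accurate way to phrase that final step.
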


As a consequence of Theorem \ref{thm:consistency}, we can substitute the log-probability in the denominator of \eqref{likelihood:n} by 
\begin{equation}
\begin{aligned}
&-\inf_{b, B} \Bigg\{\frac{1}{2}\left(a_n b- a_n\Pi \beta_{E} - \kappa\right)\tran \Theta^{-1}\left(a_n b-a_n \Pi \beta_{E} - \kappa\right) \\
&\;\;\;\;\;\;\;\;\;\;\;\;+  \frac{1}{2}\left(a_n B- a_n\Psi  b -\tau\right)\tran \Gamma^{-1}\left(a_nB-a_n \Psi  b -\tau\right) + \text{Barr}_{\mathcal{O}}(a_n B)\Bigg\},
\label{opt:a}
\end{aligned}
\end{equation}
after we ignore the additive constant in the limit.
Finally, the optimization in the above display can be written as 
\begin{equation*}
\begin{aligned}
p_n(\beta_{E,n}) &=-\inf_{v, V} \Bigg\{\frac{1}{2}\left(\sqrt{n} v- \sqrt{n}\Pi \beta_{E,n} - \kappa\right)\tran \Theta^{-1}\left(\sqrt{n} v- \sqrt{n}\Pi \beta_{E,n} - \kappa\right) \\
&\;\;\;\;\;\;\;\;\;\;\;+  \frac{1}{2}\left(\sqrt{n} V- \sqrt{n}\Psi  v -\tau\right)\tran \Gamma^{-1}\left(\sqrt{n}V- \sqrt{n}\Psi v -\tau\right) + \text{Barr}_{\mathcal{O}}(\sqrt{n} V)\Bigg\},
\end{aligned}
\end{equation*}
by reparameterizing $a_n b$ and $a_n B$ in \eqref{opt:a} as $\sqrt{n}v$ and $ \sqrt{n}V$, respectively.
This yields us an asymptotic selective log-likelihood
\begin{equation}
\begin{aligned}
& \log\varphi(\sqrt{n}\widehat\beta_E;\Pi\sqrt{n}\beta_{E, n}+\kappa,\Theta) - p_n(\beta_{E,n}).
\label{asym:lik}
\end{aligned}
\end{equation}

The score and curvature of the asymptotic selective likelihood in \eqref{asym:lik} give us the selective MLE and the selective obs-FI matrix in Section \ref{sec:3}.
Note, the derivation of the two estimators follows the steps in \cite{panigrahi2022approximate} for the standard Gaussian regression problem.
We provide this result below in the interest of completeness.
\begin{theorem}
Consider solving the optimization problem
\begin{equation}
\label{optimization:inference:popn}
\widehat{V}_{\widehat\beta_E}^{\star}= \underset{V\in\R^{\bar d}}{\argmin}\; \frac{1}{2}(\sqrt{n}V - \Psi \sqrt{n}\widehat\beta_E -\tau)\tran \Gamma^{-1}(\sqrt{n}V - \Psi \sqrt{n}\widehat\beta_E -\tau) + \text{\normalfont Barr}_{\mathcal{O}_S}(\sqrt{n}V).
\end{equation}
The maximizer of the approximate selective likelihood and the observed Fisher information matrix are equal to
$$
 \Pi^{-1}\widehat\beta_E - \frac{1}{\sqrt{n}}\Pi^{-1}\kappa + \widehat\calI_{E,E}^{-1}\Psi\tran \Theta^{-1}\left(\Psi \widehat\beta_E +\frac{1}{\sqrt{n}}\tau -\widehat{V}_{\widehat\beta_E}^{\star}\right),
$$
$$
 \widehat\calI_{E,E}\left(\Theta^{-1}+ \Psi\tran \Gamma^{-1}\Psi - \Psi\tran \Gamma^{-1} \left(\Gamma^{-1}  + \nabla^2\text{\normalfont Barr}_{\mathcal{O}_S}\left(\sqrt{n}\widehat{V}_{\widehat\beta_E}^{\star}\right)\right)^{-1}\Gamma^{-1} \Psi\right)^{-1}  \widehat\calI_{E,E},
$$
respectively
\label{est:eqns}.
\end{theorem}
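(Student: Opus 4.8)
The plan is to treat the expression in \eqref{asym:lik} as a log-likelihood in $\beta_{E,n}$ and to locate its maximizer and curvature by direct differentiation. It is cleanest to reparametrize with $\mu=\sqrt n\beta_{E,n}$ and $\widehat\mu=\sqrt n\widehat\beta_E$, so that, up to an additive constant, the objective becomes
$$\ell(\mu)=-\tfrac12(\widehat\mu-\Pi\mu-\kappa)\tran\Theta^{-1}(\widehat\mu-\Pi\mu-\kappa)+\inf_{v,V}H(v,V;\mu),$$
where $H$ is the jointly convex map
$$H(v,V;\mu)=\tfrac12(v-\Pi\mu-\kappa)\tran\Theta^{-1}(v-\Pi\mu-\kappa)+\tfrac12(V-\Psi v-\tau)\tran\Gamma^{-1}(V-\Psi v-\tau)+\text{Barr}_{\mathcal{O}_S}(V).$$
At the outset I would record the matrix identities that follow from the definitions in Section~\ref{sec:3}, namely $\Theta^{-1}\Pi=\calI_{E,E}$, hence $\Pi\tran\Theta^{-1}=\calI_{E,E}$ and $\Pi^{-1}\Theta=\calI_{E,E}^{-1}$; these are what collapse the final expressions into the stated form.

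For the selective MLE I would differentiate $\ell$ and invoke the envelope theorem for the $\inf$ term. Only the first block of $H$ carries $\mu$, so $\nabla_\mu\ell=\Pi\tran\Theta^{-1}\big(\widehat\mu-v^\star(\mu)\big)$, where $v^\star(\mu)$ is the first component of the inner minimizer. Since $\Pi\tran\Theta^{-1}=\calI_{E,E}$ is invertible, the stationarity condition $\nabla_\mu\ell=0$ forces $v^\star=\widehat\mu$; this is exactly why \eqref{optimization:inference:popn} fixes $v=\widehat\beta_E$ and optimizes over $V$ alone, with solution $\widehat V^\star_{\widehat\beta_E}$. Substituting $v^\star=\widehat\mu$ into the inner first-order condition in $v$, namely $\Theta^{-1}(v^\star-\Pi\mu-\kappa)=\Psi\tran\Gamma^{-1}(V^\star-\Psi v^\star-\tau)$, leaves a single linear equation in $\mu$; solving it and simplifying with the identities above yields the selective MLE displayed in \eqref{selective:MLE}.

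For the observed Fisher information I would compute the curvature of $\ell$. From $\nabla_\mu\ell=\Pi\tran\Theta^{-1}(\widehat\mu-v^\star(\mu))$ one gets $-\nabla_\mu^2\ell=\Pi\tran\Theta^{-1}\,\partial v^\star/\partial\mu$, so the task reduces to the sensitivity $\partial v^\star/\partial\mu$. I would obtain this by differentiating the inner KKT system through the implicit function theorem: the stationarity condition in $v$ and that in $V$ (the latter carrying $\nabla^2\text{Barr}_{\mathcal{O}_S}(V^\star)$). Eliminating $\partial V^\star/\partial\mu$ using the $V$-equation introduces the factor $(\Gamma^{-1}+\nabla^2\text{Barr}_{\mathcal{O}_S}(V^\star))^{-1}$ and produces the Schur-type matrix $M=\Theta^{-1}+\Psi\tran\Gamma^{-1}\Psi-\Psi\tran\Gamma^{-1}(\Gamma^{-1}+\nabla^2\text{Barr}_{\mathcal{O}_S}(V^\star))^{-1}\Gamma^{-1}\Psi$. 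Then $\partial v^\star/\partial\mu=M^{-1}\Theta^{-1}\Pi$, so $-\nabla_\mu^2\ell=\Pi\tran\Theta^{-1}M^{-1}\Theta^{-1}\Pi=\calI_{E,E}M^{-1}\calI_{E,E}$, which is precisely \eqref{obs:FI}.

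The routine parts are the quadratic differentiations and the algebra collapsing $\Pi,\Theta,\calI_{E,E}$. The main obstacle is the non-smooth partial-minimization structure: I must justify that the inner minimizer $(v^\star,V^\star)$ is unique and continuously differentiable in $\mu$, so that both the envelope theorem and the implicit function theorem apply. This rests on strict convexity of $H$ in $(v,V)$, which holds once $\Theta,\Gamma$ are positive definite and the barrier is smooth and convex with positive definite Hessian on the interior of $\mathcal{O}_S$; the same conditions guarantee invertibility of $\Gamma^{-1}+\nabla^2\text{Barr}_{\mathcal{O}_S}(V^\star)$ and of $M$. I would also track the $\sqrt n$ reparametrization carefully so that the $\tfrac1{\sqrt n}\kappa$ and $\tfrac1{\sqrt n}\tau$ terms land correctly, and note that the entire computation parallels the Gaussian derivation of \cite{panigrahi2022approximate}.
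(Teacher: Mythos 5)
Your proposal is correct, and once unpacked it reproduces the paper's derivation equation-for-equation; the only real difference is organizational. The paper works in the dual: it writes the approximate selective log-likelihood in exponential-family form $(\sqrt n\widehat\beta_E)\tran\alpha - Q_n^*(\alpha)$, with natural parameter $\alpha=\Theta^{-1}(\sqrt n\Pi\beta_{E,n}+\kappa)$ and $Q_n^*$ the convex conjugate of the partially minimized objective $Q_n$, then obtains the MLE equation from $(\nabla Q_n^*)^{-1}=\nabla Q_n$ and the curvature from $\nabla^2 Q_n^*=(\nabla^2 Q_n)^{-1}$, computing $\nabla Q_n$ and $\nabla^2 Q_n$ by the envelope theorem and the Schur-complement formula for partial minimization. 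You work in the primal: the envelope theorem gives $\nabla_\mu\ell=\Pi\tran\Theta^{-1}(\widehat\mu-v^\star(\mu))$, invertibility of $\Pi\tran\Theta^{-1}=\calI_{E,E}$ forces $v^\star=\widehat\mu$ at the maximizer (the primal restatement of $\nabla Q_n^*(\alpha)=\sqrt n\widehat\beta_E$, and the reason the inner problem collapses to \eqref{optimization:inference:popn}), the inner stationarity condition in $v$ then pins down $\mu$, and the implicit function theorem on the inner KKT system yields $\partial v^\star/\partial\mu=M^{-1}\Theta^{-1}\Pi$, which is precisely how the Hessian-of-partial-minimization formula invoked by the paper is proved. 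Your route is more self-contained (no Legendre-transform identities, at the price of the regularity checks you correctly flag), while the paper's framing makes the exponential-family analogy explicit, which is what motivates the names ``selective MLE'' and ``observed Fisher information.'' One further remark: your simplification gives $\Pi^{-1}\Theta\Psi\tran\Gamma^{-1}=\calI_{E,E}^{-1}\Psi\tran\Gamma^{-1}$, which is the dimensionally consistent form, since $\Psi\tran$ has $\bar d$ columns and cannot be right-multiplied by the $d\times d$ matrix $\Theta^{-1}$; the $\Theta^{-1}$ appearing in the theorem's displayed MLE and in \eqref{selective:MLE} appears to be a typo for $\Gamma^{-1}$, and your derivation lands on the correct version.
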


In practice, we use the empirical estimates of the matrices $\Pi,\kappa,\Theta,\Psi,\tau,\Gamma$ which yields us the expressions for the selective MLE and the selective obs-FI in \eqref{selective:MLE} and \eqref{obs:FI}, respectively.

\section{Addressing p-value lotteries}
\label{sec:5}

In sparse regression, construction of p-values is more feasible after the number of variables is reduced to a manageable size.
Sample-splitting, e.g., \cite{wasserman2009high}, is often a simple way to first select variables on a subsample of the full data, and then report the corresponding p-values, using classical least squares estimation on the remaining samples.
Variables that do not appear in the selected set are assigned a p-value equal to $1$.
More recently, a more powerful alternative to sample-splitting has been introduced in \cite{fithian2014optimal} via conditioning, which is called carving.
However, results produced by a single round of sample splitting or carving are overly sensitive, which comes from how one splits the data into two subsamples, leading to widely different  p-values. 
This problem has been reported in literature as the p-value lottery problem.
See, for example, the paper by \cite{meinshausen2009p}.

To address the p-value lottery problem, \citet{dezeure2015high} aggregate p-values from  repeated splitting, and more recently, \citet{schultheiss2021multicarving} propose to aggregate p-values after repeated carving on random splits of data. 
We refer to the former procedure as multi-splitting, and the latter procedure as multi-carving.
With increasing numbers of replicates, the results are expected to be less sensitive to the randomness introduced by the splits. 
Suppose one conducts multi-splitting or multi-carving $B$ times, and obtains the p-values $p_j^{(b)}$, for $b\in[B]$, and $j \in [p]$. 
This is followed by aggregating the $B$ p-values through their empirical quantiles
\begin{align*}
Q_j(\gamma):=q_{\gamma}\left(\left\{\frac{1}{\gamma}p_j^{(b)} ,\,b\in[B] \right\} \right)\wedge 1 ,
\end{align*}
where $q_\gamma$ denotes the $\gamma$-th empirical quantile. 
One can also minimize over $\gamma$ and use
\begin{align}
P_j:=\left[(1-\log(\gamma_{\min})) \inf_{\gamma\in(\gamma_{\min},1)} Q_j(\gamma)\right] \wedge 1.
\label{equ: gamma min}
\end{align}
The aggregation scheme produces valid p-values as long as p-values in each replicate are valid.

Below, we show that our procedure can be easily adapted to address the p-value lottery problem without recourse to MCMC sampling.
We proceed as multi-splitting and multi-carving, i.e., we use a subsample of size $n_1$ for variable selection.
For inference, we re-use data from selection by conditioning on the event of selection. 
We repeat this procedure $B$ times and aggregate the p-values as above. 
Moreover, in each replicate, we can  draw $K$ random subsets of size $n_1$ with replacement. 
A base model is selected using each subset, and the $K$ base models are aggregated as done in~\eqref{aggregate}. 
To computed p-values for the variables in our selected GLM, we can apply the same procedure as described in Section~\ref{sec:3}, with the matrices
\begin{align*}
&\{\widehat\Gamma^{-1}\}_{j,k}=\delta_{j,k}\frac{\rho}{1-\rho}\widehat\calI_{E^{(j)},E^{(j)}};\quad\{\widehat\Gamma^{-1}\widehat\tau \}_k=-\frac{\rho}{1-\rho} g_k^{(k)};\\
&\widehat\Theta^{-1}=\left(1+\frac{K\rho}{1-\rho}\right) \widehat\calI_{E,E}-\widehat\Psi\tran\widehat\Gamma^{-1}\widehat\Psi.
\end{align*}
Other matrices take the form that we provided in Section~\ref{sec:3}, with $\rho_k=\rho=\frac{n_1}{n}$ and $\rho_0=1-\rho_1$. 
Specially now, the distribution of the randomization variable in \eqref{randomization:distributed} is slightly different, which results in different expressions for $\widehat\Gamma$, $\widehat\tau$, and $\widehat\Theta$.
A derivation of the asymptotic distribution for randomization is deferred to Lemma \ref{prop: w r} in Appendix~\ref{sec: with replacement}.

\section{Experiments}
\label{sec: experiments}

This section provides numerical justifications for our proposed procedure.

\subsection{Experiments with distributed datasets}

We simulate our data according to two main models.
For $i\in [n]$, we draw $x_i\sim\N_p(0,\Sigma)$, where $\Sigma_{ij}=\rho^{|i-j|}$ with $\rho=0.9$, $p=100$. 
In our first model, we draw a real-valued response from a linear model as  $y_i\sim\N(x_i\tran\beta,1)$.
In our second model, we draw a binary response from a logistic-linear model as 
$y_i\sim \text{Bernoulli}(1/(1+e^{-x_i\tran\beta}))$.
Observation $i$ is independent of all the other observations in our dataset.
There are 5 non-zero coefficients in our model; each non-zero $\beta_j$ is equal to $\pm\sqrt{2c\log p}$, where the sign is randomly determined in both models.
In the remaining section, we call parameter $c$ as the ``signal strength".

In every round of simulation, we partition the full dataset into $K+1$ disjoint subsets
$$\left\{D^{(k)} \text{ for } k\in \{0\}\cup [K]\right\}.$$ 
Subsets 1 through $K$ are used for variable selection, and subset 0 is used only at the time of selective inference. 
Equivalently, in our setup, $D^{(k)}$ is allocated to a local machine, for $k\in [K]$, and $D^{(0)}$ is accessed only by the central machine for selective inference.
We use an extra dataset to tune the regularization parameter $\lambda$ for model selection by sweeping over a grid of values $\left\{t \sqrt{2\log p} \cdot \text{sd}(Y)\mid t=0.5,1,\ldots,5\right\}$; we do not use this dataset further, either to select predictors through a generalized linear regression, or to infer for the selected predictors.
In the first model, we run the usual linear regression with the quadratic loss function, and infer in the selected linear model.
In the second model, we run a logistic regression on each local machine, and base inference on the selected logistic-linear model.

We design three different scenarios to investigate the performance of our procedure over $500$ rounds of simulations.
\begin{enumerate}[label=(\Roman*).]
\item In Scenario 1, we vary the number of distributed datasets $K\in\{2,4,6,8\}$. Each local machine uses $n_k=[8000/K]$ samples, and the central machine has access to $1000$ samples. We fix the signal strength at $c=0.1$.

\item In Scenario 2, we consider $3$ distributed datasets. 
The central machine has $2000$ samples, and each of the two local machines has $4000$ samples.
We investigate four signal regimes by varying $c\in \{0.3,0.5,0.7,0.9\}$; we number these regimes as $1-4$.

\item  In Scenario 3, we vary the number of samples that are reserved only for selective inference at the central machine; this number takes a value in the set \sloppy{$\{250,500,1000,2000\}$}. Each of the three local machines has $2000$ samples, and the signal strength is fixed at $c=0.5$ in this setting.
\end{enumerate}

Figures \ref{fig: linear} and \ref{fig: logistic} summarize the results of our simulations, in a linear and logistic regression problem.
We begin by evaluating the coverage of $90\%$ confidence intervals that are centered around the selective MLE, and with variance estimated by the entries of the selective obs-FI matrix. 
We call our method ``Dist-SI" in the plots. 
As a baseline for comparison, we consider ``Splitting" which means that we simply use the samples at the central machine to infer in the GLM selected by the local machines, using the standard Wald confidence intervals.
For the selected GLM in \eqref{sel:model}, we note that our parameter of interest is 
$$\beta^{E}=\underset{b}{\argmin} \; \EE{A(x_{i,E}\tran b)-y_ix_{i,E}\tran b},$$ 
the minimizer of the generalized linear regression problem with the selected predictors. 

A comparison of power is provided next.
We compute the lengths of confidence intervals, which indicate the power associated with selective inference for $\beta^E$.
We follow this up by calculating the power of correctly detecting a true signal in $\beta$. 
A signal is detected if it is selected in our model, and the selective confidence intervals do not cover $0$.

\textbf{Observations}. Across all scenarios, the confidence intervals produced by ``Dist-SI" (approximately) attain the desired coverage probability. 
``Splitting" produces valid confidence intervals, but discards samples used by the local machines. 
The advantages of re-using data from the local machines are quite evident in the plots for the lengths of the confidence intervals, and in the plots for the fraction of times that they detect a true signal.
As expected, ``Dist-SI" yields tighter confidence intervals and achieves higher power than the baseline procedure based on ``Splitting" in all three scenarios.
More specifically, we observe that interval lengths for both methods increase with $K$ in Scenario 1.
This is because the final model, which is the union of the $K$ models selected by local machines, is likely to be larger for larger $K$. In this case, the variance of $\widehat\beta_j$ tends to be larger.
Consistent with standard expectations, the power, for both methods, has an increasing trend as signal strength $c$ increases in Scenario 2.
In Scenario 3, we see that both methods produce longer intervals when $n_0$ decreases, and as expected, the gap between the baseline and ``Dist-SI" is more pronounced with fewer samples at the central machine.

\begin{figure}
\centering
\begin{subfigure}{.8\textwidth}
\centering
\includegraphics[width=\textwidth]{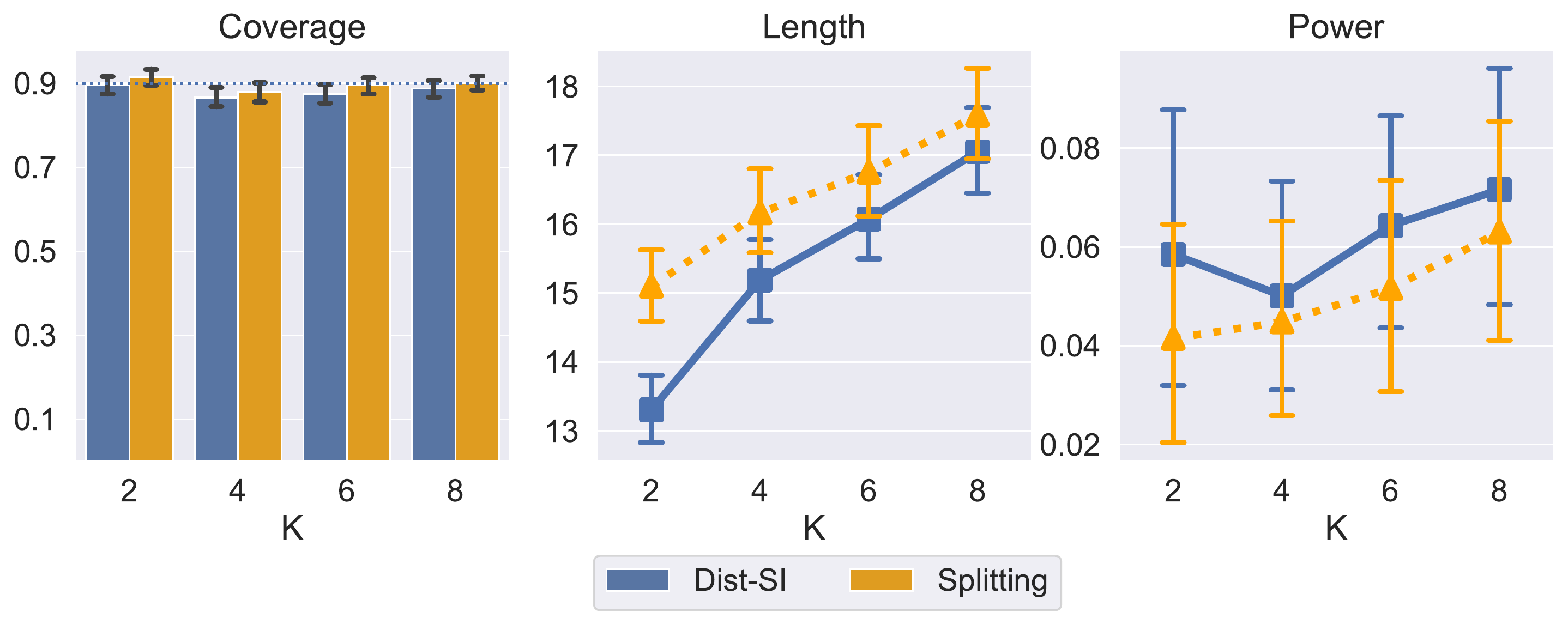}
\caption{Varying $K$. Each local machine has $[8000/K]$ data points for each $K$. 
}
\label{fig: linear vary K}
\end{subfigure}
\begin{subfigure}{.8\textwidth}
\centering
\includegraphics[width=\textwidth]{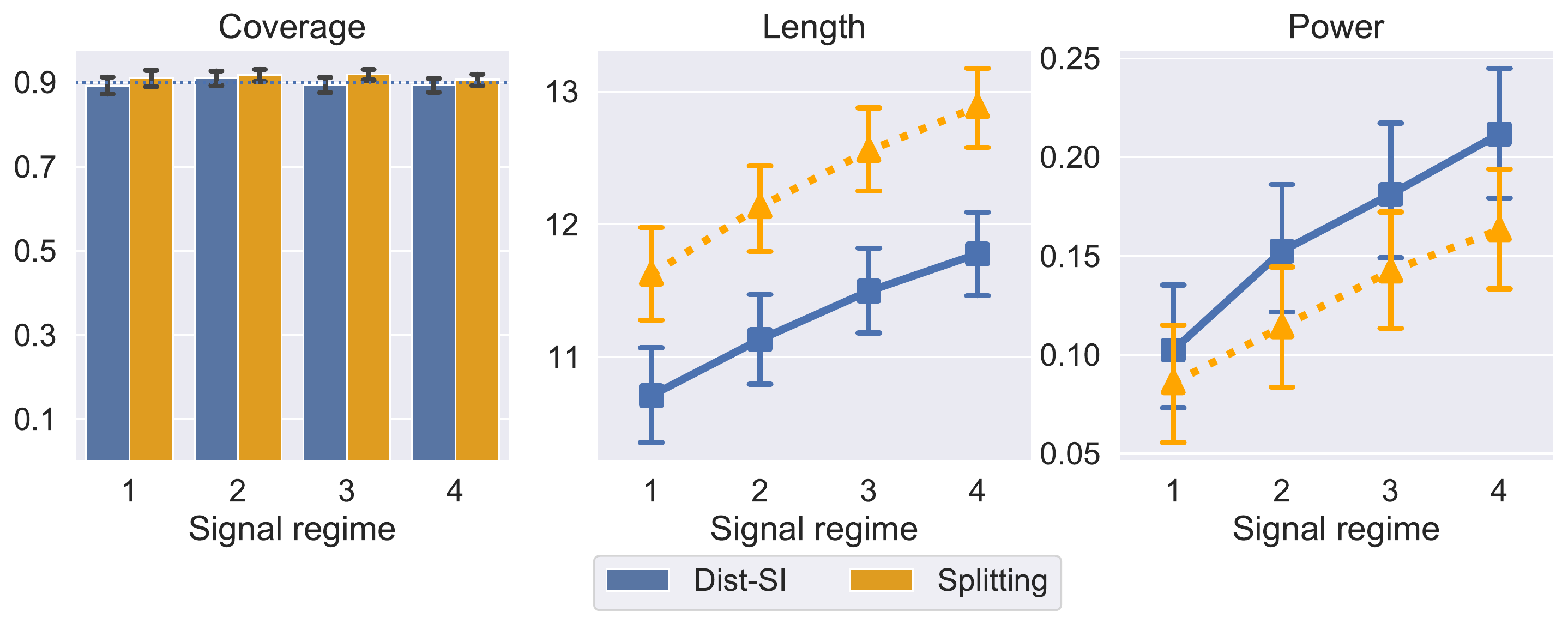}
\caption{Varying signal strength. The nonzero $\beta_j$ equals $\pm\sqrt{2c\log p}$ with random signs for $c=0.3,0.5,0.7,0.9$ in the four signal regimes.}
\label{fig: linear vary signal}
\end{subfigure}
\begin{subfigure}{.8\textwidth}
\centering
\includegraphics[width=\textwidth]{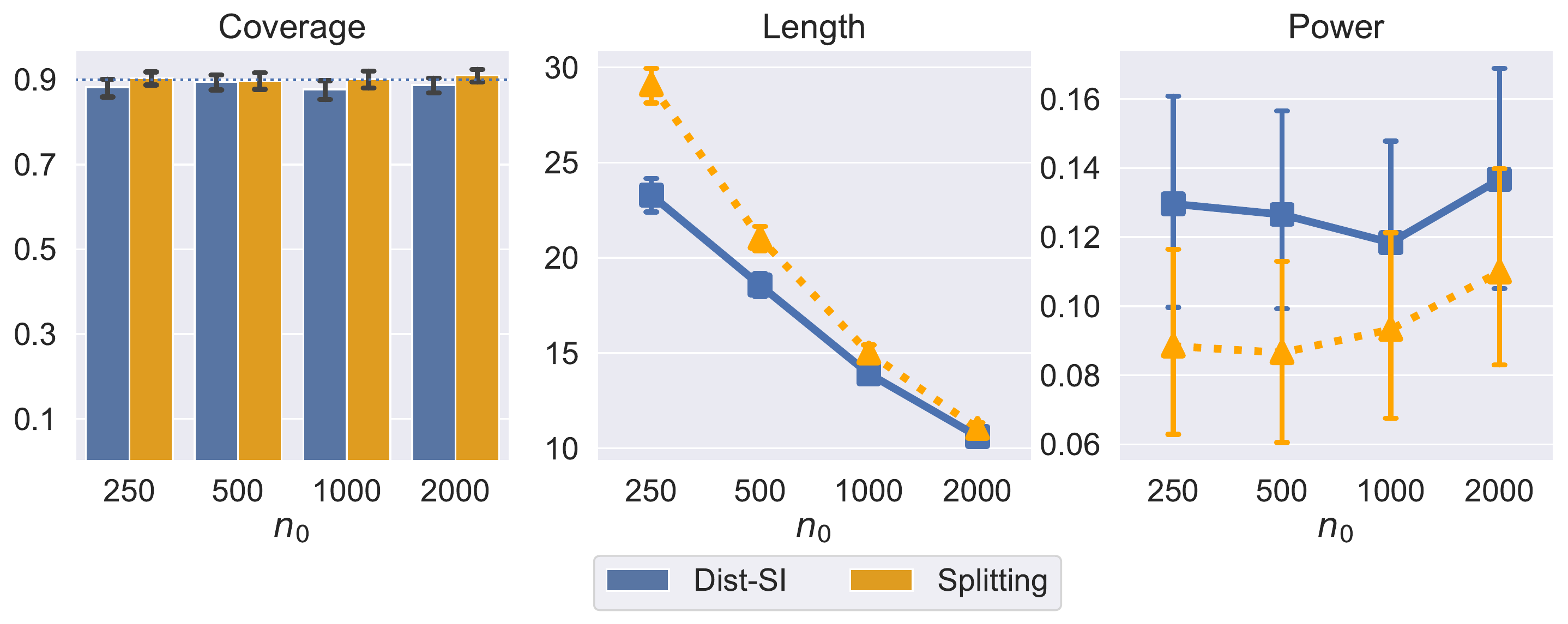}
\caption{Varying $n_0$, the sample size in the central machine.}
\label{fig: linear vary n0}
\end{subfigure}
\caption{Results for linear regression.}
\label{fig: linear}
\end{figure}

\begin{figure}
\centering
\begin{subfigure}{.8\textwidth}
\centering
\includegraphics[width=\textwidth]{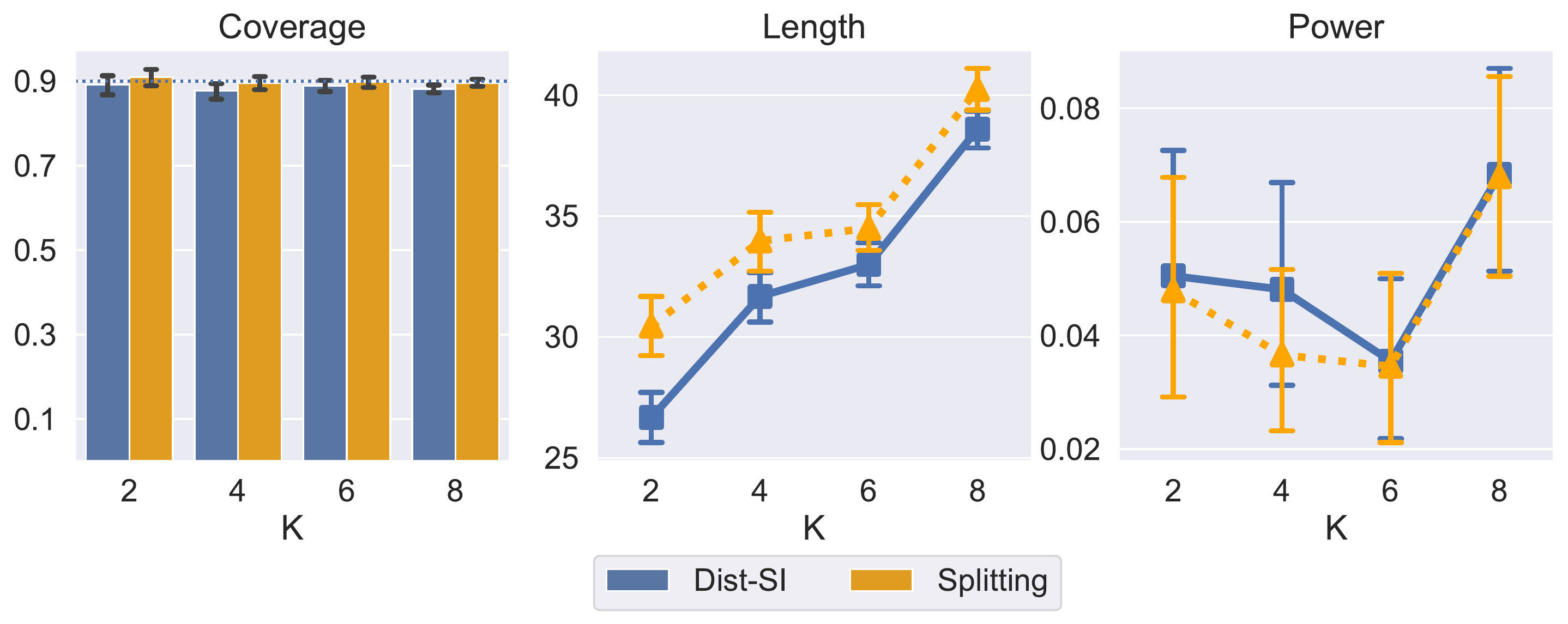}
\caption{Varying $K$. Each local machine has $[8000/K]$ data points for each $K$. }
\end{subfigure}
\begin{subfigure}{.8\textwidth}
\centering
\includegraphics[width=\textwidth]{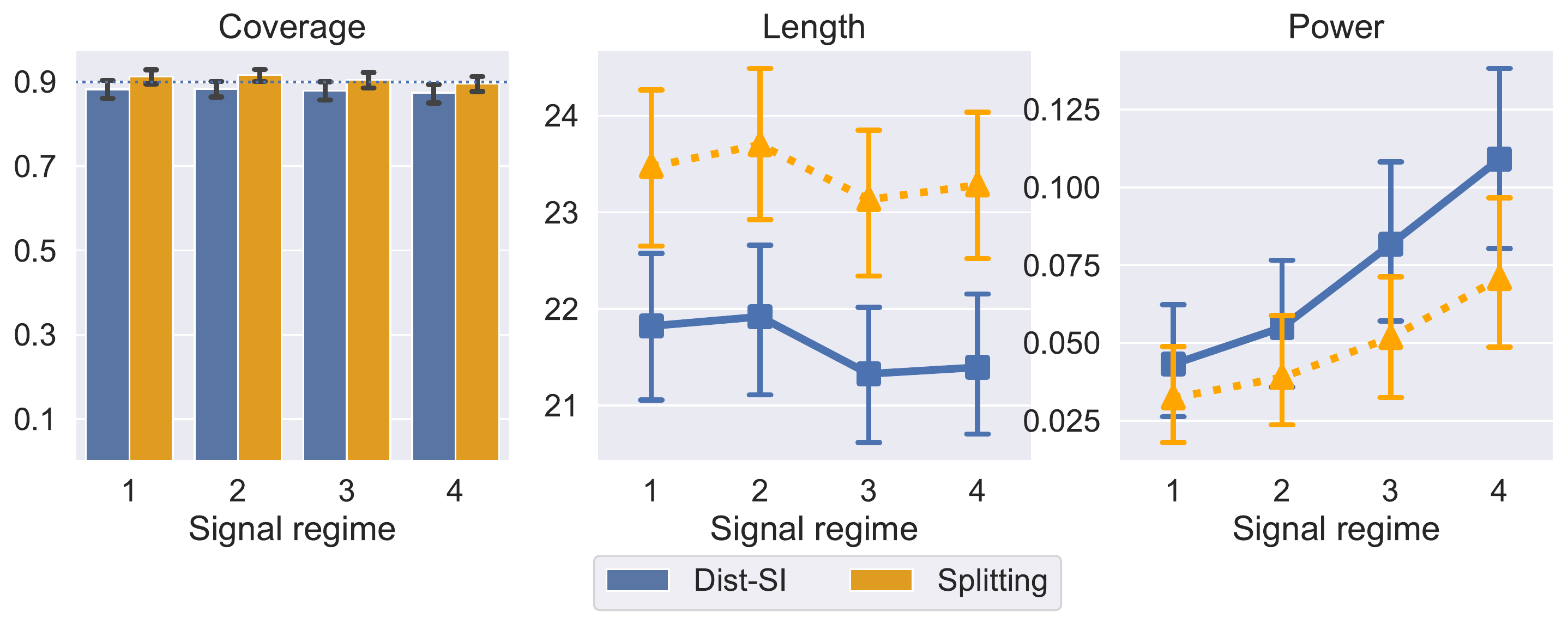}
\caption{Varying signal strength. The nonzero $\beta_j=\pm\sqrt{2c\log p}$ with random signs for $c=0.3,0.5,0.7,0.9$.}
\end{subfigure}
\begin{subfigure}{.8\textwidth}
\centering
\includegraphics[width=\textwidth]{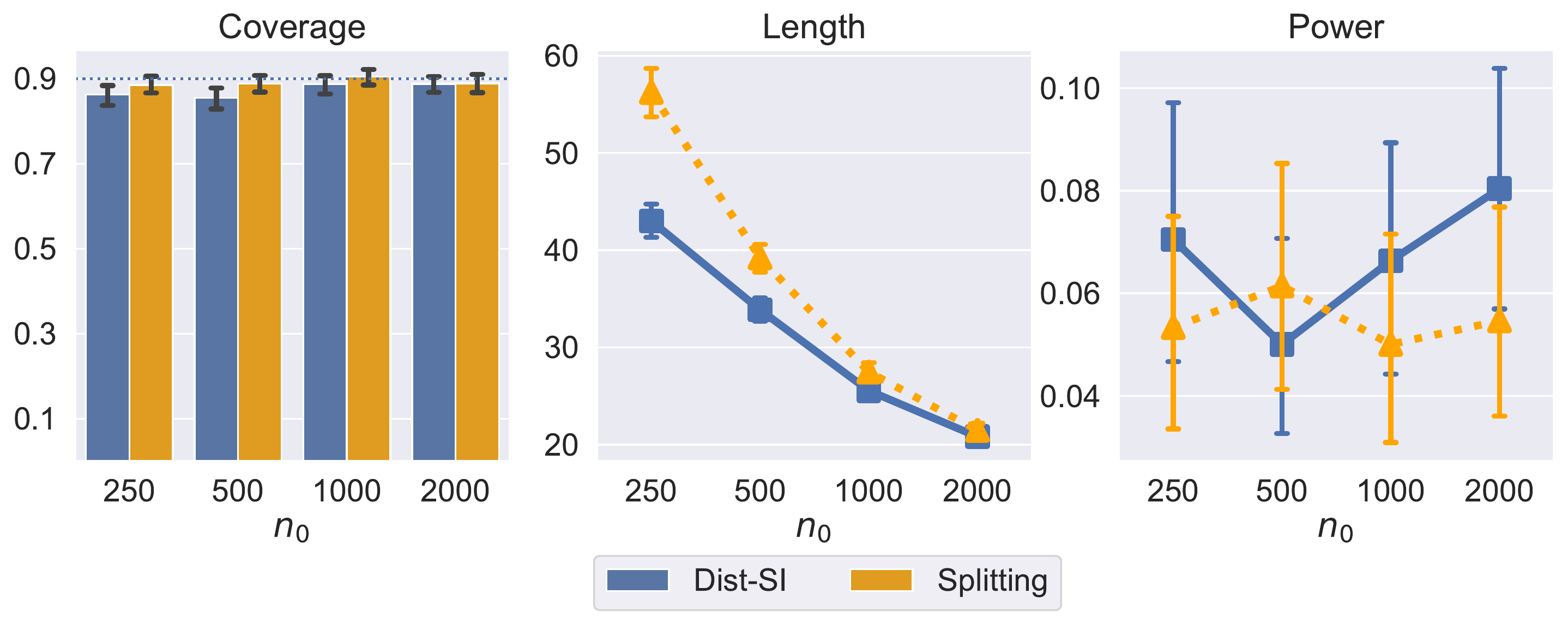}
\caption{Varying $n_0$, the sample size in the central machine.}
\end{subfigure}
\caption{Results for logistic regression.}
\label{fig: logistic}
\end{figure}

\subsection{Experiments on P-value lotteries}

In this section, we apply the suitable adaptation of our procedure to solve the p-value lottery problem, as described in Section \ref{sec:5}. 
We compare our procedure with ``Multi-carving" and ``Multi-splitting" as proposed by  \cite{schultheiss2021multicarving} and \cite{dezeure2015high}, respectively. 
For the latter two algorithms, we use the implementation provided by \cite{schultheiss2021multicarving} with code available on GitHub\footnote{\url{https://github.com/cschultheiss/Multicarving}.
The original code is written in R, and we load them into Python when running our simulations, which might have contributed to slightly longer running times as reported in our  findings.}.
To avoid any confusion, we continue to refer our procedure as ``Dist-SI", though we are no longer simulating distributed datasets.

We generate our data from the same linear model as described before, but now we use the sample size, dimension and sparsity regime that was discussed in \cite{schultheiss2021multicarving}.
That is, we fix $n=100$ and $p=200$, and consider $20$ nonzero coefficients with $\beta_j=\pm 2\sqrt{\log p}$.
We use $B=5$ replicates, and aggregate the p-values using formula~\eqref{equ: gamma min} with $\gamma_{\min}=0.05$. 
The proportion of samples used for variable selection is varied in the set $\{0.5,0.6,\ldots,0.9\}$.

We fix the significance level at $0.1$.
A coefficient $\beta_j$ is predicted to be nonzero if $P_j<0.1$.
To compare the quality of p-values, we measure their accuracy in terms of the diagnostic odds ratio (DOR), which is defined as:
\begin{align*}
\text{DOR}:=\frac{\text{True positive}\cdot \text{True negative}}{\text{False positive}\cdot \text{False negative}}.
\end{align*}
Besides computing the DOR, we compare the average run time for ``Dist-SI" and ``Multi-carving".
The results are shown in Figure~\ref{fig: multisplit}.
In the left panel, we plot the diagnostic odds ratio of the three methods with varying proportions. 
The error bars are once again reported for $500$ random repetitions. 
In the right panel, we plot the average log-run times of ``Dist-SI" and ``Multi-carving". 

\textbf{Observations}. 
We find that our procedure has larger DOR than the two previously proposed alternatives, ``Multi-carving", and  ``Multi-splitting", for all values of sample proportion.
Especially, a p-value in every replicate uses the full data after carefully discarding information that was used up for selecting predictors.
The re-use of data from selection results in larger power over ``Multi-splitting".
Our procedure aligns with ``Multi-carving", which also deploys conditional techniques to re-use data for hypothesis testing.
However, a key distinction of our procedure with ``Multi-carving" lies in how we use the randomization framework to represent selection, and subsequently marginalize over this randomness to construct our p-values. 
In particular, we note that ``Multi-carving" conditions on the randomization that is involved during variable selection on a random split of the data, whereas our procedure explicitly characterizes the distribution of randomization instead of simply conditioning on $\Omega$.
We believe that this difference between the two procedures shows up in our simulated findings as we note larger values of DOR with ``Dist-SI". 
Unsurprisingly, our proposal is also faster than ``Multi-carving" by about $100$ times.
From a computing perspective, our procedure solves a convex optimization problem to deliver p-values; the latter procedure uses MCMC sampling from a conditional distribution for the same problem. 

\begin{figure}
\centering
\includegraphics[width=.9\textwidth]{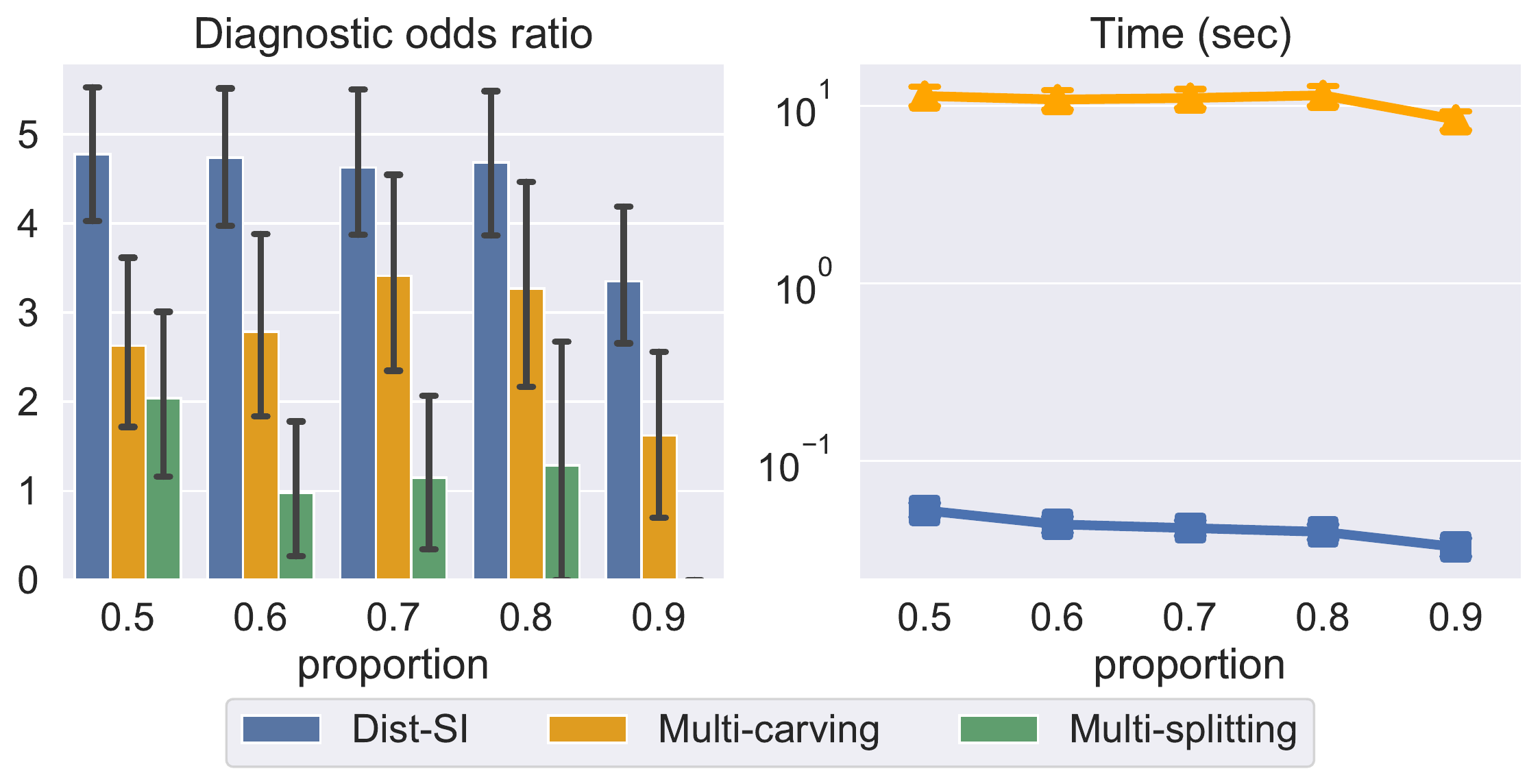}
\caption{Compare Dist-SI with multi-carving and multi-splitting. The left panel shows the DOR when using different proportions of samples for selection. The right panel shows the average running time.}
\label{fig: multisplit}
\end{figure}

\subsection{Experiments on medical dataset}

We illustrate an application of our procedure on a real dataset that is publicly available on MIT's GOSSIS database \cite{raffa2022global}.
This dataset contains records on intensive care unit (ICU) admissions from $192$ hospitals, including patients' demographic information, and various medical measurements, and lab results. 
We only use the datasets from the four largest hospitals, among which three datasets are used for variable and the remaining one is reserved for selective inference. 
We focus on a regression problem with data from the first 24 hours of intensive care.
The response in this problem is binary, and takes the value $1$ if a patient admitted to an ICU has been diagnosed with Diabetes Mellitus, and is $0$ otherwise.
The same problem appeared in the 2021 Women in Data Science Datathon \footnote{\url{https://www.kaggle.com/competitions/widsdatathon2021/data}. Accessed on on Dec. 17, 2022.}. 
We remove variables with more than half missing values, and also remove rows with missing values.
After preprocessing, we end up with 81 predictors. 
The three datasets used for variable selection have sample sizes ranging from $1633$ to $1788$, and the dataset reserved for inference has $2000$ samples.

For model selection, we run the logistic regression with Lasso penalty.
Consistent with our simulated experiments, the regularization parameter is tuned with one extra dataset with $893$ samples. 
The selected GLM has 58 predictors. 
To construct confidence intervals for the 58 selected variables, we apply the proposed ``Dist-SI" algorithm and ``Splitting" as done in simulations. 
The significance level is set to be $0.1$.
``Dist-SI" reports 21 significant variables, while ``Splitting" reports 13 significant variables. 
In Figure~\ref{fig: real data}, we plot the confidence intervals for the regression coefficients that are rejected by either of the two procedures. 
The boxplot for the lengths of these intervals, in Figure~\ref{fig: diabete lengths}, show that the median length of the ``Dist-SI" intervals is smaller than the ``Splitting" intervals by $67\%$. 
Additionally, the coefficient of variation is $1.8$ and $3.9$ for ``Dist-SI" and "Splitting", respectively.
This indicates that the dispersion of interval lengths for ``Dist-SI" is smaller  than ``Splitting".
On this instance, we see that ``Splitting" yields a few very wide intervals. 
This is because the Hessian matrix based on data present at the central machine (reserved dataset) is ill-conditioned. 
``Dist-SI" does not have this issue because it re-uses data from the three hospitals
for more powerful selective inference.
\begin{figure}
\centering
\includegraphics[width=\textwidth]{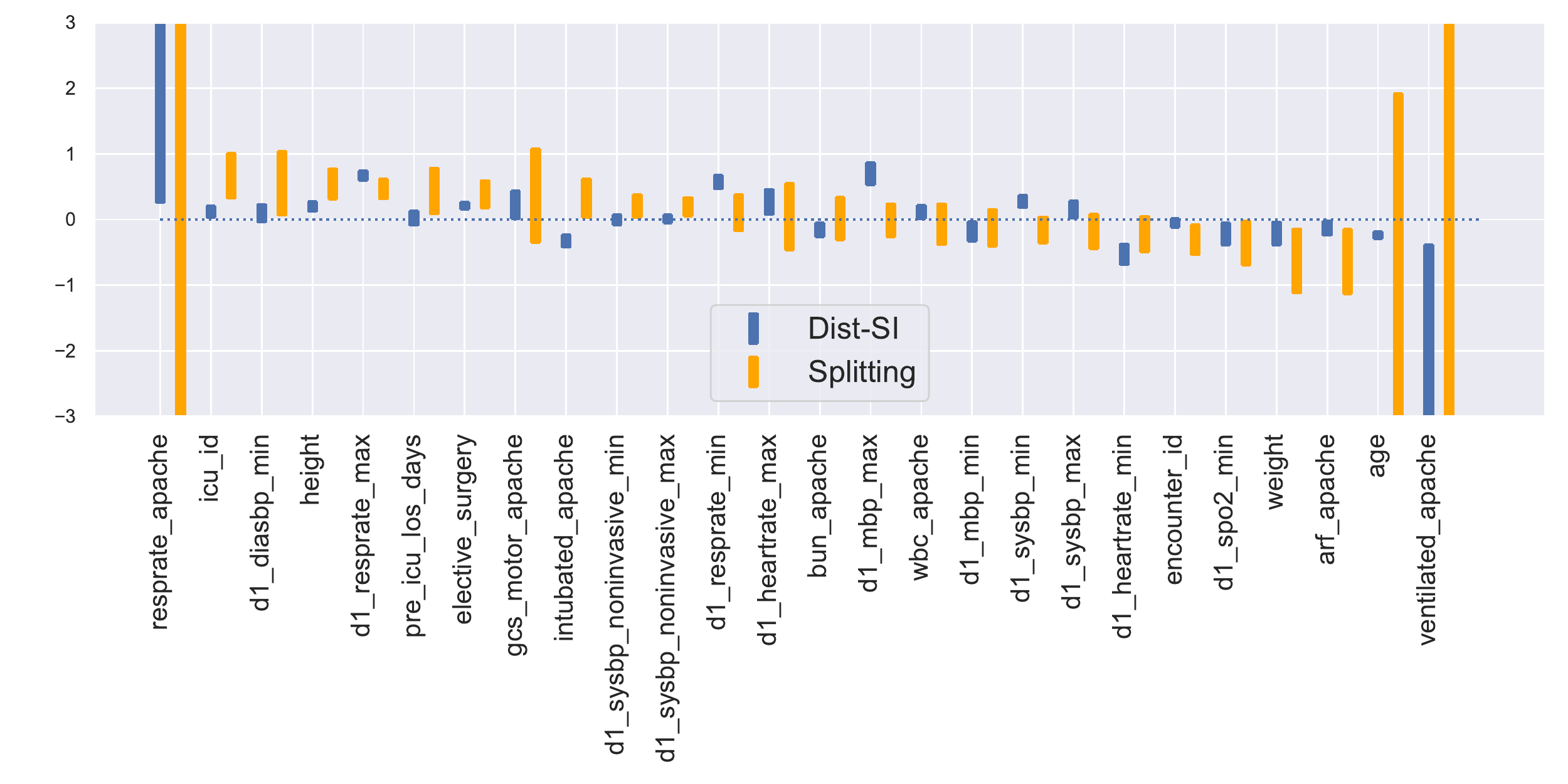}
\caption{Confidence intervals for the coefficients that are rejected by either Dist-SI or sample splitting.}
\label{fig: real data}
\end{figure}

\begin{figure}
\centering
\includegraphics[width=.5\textwidth]{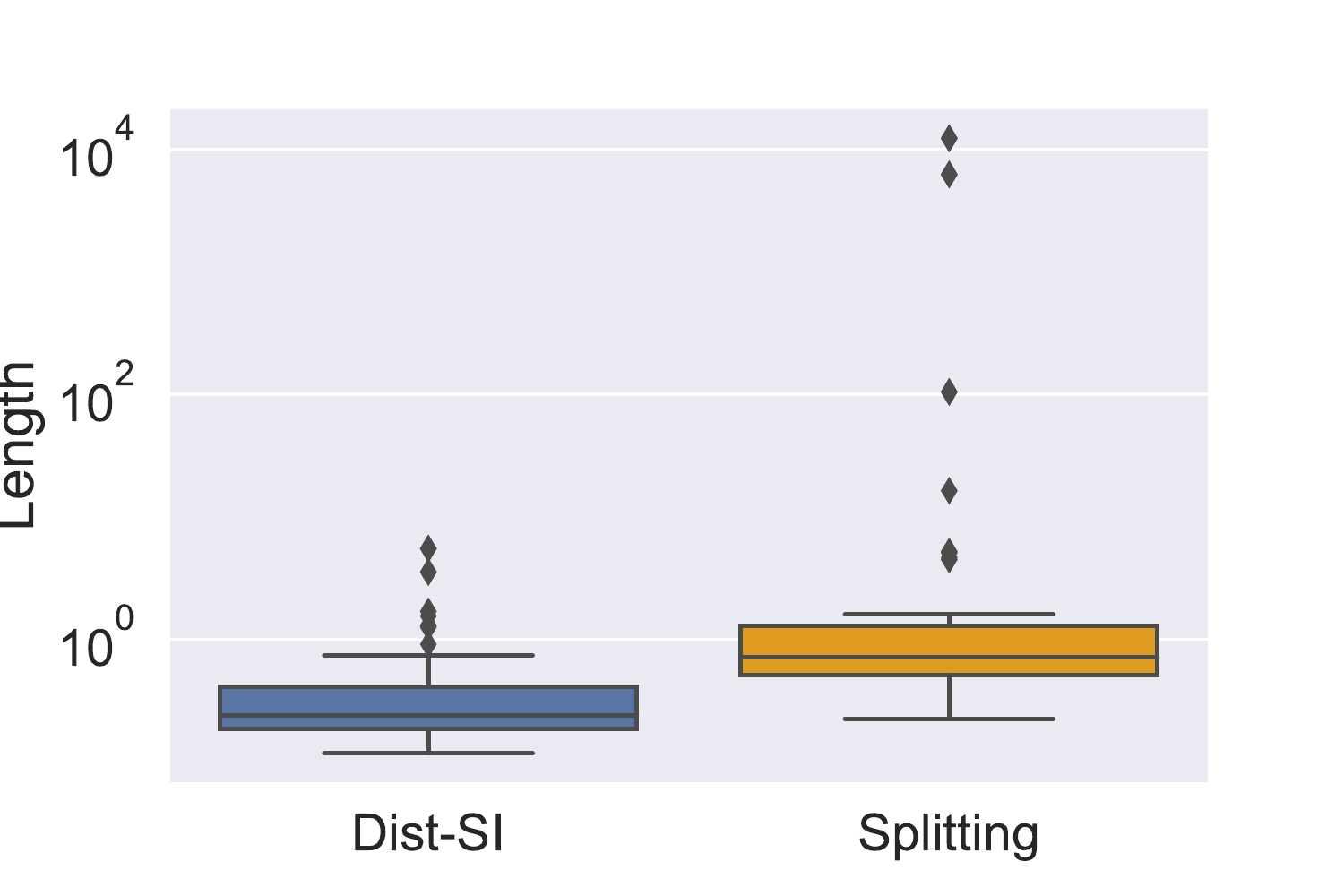}
\caption{Boxplot of confidence interval lengths produced by Dist-SI and sample splitting. The y-axis is on the logarithmic scale. }
\label{fig: diabete lengths}
\end{figure}

\section{Conclusion}
\label{sec: conclusion}

Model selection appears to be routine practice when analyzing big datasets. 
Inference for data-dependent models and parameters is a very challenging goal, because sound procedures must rigorously account for randomness from the selection process.
To the best of our knowledge, this is the first contribution that addresses selective inference with distributed data.
We provide a procedure to construct confidence intervals and p-values when inference is sought in a generalized linear model with selected predictors.
We identify a representation for selection in a common distributed setup, and provide an asymptotic selective likelihood by developing a randomized framework for our problem.
Approximately-valid selective inference, based on our selective likelihood, takes a very simple form: our confidence intervals for the selected regression coefficients are centered around the MLE of the selective likelihood, and the variance of the MLE is estimated by the observed Fisher information matrix.
An appealing feature of our procedure is that we only require some aggregated information, with relatively low communication cost, from each machine. 
This feature allows an adaptation of our procedure to settings where various data-sites may not be willing to share their individual datasets.
But, we note that there is room for improvement here, specially if various sites have not measured the same set of predictors.
Our paper also provides an efficient solution for the p-value lottery problem without relying on MCMC samplers.
Our procedure bypasses the primary computational bottleneck in the earlier proposal  \citep{schultheiss2021multicarving} by reducing selective inference to the solution of an optimization problem.

\section*{Acknowledgements}
S. Panigrahi's research is supported in part by NSF grants: NSF-DMS 1951980 and NSF-DMS 2113342. S. Liu's research is partially supported by the Stanford Data Science Scholars program.


\newpage
\appendix
\section*{SUPPLEMENTARY MATERIAL}

\section{Proofs for Section~\ref{sec: randomization}}

Supporting results are collected in Appendix \ref{appendix:supp}.
\subsection{Proof of Theorem~\ref{thm: normality of omega}}
\label{prf: normality of omega}

\begin{proof}
We start with the decomposition
\begin{align*}
\frac1n X\tran (\nabla A(X\widehat\beta^{\Lambda,(k)}) - Y)&=\frac1n X\tran(\nabla A(X_E \beta_{E,n}) - Y) + R_1^{(k)} + R_2^{(k)},
\end{align*}
where
\begin{align*}
R_1^{(k)}&=\frac1n X\tran (\nabla A(X_{E^{(k)}}\beta^*_{E^{(k)}, n} ) - \nabla A(X_E\beta_{E,n})),\\
R_2^{(k)}&=\frac1n X\tran (\nabla A(X\widehat\beta^{\Lambda,(k)}) - \nabla A(X_{E^{(k)}}\beta^*_{E^{(k)}, n} ) ),\text{ and } \\
\beta^{*}_{E^{(k)}, n}&=\calI_{E^{(k)},E^{(k)} }^{-1} \calI_{E^{(k)} ,E}\beta_{E,n}.
\end{align*}
In a similar fashion, we can decompose the variables based on $D^{(k)}$ as
\begin{align*}
\frac1{n_k}X^{(k),\intercal}(\nabla A(X^{(k)} \widehat\beta^{\Lambda,(k)} ) - Y^{(k)} )&=\frac1{n_k} X^{(k),\intercal}(\nabla A(X^{(k)}_E\beta_{E,n} ) - Y^{(k)} ) + r_1^{(k)} + r_2^{(k)}.
\end{align*}
The decomposition in the above two displays allow us to write
\begin{equation}
\begin{aligned}
\sqrt n\omega^{(k)}&=\sqrt n\Big\{\frac{1}{n} X\tran(\nabla A(X_E \beta_{E,n}) - Y) - \frac{1}{n_k}X^{(k),\intercal}(\nabla A(X^{(k)}_E\beta_{E,n}) -Y^{(k)} ) \\
&\;\;\;\;\;\;\;\;\;\;\;\;\;+ R_1^{(k)} + R_2^{(k)} - r_1^{(k)} - r_2^{(k)}\Big\}\\
&=\sqrt n\tilde\omega^{(k)} +\sqrt n R^{(k)},
\end{aligned}
\label{rep:omega}
\end{equation}
where 
$$R^{(k)}= R_1^{(k)} + R_2^{(k)} - r_1^{(k)} - r_2^{(k)},$$
and 
$$\tilde\omega^{(k)}= \frac{1}{n} X\tran(\nabla A(X_E \beta_{E,n}) - Y) - \frac{1}{n_k}X^{(k),\intercal}(\nabla A(X^{(k)}_E\beta_{E,n}) -Y^{(k)} ).$$

Let $\tilde \Omega\in\R^{pK}$ be the stack of $\tilde\omega^{(k)}$ for $1\leq k\leq K$.
It suffices to show that
\begin{align}
\sqrt n\tilde\Omega \convdis \N(\mathbf{0},\Sigma_{\Omega}), \quad\text{and}\quad \sqrt n R^{(k)} \convp0.
\label{equ: thm 4.1 sufficient}
\end{align}
To proceed with the proof, let $e_i=x_i(\nabla A(x_{i,E}\tran \beta_{E,n}) - y_i)$. 
It is easy to see that $e_i$ are i.i.d. for all $1\leq i\leq n$ with $\EE{e_i}=0$ and $\Var{e_i}\to\calI$, and it follows that
\begin{align*}
\sqrt n\tilde\omega^{(k)}&=\sqrt{1-\rho_k} \frac{1}{\sqrt{n-n_k}}\sum_{i\notin \calC_k} e_i - \frac{1-\rho_k}{\sqrt{\rho_k}} \frac{1}{\sqrt{n_k}}\sum_{j\in \calC_k} e_j.
\end{align*}
Clearly,
$$\sqrt n\tilde\omega^{(k)}\convdis \N\left(\mathbf{0}, \frac{1-\rho_k}{\rho_k}\calI\right),$$
and
\begin{align*}
\Cov{\sqrt n\tilde\omega^{(j)},\,\sqrt n\tilde\omega^{(k)} }&\to -\calI
\end{align*}
for $j\neq k$.
This leads us to claim
\begin{align*}
\sqrt n\begin{pmatrix}
\tilde\omega^{(j)} \\ \tilde\omega^{(k)}
\end{pmatrix}\convdis\N\left(\mathbf{0}, \begin{pmatrix}
\frac{1-\rho_j}{\rho_j} \calI & -\calI \\ -\calI & \frac{1-\rho_k}{\rho_k}\calI
\end{pmatrix} \right),
\end{align*}
which proves the first statement of \eqref{equ: thm 4.1 sufficient}.
Lemma~\ref{lem: R_1 - r_1} and Lemma~\ref{lem: R_2 - r_2} show that $\sqrt n R^{(k)}=o_p(1)$ to conclude the proof of \eqref{equ: thm 4.1 sufficient}.

\begin{lemma}[Rate of $R^{(k)}_1 - r^{(k)}_1 $]
\label{lem: R_1 - r_1}
Let
\begin{align*}
R_1^{(k)}&=\frac1n X\tran (\nabla A(X_{E^{(k)}}\beta^*_{E^{(k)}, n} ) - \nabla A(X_E\beta_{E,n})),\\
r_1^{(k)}&=\frac1{n_k} X^{(k),\intercal} (\nabla A(X^{(k)}_{E^{(k)}}\beta^*_{E^{(k)}, n} ) - \nabla A(X^{(k)}_E\beta_{E,n})).
\end{align*}
Then $\sqrt n(R^{(k)}_1 - r^{(k)}_1)=o_p(1)$.
\end{lemma}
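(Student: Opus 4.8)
The plan is to recognize that $R_1^{(k)}$ and $r_1^{(k)}$ are two sample averages of a single common i.i.d. summand, so their difference is a contrast of disjoint-subsample means whose shared population mean cancels, and then to show that this summand is itself of order $n^{-1/2}$ by invoking Assumption~\ref{assump: missed variables}. Concretely, I would set $h_i = x_i\bigl(\nabla A(x_{i,E^{(k)}}\tran\beta^*_{E^{(k)},n}) - \nabla A(x_{i,E}\tran\beta_{E,n})\bigr)\in\R^p$, so that $R_1^{(k)}=\frac1n\sum_{i=1}^n h_i$ and $r_1^{(k)}=\frac1{n_k}\sum_{i\in\calC^{(k)}} h_i$. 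Splitting the full sum over $\calC^{(k)}$ and its complement yields the algebraic identity $R_1^{(k)}-r_1^{(k)} = (1-\rho_k)(\bar h_{-k}-\bar h_{n_k})$, where $\bar h_{n_k}$ and $\bar h_{-k}$ are the averages of $h_i$ over $\calC^{(k)}$ and over $[n]\setminus\calC^{(k)}$, respectively. Since the $h_i$ are i.i.d. and these two averages use disjoint samples, this contrast has mean exactly $0$ and variance at most $C\,\Var{h_1}\bigl(n_k^{-1}+(n-n_k)^{-1}\bigr)$.

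Next I would bound $\Var{h_1}$ by controlling the scalar gap in the argument of $\nabla A$. Expanding $\beta^*_{E^{(k)},n}=\calI_{E^{(k)},E^{(k)}}^{-1}\calI_{E^{(k)},E}\beta_{E,n}$ and writing $\widetilde E^{(k)}=E\setminus E^{(k)}$, this gap becomes $x_{i,E^{(k)}}\tran\beta^*_{E^{(k)},n}-x_{i,E}\tran\beta_{E,n} = \sum_{j\in\widetilde E^{(k)}}\bigl(x_{i,E^{(k)}}\tran\calI_{E^{(k)},E^{(k)}}^{-1}\calI_{E^{(k)},j}-x_{i,j}\bigr)\beta_{j,n}$. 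Assumption~\ref{assump: missed variables} makes each summand negligible: for indices $j$ satisfying the collinearity condition $X_{E^{(k)}}\calI_{E^{(k)},E^{(k)}}^{-1}\calI_{E^{(k)},j}=X_j$ the parenthesized factor vanishes identically, while for the remaining indices $\beta_{j,n}=O(n^{-1/2})$. Hence the scalar gap is $O(n^{-1/2})$; a first-order (mean-value) expansion of $\nabla A$ with bounded $\nabla^2 A$ then gives $\|h_1\|=O_p(n^{-1/2})$, and standard moment conditions on the design yield $\Var{h_1}=O(n^{-1})$.

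Combining the two steps, $\Var{R_1^{(k)}-r_1^{(k)}}=O(n^{-1})\cdot O(n^{-1})=O(n^{-2})$ with mean $0$, so $\Var{\sqrt n(R_1^{(k)}-r_1^{(k)})}=O(n^{-1})\to 0$ and an $L^2$/Chebyshev argument delivers $\sqrt n(R_1^{(k)}-r_1^{(k)})=o_p(1)$. I expect the main obstacle to be the bookkeeping of the two independent factors of $n^{-1/2}$: one coming from the weak-signal/collinearity structure of the missed variables in Assumption~\ref{assump: missed variables}, and one from the cancellation of the shared population mean $\mathbb{E}[h_1]$ in the difference of the full-sample and subsample averages. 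A crude triangle-inequality bound sees only a single factor and yields $O_p(1)$ rather than $o_p(1)$; the proof must exploit both sources of smallness simultaneously, which is precisely why writing the difference at the outset as the mean-zero contrast $(1-\rho_k)(\bar h_{-k}-\bar h_{n_k})$ is the decisive move.
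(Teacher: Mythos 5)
Your proposal is correct, and it rests on exactly the same two sources of smallness that the paper's proof exploits: the weak-signal/collinearity reduction from Assumption~\ref{assump: missed variables} (one factor of $n^{-1/2}$), and the cancellation of the shared population mean between the full-sample and subsample averages (the second factor). The execution, however, differs in a meaningful way. The paper first Taylor-expands $\nabla A$ around $X_E\beta_{E,n}$, reduces via Assumption~\ref{assump: missed variables} to terms carried by the small coefficients $\beta_{\calE_k,n}=O(n^{-1/2})$, and then centers the two design-level averages at their common expectation $T_1$; each centered average is only shown to be $o_p(1)$ by the law of large numbers, so the product with the $O(n^{-1/2})$ coefficient yields $o_p(n^{-1/2})$. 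You instead keep the nonlinearity inside the i.i.d.\ summand $h_i$, use the exact algebraic identity $R_1^{(k)}-r_1^{(k)}=(1-\rho_k)(\bar h_{-k}-\bar h_{n_k})$ to obtain an exactly mean-zero contrast of disjoint-sample means, and then push the smallness into $\Var{h_1}=O(n^{-1})$ via the mean-value theorem and Assumption~\ref{assump: missed variables}, finishing with Chebyshev. Your route is more quantitative: it delivers the stronger conclusion $R_1^{(k)}-r_1^{(k)}=O_p(n^{-1})$ rather than $o_p(n^{-1/2})$, and avoids the Taylor remainder bookkeeping. The price is slightly more explicit regularity (bounded $\nabla^2 A$ for the mean-value step, fourth moments on the design for the Cauchy--Schwarz bound on $\mathbb{E}[\|x_1\|^2\,\mathrm{gap}_1^2]$), though the paper's expansion with an $o_p(n^{-1/2})$ remainder implicitly requires comparable conditions, so nothing essential is lost.
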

\begin{proof}[Proof of Lemma~\ref{lem: R_1 - r_1}.]
As per Lemma~\ref{lem: property of beta*}, we have
$$X_{E^{(k)}}\beta_{E^{(k)}, n}^* - X_E\beta_{E,n}=O_p(n^{-1/2}).$$
Thus, we claim that
\begin{align*}
R_1^{(k)}-r_1^{(k)}
&=\frac1n X\tran \diag(\nabla^2 A(X_E\beta_{E,n}) (X_{E^{(k)}}\calI_{E^{(k)},E^{(k)}}^{-1}\calI_{E^{(k)},E } - X_{E} )\beta_{E,n}   \\
&-\frac1{n_k} X^{(k),\intercal} \diag(\nabla^2 A(X_E^{(k)}\beta_{E,n}) (X_{E^{(k)}}^{(k)}\calI_{E^{(k)},E^{(k)}}^{-1}\calI_{E^{(k)},E} - X_{E} )\beta_{E,n} + o_p(n^{-1/2}).
\end{align*}
By Assumption~\ref{assump: missed variables}, there exists $\calE_k\subseteq \widetilde{E}^{(k)}=E\setminus E^{(k)}$ such that for $j\in\widetilde{E}^{(k)}\setminus \calE_k $, $X_{E^{(k)}}\calI_{E^{(k)},E^{(k)}}^{-1}\calI_{E^{(k)},j}=X_j $ and $\beta_{\calE_k,n}=O(n^{-1/2})$.
So the last display simplifies as
\begin{align*}
R_1^{(k)}-r_1^{(k)}&=\frac1n X\tran \diag(\nabla^2 A(X_E\beta_{E,n})) (X_{E^{(k)}}\calI_{E^{(k)},E^{(k)}}^{-1}\calI_{E^{(k)},\calE_k } - X_{\calE_k } )\beta_{\calE_k,n }   \\
&-\frac1{n_k} X^{(k),\intercal} \diag(\nabla^2 A(X_E^{(k)}\beta_{E,n})) (X_{E^{(k)}}^{(k)}\calI_{E^{(k)},E^{(k)}}^{-1}\calI_{E^{(k)},\calE_k} - X_{\calE_k} )\beta_{\calE_k,n} + o_p(n^{-1/2}).
\end{align*}

If $\calE_k$ is not empty, let 
$$T_1=\EE{\frac1n X\tran \diag(\nabla^2 A(X_E\beta_{E,n}) (X_{E^{(k)}}\calI_{E^{(k)},E^{(k)}}^{-1}\calI_{E^{(k)},\calE_k} - X_{\calE_k})}.$$
Then 
\begin{align*}
R_1^{(k)}-r_1^{(k)}&=\left[\frac1n X\tran \diag(\nabla^2 A(X_E\beta_{E,n}) (X_{E^{(k)}}\calI_{E^{(k)},E^{(k)}}^{-1}\calI_{E^{(k)},\calE_k } - X_{\calE_k } ) - T_1\right] \beta_{\calE_k,n }   \\
&-\left[\frac1{n_k} X^{(k),\intercal} \diag(\nabla^2 A(X_E^{(k)}\beta_{E,n}) (X_{E^{(k)}}^{(k)}\calI_{E^{(k)},E^{(k)}}^{-1}\calI_{E^{(k)},\calE_k} - X_{\calE_k} ) - T_1 \right]\beta_{\calE_k,n}\\
& + o_p(n^{-1/2}).
\end{align*}
Note that $\beta_{\calE_k}=O(n^{-1/2})$.
Further, observe that 
$$\frac1n X\tran \diag(\nabla^2 A(X_E\beta_{E,n}) (X_{E^{(k)}}\calI_{E^{(k)},E^{(k)}}^{-1}\calI_{E^{(k)},\calE_k } - X_{\calE_k } ) - T_1=o_p(1), \text{ and }$$ 
$$\frac1{n_k} X^{(k),\intercal} \diag(\nabla^2 A(X_E^{(k)}\beta_{E,n}) (X_{E^{(k)}}^{(k)}\calI_{E^{(k)},E^{(k)}}^{-1}\calI_{E^{(k)},\calE_k} - X_{\calE_k} ) - T_1=o_p(1).$$ 
Thus, we conclude that $R_1^{(k)}-r_1^{(k)}=o_p(n^{-1/2})$.
\end{proof}

\begin{lemma}[Rate of $R^{(k)}_2 - r^{(k)}_2 $]
\label{lem: R_2 - r_2}
Let
\begin{align*}
R_2^{(k)}&=\frac1n X\tran (\nabla A(X\widehat\beta^{\Lambda,(k)}) - \nabla A(X_{E^{(k)}}\beta^*_{E^{(k)}, n} ) ),\\
r_2^{(k)}&=\frac1{n_k} X^{(k),\intercal} (\nabla A(X^{(k)}\widehat\beta^{\Lambda,(k)}) - \nabla A(X^{(k)}_{E^{(k)}}\beta^*_{E^{(k)}, n} ) ).
\end{align*}
Then
$\sqrt n(R^{(k)}_2 - r^{(k)}_2)=o_p(1) $.
\end{lemma}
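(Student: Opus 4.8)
The plan is to linearize $\nabla A$ through a coordinatewise mean value theorem and thereby reduce $R_2^{(k)}-r_2^{(k)}$ to an empirical Fisher-information-type matrix acting on the Lasso estimation error, whose two versions (full sample versus subsample $\calC^{(k)}$) converge to a common population limit. First I would exploit the sparsity of the Lasso solution to write $X\widehat\beta^{\Lambda,(k)}=X_{E^{(k)}}\widehat B^{(k)}$, and set $\delta^{(k)}=\widehat B^{(k)}-\beta^*_{E^{(k)},n}$. The key quantitative input is the $\sqrt n$-rate $\sqrt n\,\delta^{(k)}=O_p(1)$, i.e.\ that the Lasso coefficients on the fixed support $E^{(k)}$ are $\sqrt n$-consistent for the projection parameter $\beta^*_{E^{(k)},n}$; this follows from standard M-estimation theory for the restricted problem together with Lemma~\ref{lem: property of beta*}.

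Next, applying the mean value theorem to $\nabla A$ coordinatewise, for each $i$ there is a point $\bar\xi_i$ between $x_{i,E^{(k)}}\tran\widehat B^{(k)}$ and $x_{i,E^{(k)}}\tran\beta^*_{E^{(k)},n}$ with
$$
R_2^{(k)}=\left(\frac1n\sum_{i=1}^n x_i\,\nabla^2 A(\bar\xi_i)\,x_{i,E^{(k)}}\tran\right)\delta^{(k)},\qquad r_2^{(k)}=\left(\frac1{n_k}\sum_{i\in\calC^{(k)}} x_i\,\nabla^2 A(\bar\xi_i)\,x_{i,E^{(k)}}\tran\right)\delta^{(k)}.
$$
I would then split $\nabla^2 A(\bar\xi_i)=\nabla^2 A(x_{i,E^{(k)}}\tran\beta^*_{E^{(k)},n})+\big[\nabla^2 A(\bar\xi_i)-\nabla^2 A(x_{i,E^{(k)}}\tran\beta^*_{E^{(k)},n})\big]$, so that each bracketed matrix becomes $M_n+\widetilde R_n$ and $M_{n_k}+\widetilde R_{n_k}$, where
$$
M_n=\frac1n\sum_{i=1}^n x_i\,\nabla^2 A(x_{i,E^{(k)}}\tran\beta^*_{E^{(k)},n})\,x_{i,E^{(k)}}\tran,\qquad M_{n_k}=\frac1{n_k}\sum_{i\in\calC^{(k)}} x_i\,\nabla^2 A(x_{i,E^{(k)}}\tran\beta^*_{E^{(k)},n})\,x_{i,E^{(k)}}\tran.
$$

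Both $M_n$ and $M_{n_k}$ are i.i.d.\ averages of the same mean-$M$ random matrices, so by the law of large numbers $M_n\convp M$ and $M_{n_k}\convp M$ (using $\rho_k$ fixed and bounded away from $0$), whence $M_n-M_{n_k}=o_p(1)$. The correction terms $\widetilde R_n,\widetilde R_{n_k}$ are controlled by Lipschitz continuity of $\nabla^2 A$ and $|\bar\xi_i-x_{i,E^{(k)}}\tran\beta^*_{E^{(k)},n}|\le |x_{i,E^{(k)}}\tran\delta^{(k)}|=O_p(n^{-1/2})$, giving $\widetilde R_n=O_p(n^{-1/2})$ and likewise for $\widetilde R_{n_k}$, provided the design has enough moments so that $\tfrac1n\sum_i\|x_i\|\,\|x_{i,E^{(k)}}\|^2=O_p(1)$. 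Putting the pieces together,
$$
\sqrt n\,(R_2^{(k)}-r_2^{(k)})=(M_n-M_{n_k})\,\sqrt n\,\delta^{(k)}+(\widetilde R_n-\widetilde R_{n_k})\,\sqrt n\,\delta^{(k)}=o_p(1)\cdot O_p(1)+O_p(n^{-1/2})\cdot O_p(1)=o_p(1).
$$

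I expect the main obstacle to be the $\sqrt n$-rate $\sqrt n\,\delta^{(k)}=O_p(1)$ for the Lasso restricted to its selected support, which is the genuinely nontrivial ingredient; once that is in hand, the linearization and the law-of-large-numbers step are routine but require the regularity conditions on $A$ (uniform boundedness and Lipschitz continuity of $\nabla^2 A$ over the relevant range) and moment conditions on the design to make the uniform control of the data-dependent intermediate points $\bar\xi_i$ rigorous.
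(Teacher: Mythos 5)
Your proposal is correct and follows essentially the same route as the paper's proof: linearize $\nabla A$ around $X_{E^{(k)}}\beta^*_{E^{(k)},n}$, invoke Lemma~\ref{lem: property of beta*} for the $O_p(n^{-1/2})$ rate of $\widehat B^{(k)}-\beta^*_{E^{(k)},n}$, and observe that the full-sample and subsample Hessian-type matrices share a common population limit, so their difference is $o_p(1)$ and the product with the $O_p(n^{-1/2})$ estimation error is $o_p(n^{-1/2})$. The only cosmetic difference is that you use the mean value theorem with explicit Lipschitz and moment control of the intermediate-point remainder, whereas the paper uses a Taylor expansion and absorbs that term into an $o(\|\widehat\beta^{\Lambda,(k)}_{E^{(k)}}-\beta^*_{E^{(k)},n}\|)$ remainder; your handling of that step is, if anything, slightly more explicit.
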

\begin{proof}[Proof of Lemma~\ref{lem: R_2 - r_2}.]
Based on the assertion in Lemma~\ref{lem: property of beta*}, we have 
$$\widehat\beta^{\Lambda,(k)}_{E^{(k)}} - \beta^*_{E^{(k)}, n}=O_p(n^{-1/2}).$$
Taking a Taylor expansion of $\nabla A(X\widehat\beta^{\Lambda,(k)} )$ at $X\beta^*_{E^{(k)}, n} $ for each coordinate, we obtain
\begin{align*}
R_2^{(k)}-r_2^{(k)}&=\frac1n X\tran \left[ \diag(\nabla^2 A(X_{E^{(k)}}  \beta^{*}_{E^{(k)}, n})) X_{E^{(k)}} (\widehat\beta^{\Lambda,(k)}_{E^{(k)}}-\beta^*_{E^{(k)}, n}) + o(\|\widehat\beta^{\Lambda,(k)}_{E^{(k)}}-\beta^*_{E^{(k)}, n}\|) )
\right] - \\
&\frac{1}{n_k}X^{(k),\intercal} \left[ \diag(\nabla^2 A(X_{E^{(k)}}^{(k)} \beta^*_{E^{(k)}, n})) X_{E^{(k)}}^{(k)} (\widehat\beta^{\Lambda,(k)}_{E^{(k)}}-\beta^*_{E^{(k)}, n}) + o(\|\widehat\beta^{\Lambda,(k)}_{E^{(k)}}-\beta^*_{E^{(k)}, n}\|) )\right].
\end{align*}
%
Letting 
$$T=\EE{\frac1n X\tran \diag(\nabla^2 A(X_{E^{(k)}}\beta^*_{E^{(k)}, n})) X_{E^{(k)}}},$$
we have
$$\frac1n X\tran \diag(\nabla^2 A(X_{E^{(k)}}\beta^*_{E^{(k)}, n})) X_{E^{(k)}}  = T+o_p(1),$$ 
and 
$$\frac{1}{n_k}X^{(k),\intercal}  \diag(\nabla^2 A(X_{E^{(k)}}^{(k)} \beta^*_{E^{(k)}, n})) X_{E^{(k)}}^{(k)}=T+o_p(1).$$
Hence,
\begin{align*}
R_2^{(k)}-r_2^{(k)}&=\left[ \frac1n X\tran \diag(\nabla^2 A(X_{E^{(k)}}\beta^*_{E^{(k)}, n})) X_{E^{(k)}}  - T\right] (\widehat\beta^{\Lambda,(k)}_{E^{(k)}}-\beta^*_{E^{(k)}, n}) \\
& - \left[\frac{1}{n_k}X^{(k),\intercal}  \diag(\nabla^2 A(X_{E^{(k)}}^{(k)} \beta^*_{E^{(k)}, n})) X_{E^{(k)}}^{(k)} - T \right] (\widehat\beta^{\Lambda,(k)}_{E^{(k)}}-\beta^*_{E^{(k)}, n}) + o_p(n^{-1/2})\\
&=o_p(n^{-1/2}).
\end{align*}
\end{proof}
\indent
\end{proof}

\subsection{Proof of Proposition~\ref{prop: asymp indep}}
\label{prf: asymp indep}

\begin{proof}
It follows from Assumption~\ref{assump: glm regularity} that
\[
\sqrt n(\widehat\beta_E-\beta_{E,n})\convdis\N(\mathbf{0},\, \calI_{E,E}^{-1}),
\]
and Theorem~\ref{thm: normality of omega} proves that $\sqrt n\Omega\convdis\N(\mathbf{0},\,\Sigma_{\Omega})$.

For our remaining estimator, we note that
\begin{align*}
\sqrt n\widehat\beta_{-E}^\perp&=\frac1{\sqrt n} X_{-E}\tran(\nabla A(X_E\beta_{E,n})-Y) + \frac1{\sqrt n} X_{-E}\tran(\nabla A(X_E\widehat\beta_{E})-\nabla A(X_E\beta_{E,n}))\\
&=\frac1{\sqrt n} X_{-E}\tran(\nabla A(X_E\beta_{E,n})-Y) + \frac1{\sqrt n} X_{-E}\tran W X_E\tran(\widehat\beta_{E}-\beta_{E,n}) + o_p(1)\\
&=\frac1{\sqrt n} X_{-E}\tran(\nabla A(X_E\beta_{E,n})-Y) - \calI_{-E,E}\calI_{E,E}^{-1} \frac{1}{\sqrt n} X_E\tran(\nabla A(X_E\beta_{E,n})-Y) + o_p(1).
\end{align*}
In particular, we have
\begin{align*}
\text{Var}\begin{pmatrix}
\frac1{\sqrt n} X_{E}\tran (\nabla A(X_E\beta_{E,n})-Y) \\ \frac{1}{\sqrt n} X_{-E}\tran(\nabla A(X_E\beta_{E,n})-Y)
\end{pmatrix}=
\begin{pmatrix}
\calI_{E,E} & \calI_{E,-E}\\ \calI_{-E,E} & \calI_{-E,-E}
\end{pmatrix}.
\end{align*}
Thus, we observe that the asymptotic variance of $\sqrt n \widehat\beta_{-E}^{\perp}$ is equal to
\begin{align*}
\calI_{-E,-E}-\calI_{-E,E}\calI_{E,E}^{-1}\calI_{E,-E},
\end{align*}
and conclude that $$\sqrt n\widehat\beta_{-E}^\perp\convdis\N(\mathbf{0},\calI/\calI_{E,E}).$$
Further, it is easy see from the asymptotic representations of $\sqrt n(\widehat\beta_E-\beta_{E,n})$ and $\sqrt n\widehat\beta^\perp_{-E}$
that they are mutually independent.

Now, observe that $\sqrt n(\widehat\beta_E-\beta_{E,n})$ and $\sqrt n\widehat\beta^\perp_{-E}$ are asymptotically equivalent to sums of i.i.d. random variables $z_i$, and that $\sqrt{n}\omega^{(k)}$ assumes the form in \eqref{rep:omega}.
Thus, we can write
\begin{equation*}
 \begin{aligned}
     \text{Cov}\left(\sqrt{n}\tilde\omega^{(k)}, \sum_{i\in[n]} z_i\right)
     =\sqrt n\text{Cov}\left(\frac1n\sum_{i\in[n]}e_i - \frac{1}{n_k}\sum_{i\in\calC_k} e_i,\sum_{i\in[n]}z_i \right)
 \end{aligned}
\end{equation*}
where $e_i=x_i(\nabla A(x_{i,E}\tran \beta_{E,n}) - y_i)$.
The independence between the randomization variables and the remaining variables follows the fact that the right-hand-side in the last display is $\mathbf{0}$.
\end{proof}

\subsection{Proof of Theorem~\ref{sel:event:rep}}
\label{prf: sel:event:rep}

\begin{proof}
The proof for part $(\rom{1})$ is direct and thus omitted.

For part $(\rom{2})$, we start from writing 
$$\sqrt n\omega^{(k)} = \sqrt n\bar\omega^{(k)}+o_p(1),$$  established in Lemma \ref{lem: rand:map}, where
\begin{align*}
\sqrt n\bar\omega^{(k)}&=
\mathbb{T}^{(k)}(\sqrt n\widehat B^{(k)}, \widehat{Z}^{(k)}; \sqrt n\widehat \beta_E, \sqrt n\widehat\beta^\perp_{-E} ).
\end{align*}
Define
$$\widebar W =\begin{pmatrix}\bar\omega^{(1)} \\ \vdots \\ \bar\omega^{(K)}\end{pmatrix}=\bbT(\sqrt n\widehat B,\widehat Z; \sqrt n\widehat\beta_E, \sqrt n\widehat\beta^{\perp}_{-E} ).$$

We begin with $p_{n}$, the Lebesgue density of
$$ \left( \sqrt{n} (\widehat\beta_E-\beta_{E,n}),\; \sqrt n \widehat\beta_{-E}^\perp,\; \sqrt{n}\widebar{W} \right). $$
We then apply the change of variables
$$\sqrt{n}\widebar{W} \to (\sqrt n\widehat B, \widehat Z)$$
through the mapping
$\sqrt n\widebar W =\bbT(\sqrt n\widehat B,\widehat Z; \sqrt n\widehat\beta_E, \sqrt n\widehat\beta^{\perp}_{-E} ) $. 
Because the mapping is linear, the density for
$$ \left( \sqrt{n} (\widehat\beta_E-\beta_{E,n}),\; \sqrt n \widehat\beta_{-E}^\perp,\; \sqrt{n}\widehat B,\; \widehat Z \right)
$$
is proportional to
\begin{align*}
p_n(\sqrt n(\widehat\beta_E - \beta_{E,n}),\; \sqrt n\widehat\beta^\perp_{-E},\; \bbT(\sqrt n\widehat B,\widehat Z; \sqrt n\widehat\beta_E, \sqrt n\widehat\beta^{\perp}_{-E}) ).
\end{align*}
Now, the condition in Assumption \ref{assump: densities} allows us to replace $p_n$ by the limiting Gaussian density in Proposition~\ref{prop: asymp indep} which gives us the corresponding asymptotic density function 
\begin{equation}
\begin{aligned}
 \varphi(\sqrt n\widehat\beta_E;\sqrt n\beta_{E,n},\calI_{E,E}^{-1})  \times
\varphi(\sqrt{n}\widehat\beta_{-E}^\perp; & \mathbf{0},(\calI/\calI_{E,E})^{-1})\\
 &\times \varphi(\bbT(\sqrt n\widehat B,  \widehat Z;\sqrt n\widehat\beta_E, \sqrt n \widehat\beta_{-E}^\perp);\mathbf{0}, \Sigma_{\Omega} ).
 \end{aligned}
\label{equ: prf thm 4.3 density}
\end{equation}
Furthermore, if we condition on $\widehat Z=Z$ and ignore constants, the asymptotic likelihood can be simplified to 
\[
\varphi(\sqrt n\widehat\beta_E; \Pi \sqrt n\beta_{E,n} + \kappa; \Theta)\cdot \varphi(\sqrt n\widehat B; \Psi \sqrt n\widehat\beta_E + \tau;\Gamma),
\]
where $\Pi,\kappa,\Theta,\Psi,\tau,\Gamma$ are defined in Section~\ref{sec: algorithm}. 
See details of the simplification in Lemma~\ref{lem: matrix:simplification} below.

\begin{lemma}[Matrix simplification]
\label{lem: matrix:simplification}
The joint density of $(\sqrt n\widehat\beta_E, \sqrt n\widehat B,\widehat\beta^\perp_{-E})$ when conditioned on $\widehat Z= Z$ is equal to
\[
\varphi(\sqrt n\widehat\beta_E; \Pi \sqrt n\beta_{E,n} + \kappa; \Theta)\cdot \varphi(\sqrt n\widehat B; \Psi \sqrt n\widehat\beta_E + \tau;\Gamma)\cdot \varphi(\sqrt n\widehat\beta^\perp_{-E};\mathbf{0},(\calI/\calI_{E,E})^{-1} ).
\]
\end{lemma}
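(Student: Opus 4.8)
The plan is to treat the displayed three-factor density in \eqref{equ: prf thm 4.3 density} as a Gaussian in the stacked vector $(\sqrt n\widehat\beta_E,\ \sqrt n\widehat\beta^\perp_{-E},\ \sqrt n\widehat B,\ \widehat Z)$, obtained by pushing the independent Gaussian $\sqrt n\Omega$ through the affine change of variables $\bbT$, and then to condition on $\widehat Z=Z$ and reorganize the resulting quadratic form by completing the square. First I would recall the explicit affine form of $\bbT$ established in Lemma~\ref{lem: rand:map}: for each machine $k$, the block of $\sqrt n\bar\omega^{(k)}$ indexed by the active set $E^{(k)}$ is an affine function of $\sqrt n\widehat B^{(k)}$, $\sqrt n\widehat\beta_E$ and the subgradient piece $g^{(k)}_k$, while the block indexed by the inactive set $-E^{(k)}$ is an affine function of $\widehat Z^{(k)}$ and $\sqrt n\widehat\beta^\perp_{-E}$. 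Writing out these blocks explicitly, with the linearization coefficients given by the relevant submatrices of $\calI$, is the first bookkeeping step.

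Next I would substitute this affine map into $\varphi(\,\cdot\,;\mathbf 0,\Sigma_\Omega)$ and expand the quadratic form $\sqrt n\Omega\tran\Sigma_\Omega^{-1}\sqrt n\Omega$ using the Kronecker factorization $\Sigma_\Omega^{-1}=U^{-1}\otimes\calI^{-1}$. The matrix $U^{-1}$ is computed in closed form by the Sherman--Morrison formula applied to $U=\diag(\rho_1^{-1},\ldots,\rho_K^{-1})-\mathbf 1_{K\times K}$; here the identity $\mathbf 1\tran\diag(\rho_1,\ldots,\rho_K)\mathbf 1=\sum_{k\in[K]}\rho_k=1-\rho_0$ yields $\{U^{-1}\}_{kk}=\rho_k+\rho_k^2/\rho_0$ and $\{U^{-1}\}_{jk}=\rho_j\rho_k/\rho_0$ for $j\neq k$. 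Contracting these entries against the $\calI$-submatrices coming from $\bbT$ reproduces exactly the block definitions of $\Gamma^{-1}$, $\Gamma^{-1}\Psi$ and $\Gamma^{-1}\tau$ from Section~\ref{sec: algorithm}, which is the computation that ties the abstract covariance to the matrices appearing in the claim.

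Finally, having collected the quadratic form in $(\sqrt n\widehat\beta_E,\ \sqrt n\widehat B)$ with $\widehat Z$ fixed at $Z$ and $\sqrt n\widehat\beta^\perp_{-E}$ carried along, I would complete the square in $\sqrt n\widehat B$ to peel off the conditional Gaussian $\varphi(\sqrt n\widehat B;\Psi\sqrt n\widehat\beta_E+\tau,\Gamma)$, and then complete the square again in $\sqrt n\widehat\beta_E$, combining the leftover terms with the original factor $\varphi(\sqrt n\widehat\beta_E;\sqrt n\beta_{E,n},\calI_{E,E}^{-1})$ to obtain the marginal $\varphi(\sqrt n\widehat\beta_E;\Pi\sqrt n\beta_{E,n}+\kappa,\Theta)$; the identities $\Theta^{-1}=\rho_0^{-1}\calI_{E,E}-\Psi\tran\Gamma^{-1}\Psi$, $\Theta^{-1}\Pi=\calI_{E,E}$ and $\Theta^{-1}\kappa=\Psi\tran\Gamma^{-1}\tau+\sum_{j=1}^K(\rho_j/\rho_0)g^{(j)}$ are what must emerge from this step. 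I expect the main obstacle to be precisely this bookkeeping: keeping the active/inactive block partition straight through the Kronecker contraction, and in particular verifying that after conditioning on $\widehat Z=Z$ all dependence on $\sqrt n\widehat\beta^\perp_{-E}$ stays confined to the inactive blocks and cancels against the middle factor, so that the two surviving factors are genuinely free of $\widehat\beta^\perp_{-E}$ as the lemma asserts.
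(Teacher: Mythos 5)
Your proposal follows essentially the same route as the paper's proof: both start from the three-factor density in \eqref{equ: prf thm 4.3 density}, substitute the affine map $\sqrt n\widebar W=-\bbQ_1\sqrt n\widehat\beta_E+\bbQ_2\sqrt n\widehat B+\bfr$ from Lemma~\ref{lem: rand:map}, complete the square first in $\sqrt n\widehat B$ (yielding $\Gamma^{-1}=\bbQ_2\tran\Sigma_\Omega^{-1}\bbQ_2$, $\Psi$, $\tau$) and then in $\sqrt n\widehat\beta_E$ (yielding $\Theta$, $\Pi$, $\kappa$), and finally use the Kronecker structure $\Sigma_\Omega^{-1}=U^{-1}\otimes\calI^{-1}$ with exactly the $U^{-1}$ entries you derive via Sherman--Morrison, together with the union-rule fact $J_{E^{(k)}}\bigl(\mathbf{0}_E;\widehat\beta^\perp_{-E}\bigr)=\mathbf{0}$, to reduce to the Section~\ref{sec: algorithm} expressions. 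The only difference is cosmetic (you expand the Kronecker contraction before completing squares, the paper after), so the two arguments coincide.
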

\begin{proof}[Proof of Lemma~\ref{lem: matrix:simplification}.]
Denote
\begin{align}
\bbQ_1=\begin{pmatrix}
\calI_{\cdot,E} \\ \vdots \\ \calI_{\cdot,E}
\end{pmatrix},\quad
\bbQ_2=\begin{pmatrix}
\calI_{\cdot,E^1} & \mathbf{0} & \ldots & \mathbf{0} \\ \mathbf{0} & \calI_{\cdot,E^2} & \ldots & \mathbf{0}\\  &  & \ddots & \\  \mathbf{0} & \ldots & \mathbf{0} & \calI_{\cdot,E^K}
\end{pmatrix}.
\label{equ: def Q_1 Q_2}
\end{align}
Let $r^{(k)}=\begin{pmatrix} S^{(k)} \\ \widehat{Z}^{(k)} \end{pmatrix} + \begin{pmatrix} \mathbf{0} \\ \sqrt n \widehat\beta_{-E}^\perp\end{pmatrix} $, and let $\bfr$ be the stack of $r^{(1)},\ldots,r^{(K)} $.
Observe, the mapping $\bbT$ in the proof of Theorem~\ref{sel:event:rep}  can be written as
\begin{equation}
\sqrt n\widebar W=-\bbQ_1 \sqrt n\widehat\beta_E + \bbQ_2 \sqrt n\widehat B + \bfr.
\label{rand:map:matrixform}
\end{equation}
It follows from Equation~\eqref{equ: prf thm 4.3 density} that the joint density of $(\sqrt n\widehat\beta_E,\sqrt n\widehat B)$ after conditioning on $\widehat{Z}=Z$ is proportional to
\begin{align*}
&\Exp{-\frac12 (\sqrt n\widehat\beta_E-\sqrt n\beta_{E,n})\tran \calI_{E,E} (\sqrt n\widehat\beta_E-\sqrt n\beta_{E,n})\right.\\
&\qquad\qquad\left.-\frac12 (-\bbQ_1\sqrt n\widehat\beta_E+\bbQ_2 \sqrt n\widehat B +\bfr)\tran\Sigma_{\Omega}^{-1} (-\bbQ_1\sqrt n\widehat\beta_E+\bbQ_2 \sqrt n\widehat B +\bfr)}\\
&\propto \Exp{-\frac{1}{2}(\sqrt n\widehat B)\tran \bbQ_2\tran\Sigma_{\Omega}^{-1}\bbQ_2(\sqrt n\widehat B)+ (\sqrt n\widehat B)\tran \bbQ_2\tran \Sigma_{\Omega}^{-1}(\bbQ_1\sqrt n\widehat\beta_E-\bfr)\right.\\
&\qquad\qquad\left. -\frac{1}{2}(\sqrt n\widehat\beta_E)\tran (\calI_{E,E}+\bbQ_1\tran \Sigma_\Omega^{-1}\bbQ_1 )(\sqrt n\widehat\beta_E) + (\sqrt n\widehat\beta_E)\tran(\calI_{E,E} \sqrt n\beta_{E,n}+\bbQ_1\tran \Sigma_{\Omega}^{-1}\bfr)}.
\end{align*}

For $\Gamma^{-1}=\bbQ_2\tran \Sigma_{\Omega}^{-1}\bbQ_2$, $\Psi=\Gamma \bbQ_2\tran\Sigma_{\Omega}^{-1}\bbQ_1$, $\tau=-\Gamma\bbQ_2\tran \Sigma_{\Omega}^{-1} \bfr $, we observe that this likelihood is proportional to
\begin{align*}
\varphi(\sqrt n\widehat B;\Psi\sqrt n\widehat\beta_E+\tau,\Gamma)&\cdot \Exp{-\frac{1}{2}(\sqrt n\widehat\beta_E)\tran(\calI_{E,E}+\bbQ_1\tran\Sigma_\Omega^{-1}\bbQ_1- \Psi\tran\Gamma^{-1}\Psi)(\sqrt n\widehat\beta_E)\right.\\
&\qquad\left.+(\sqrt n\widehat\beta_E)\tran(\Psi\tran\Gamma^{-1}\tau+\calI_{E,E}\sqrt n\beta_E+\bbQ_1\tran\Sigma_{\Omega}^{-1}\bfr) }.
\end{align*}
The likelihood function in the last display is proportional to
\begin{align*}
\varphi(\sqrt n\widehat\beta_E; \Pi \sqrt n\beta_E + \kappa; \Theta)\cdot \varphi(\sqrt n\widehat B; \Psi \sqrt n\widehat\beta_E + \tau;\Gamma )
\end{align*}
for $\Theta^{-1}=\calI_{E,E}+\bbQ_1\tran\Sigma_{\Omega}^{-1}\bbQ_1-\Psi\tran\Gamma^{-1}\Psi$, $\Pi=\Theta \calI_{E,E}$, $\kappa=\Theta(\Psi\tran\Gamma^{-1}\tau+\bbQ_1\tran\Sigma_{\Omega}^{-1}\bfr)$.

To further simplifying the matrices in the likelihood, we note that $\Sigma_{\Omega}^{-1}=U^{-1}\otimes \calI^{-1}$.
Therefore, we can write $\bbQ_2\tran \Sigma_{\Omega}^{-1}\bbQ_2$ in the form of $K\times K$ blocks, where the $(j,k)$-block is
\begin{align*}
\left(\frac{\rho_j\rho_k}{\rho_0}+\rho_j\delta_{j,k}\right) \calI_{E^{(j)},E^{(k)}}.
\end{align*}
Similarly, $\bbQ_2\tran\Sigma_{\Omega}^{-1}\bfr$ has $K$ blocks where the $k$-th block is
\begin{align*}
\rho_kJ_{E^{(k)}}\bfr^{(k)} +\frac{\rho_k}{\rho_0}\sum_{j=1}^K \rho_j J_{E^{(k)}} \bfr^{(j)}.
\end{align*}
Since $J_{E^{(k)}} \begin{pmatrix}0_{E}\\ \widehat\beta_{-E}^{\perp}\end{pmatrix}=0 $ and $J_{E^{(k)}}\gamma^{(j)}=g_k^{(j)}$, the above display is equal to $\rho_k g_k^{(k)}+(\rho_k/\rho_0)\sum_{j\in[K]} \rho_j g_k^{(j)} $. 
Similarly, $\bbQ_1\tran\Sigma_{\Omega}^{-1}\bfr=\sum_{k\in[K]}\frac{\rho_k}{\rho_0} J_E\gamma^{(k)} =\sum_{k\in[K]}\frac{\rho_k}{\rho_0}g^{(k)} $ and
\begin{align*}
\bbQ_1\tran\Sigma_{\Omega}^{-1}\bbQ_1=\left(\sum_{k\in[K]}\rho_k +\sum_{j,k\in[K]}\frac{\rho_j\rho_k}{\rho_0} \right)\calI_{E,E}=\frac{1-\rho_0}{\rho_0}\calI_{E,E}.
\end{align*}
Thus $\Theta^{-1}=\frac{1}{\rho_0}\calI_{E,E}-\Psi\tran \Gamma^{-1}\Psi$.

\end{proof}

\begin{remark}
\label{rmk on general aggregation}
We note that when using the union aggregation rule, the quantities $\Pi,\kappa,\Theta,\Psi,\tau,\Gamma$ do not depend on $\widehat\beta^\perp_{-E}$.
If $E^{(k)}$ is not necessarily a subset of $E$,
then
\begin{align*}
J_{E^{(k)}} \begin{pmatrix}0_{E}\\ \widehat\beta_{-E}^{\perp}\end{pmatrix}=\begin{pmatrix}
\mathbf{0}_{ E^{(k)}\cap E } \\ \widehat\beta^{\perp}_{E^{(k)}\setminus E },
\end{pmatrix}
\end{align*}
and thus
\[
J_{E^{(k)}}\bfr^{(j)} = J_{E^{(k)}} \gamma^{(j)}  + \begin{pmatrix}
\mathbf{0}_{ E^{(k)}\cap E } \\ \widehat\beta^{\perp}_{E^{(k)}\setminus E }
\end{pmatrix}. 
\]
But, we only need to re-define $g_k^{(j)}$ as 
\[
g^{(j)}_k =J_{E^{(k)}} \gamma^{(j)} + \begin{pmatrix} \mathbf{0}_{E^{(k)}\cap E } \\ \widehat\beta^{\perp}_{E^{(k)} \setminus E }  \end{pmatrix},
\]
and this change only affects $\tau$. 
For the complete procedure when using general aggregation rules, see Appendix~\ref{sec: general aggregation}.
\end{remark}

\end{proof}

\subsection{Supporting Results}
\label{appendix:supp}

\begin{lemma}
\label{lem: property of beta*}
Let $\beta^*_{E^{(k)}, n}=\calI_{E^{(k)},E^{(k)}}^{-1}\calI_{E^{(k)},E}\beta_{E,n} $. Under Assumption~\ref{assump: missed variables}, 
\begin{align*}
    X_{E^{(k)}} \beta^{*}_{E^{(k)}, n} - X_E \beta_{E,n} &=O_p(n^{-1/2}),\text { and }\\
\widehat\beta_{E^{(k)}}^{(k),\Lambda}-\beta^*_{E^{(k)}, n}&=O_p(n^{-1/2}).
\end{align*}
\end{lemma}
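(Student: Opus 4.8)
The plan is to reduce the claim to an algebraic identity and then invoke Assumption~\ref{assump: missed variables} termwise. Writing $\widetilde E^{(k)}=E\setminus E^{(k)}$ and splitting $\beta_{E,n}$ into its subvectors on $E^{(k)}$ and $\widetilde E^{(k)}$, I would use
\[
\calI_{E^{(k)},E}\beta_{E,n}=\calI_{E^{(k)},E^{(k)}}\beta_{E^{(k)},n}+\calI_{E^{(k)},\widetilde E^{(k)}}\beta_{\widetilde E^{(k)},n},
\]
so that $\beta^*_{E^{(k)},n}=\beta_{E^{(k)},n}+\calI_{E^{(k)},E^{(k)}}^{-1}\calI_{E^{(k)},\widetilde E^{(k)}}\beta_{\widetilde E^{(k)},n}$. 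Substituting this into $X_{E^{(k)}}\beta^*_{E^{(k)},n}-X_E\beta_{E,n}$, the $E^{(k)}$-part cancels and I am left with
\[
X_{E^{(k)}}\beta^*_{E^{(k)},n}-X_E\beta_{E,n}=\sum_{j\in\widetilde E^{(k)}}\big(X_{E^{(k)}}\calI_{E^{(k)},E^{(k)}}^{-1}\calI_{E^{(k)},j}-X_j\big)\beta_{j,n}.
\]
For each $j\in\widetilde E^{(k)}$, Assumption~\ref{assump: missed variables} offers two cases: either the vector $X_{E^{(k)}}\calI_{E^{(k)},E^{(k)}}^{-1}\calI_{E^{(k)},j}-X_j$ is exactly zero, or $\beta_{j,n}=O(n^{-1/2})$ while that vector is $O_p(1)$ entrywise (bounded moments of $x_i$). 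In either case each summand is $O_p(n^{-1/2})$ entrywise, which settles the first claim.

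\textbf{Second statement.} I would work on the selection event $\{\widehat E^{(k)}=E^{(k)},\ \widehat S^{(k)}=S^{(k)}\}$, on which the restricted stationarity condition of~\eqref{GL:regression} holds. With the average score $\bar g^{(k)}(\beta)=\frac1{n_k}(X^{(k)})\tran(\nabla A(X^{(k)}\beta)-Y^{(k)})$, and noting $\nabla\ell^{(k)}(\beta)=\sqrt n\,\bar g^{(k)}(\beta)$ since $n_k=\rho_k n$, the active part of the stationarity condition reads $\bar g^{(k)}_{E^{(k)}}(\widehat\beta^{\Lambda,(k)})=-\tfrac1{\sqrt n}\Lambda_{E^{(k)}}S^{(k)}=O(n^{-1/2})$, because $\Lambda=O(1)$. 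The crux is to show the score at the anchor $\beta^*_{E^{(k)},n}$ is equally small. Its mean is $\EE{x_{i,E^{(k)}}(\nabla A(x_{i,E^{(k)}}\tran\beta^*_{E^{(k)},n})-y_i)}$; using $E^{(k)}\subseteq E$ together with the first-order condition defining $\beta_{E,n}$, namely $\EE{x_{i,E^{(k)}} y_i}=\EE{x_{i,E^{(k)}}\nabla A(x_{i,E}\tran\beta_{E,n})}$, this mean equals
\[
\EE{x_{i,E^{(k)}}\big(\nabla A(x_{i,E^{(k)}}\tran\beta^*_{E^{(k)},n})-\nabla A(x_{i,E}\tran\beta_{E,n})\big)}.
\]
A first-order expansion of $\nabla A$ combined with the first statement, which bounds $x_{i,E^{(k)}}\tran\beta^*_{E^{(k)},n}-x_{i,E}\tran\beta_{E,n}=O_p(n^{-1/2})$, shows this mean is $O(n^{-1/2})$, while the centered score is $O_p(n_k^{-1/2})=O_p(n^{-1/2})$ by the central limit theorem; hence $\bar g^{(k)}_{E^{(k)}}(\beta^*_{E^{(k)},n})=O_p(n^{-1/2})$.

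To finish, I would Taylor expand $\bar g^{(k)}_{E^{(k)}}$ around $\beta^*_{E^{(k)},n}$, whose empirical Hessian $\frac1{n_k}(X^{(k)}_{E^{(k)}})\tran\diag(\nabla^2 A(\cdot))X^{(k)}_{E^{(k)}}$ converges to $\calI_{E^{(k)},E^{(k)}}\succ 0$, and invert it, giving
\[
\widehat\beta^{\Lambda,(k)}_{E^{(k)}}-\beta^*_{E^{(k)},n}=\widehat H^{-1}\big(\bar g^{(k)}_{E^{(k)}}(\widehat\beta^{\Lambda,(k)})-\bar g^{(k)}_{E^{(k)}}(\beta^*_{E^{(k)},n})\big)=O_p(n^{-1/2}),
\]
where $\widehat H$ is the Hessian at an intermediate point. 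I expect the bias computation in the previous paragraph to be the delicate step: it is where the first statement, the first-order condition for $\beta_{E,n}$, and the smoothness of $A$ must be combined to certify that $\beta^*_{E^{(k)},n}$ is an $O(n^{-1/2})$-approximate root of the population score rather than merely a consistent one. The inversion additionally relies on standard regularity—positive-definiteness of $\calI_{E^{(k)},E^{(k)}}$, uniform consistency of the empirical Hessian on a neighborhood, and bounded moments of $x_i$—which I would invoke as background M-estimation conditions rather than reprove; and the $O_p$ claims should be read on the selection event, where alone the restricted stationarity equation is available.
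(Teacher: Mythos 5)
Your proof is correct and follows essentially the same route as the paper's: the same decomposition of $\beta_{E,n}$ over $E^{(k)}$ and $E\setminus E^{(k)}$ (with the $E^{(k)}$-part cancelling exactly and Assumption~\ref{assump: missed variables} applied termwise on $E\setminus E^{(k)}$) for the first claim, and the same KKT-plus-Taylor argument around $\beta^*_{E^{(k)},n}$ — small score at both the Lasso solution and at $\beta^*_{E^{(k)},n}$ (via the first claim, the population first-order condition for $\beta_{E,n}$, and the CLT), then inversion of the Hessian converging to $\calI_{E^{(k)},E^{(k)}}\succ 0$ — for the second. The only cosmetic difference is that you use the $n_k^{-1}$-normalized machine-$k$ score, where the stationarity condition holds exactly, whereas the paper phrases the argument with a $\sqrt{n}$-normalized score.
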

\begin{proof}
Define
\begin{align*}
\nabla \ell_{E^{(k)}}(\beta)=\frac{1}{\sqrt n}\sum_{i\in[n]} x_{i,E^{(k)}}(\nabla A(x_{i,E^{(k)}}\tran \beta_{E^{(k)}}) - y_i )
\end{align*}
as the gradient of log-likelihood w.r.t. $\beta_{E^{(k)}}$. Then by the optimality of Lasso solution, we have
\begin{align*}
\nabla \ell_{E^{(k)}}(\widehat\beta^{\Lambda,(k)}_{E^{(k)}})=-\Lambda_{E^{(k)}} s_k.
\end{align*}
Taking a Taylor expansion of $\nabla\ell_{E^{(k)}}(\widehat\beta^{(k)}_{E^{(k)}})$ at $\beta^*_{E^{(k)}, n}$, we get
\begin{equation}
\begin{aligned}
-\Lambda_{E^{(k)}} s_k &=\nabla \ell_{E^{(k)}}(\widehat\beta^{\Lambda,(k)}_{E^{(k)}})=\nabla \ell_{E^{(k)}}(\beta^*_{E^{(k)}, n}) + \nabla^2\ell_{E^{(k)}}(\beta^*_{E^{(k)}, n})(\widehat\beta^{\Lambda,(k)}_{E^{(k)}} - \beta^*_{E^{(k)}, n} )\\
&+ o(\|\widehat\beta^{\Lambda,(k)}_{E^{(k)}} - \beta^*_{E^{(k)}}\|_2 ).
\label{equ: expansion at beta*}
\end{aligned}
\end{equation}
Note that
\begin{align*}
X_{E^{(k)}} \beta^{*}_{E^{(k)}} -X_E  \beta_E &=(X_{E^{(k)}}\calI_{E^{(k)},E^{(k)}}^{-1}\calI_{E^{(k)},E} - X_E )\beta_E\\
&=\left(
X_{E^{(k)}}\calI_{E^{(k)},E^{(k)}}^{-1}\calI_{E^{(k)},E\setminus E^{(k)} } - X_{E\setminus E^{(k)}} \right) \beta_{E\setminus E^{(k)}}+\\
&\qquad \left(X_{E^{(k)}}\calI_{E^{(k)},E^{(k)}}^{-1}\calI_{E^{(k)},E\cap E^{(k)} } - X_{E\cap E^{(k)}}\right)\beta_{E\cap E^{(k)} }\\
&=\left(
X_{E^{(k)}}\calI_{E^{(k)},E^{(k)}}^{-1}\calI_{E^{(k)},E\setminus E^{(k)} } - X_{E\setminus E^{(k)}} \right) \beta_{E\setminus E^{(k)}}
\end{align*}

By Assumption~\ref{assump: missed variables}, if $j\in(E\setminus E^{(k)})\setminus \calE_k$,
\[
X_{E^{(k)}} \calI_{E^{(k)},E^{(k)}}^{-1}\calI_{E^{(k)},j} - X_{j}=0.
\]
Since $\beta_{\calE_k}=O(n^{-1/2})$, we have
\begin{align*}
    X_{E^{(k)}} \beta^{*}_{E^{(k)},n} - X_E \beta_E &=(X_{E^{(k)}} \calI_{E^{(k)},E^{(k)}}^{-1}\calI_{E^{(k)},\calE_k} - X_{\calE_k}) \beta_{\calE_k}=O_p(n^{-1/2})
\end{align*}

Note that
\begin{align*}
\nabla\ell_{E^{(k)}}(\beta^*_{E^{(k)}})&=\frac{1}{\sqrt n}\sum_{i\in[n]}x_{i,E^{(k)}}(\nabla A(x_{i,E_k}\tran\beta^*_{E^{(k)}, n}) - y_i )\\
&=\frac{1}{\sqrt n}\sum_{i\in[n]}x_{i,E^{(k)}}(\nabla A(x_{i,E_k}\tran\beta^*_{E^{(k)}, n}) - \nabla A(x_{i,E}\tran \beta_{E,n}) ) \\
&+ \frac{1}{\sqrt n}\sum_{i\in[n]}x_{i,E^{(k)}}(\nabla A(x_{i,E}\tran\beta_{E,n}) - y_i )\\
&=\frac{1}{\sqrt n} X_{E^{(k)}}\tran \diag(\nabla^2 A(X_E\beta_E)) (X_{E^{(k)}} \beta^{*}_{E^{(k)},n} - X_E \beta_E )\\
&\quad +\frac{1}{\sqrt n}\sum_{i\in[n]}x_{i,E^{(k)}}(\nabla A(x_{i,E}\tran\beta_{E, n}) - y_i )+ o_p(1)\\
&=\sqrt n O(n^{-1/2}) + O_p(1) + o_p(1)\\
&=O_p(1).
\end{align*}

Moreover,
\begin{align*}
\frac{1}{\sqrt n}\nabla^2\ell_{E^{(k)}}(\beta^*_{E^{(k)}})&=\frac{1}{n}X_{E^{(k)}} \tran \diag(\nabla^2 A(X_{E^{(k)}}\beta^*_{E^{(k)}} )) X_{E^{(k)}}\\
&=\frac1n X_{E^{(k)}}\tran \diag(\nabla^2 A(X_E \beta_E)) X_{E^{(k)}} + o_p(1)\\
&=\calI_{E^{(k)},E^{(k)}} + o_p(1).
\end{align*}
Substituting into Equation~\eqref{equ: expansion at beta*} gives
\begin{align*}
\sqrt n(\widehat\beta^{\Lambda,(k)}_{E^{(k)}} - \beta^*_{E^{(k)}, n}) = O_p(1).
\end{align*}
\end{proof}

\begin{lemma}\label{lem: rand:map}
The randomization variables have the expression
\begin{align*}
\sqrt n\omega^{(k)} = \sqrt n\bar\omega^{(k)} + o_p(1),
\end{align*}
where
\begin{align*}
\sqrt n\bar\omega^{(k)}&=
\mathbb{T}^{(k)}(\sqrt n\widehat B^{(k)}, \widehat{Z}^{(k)}; \sqrt n\widehat \beta_E, \sqrt n\widehat\beta^\perp_{-E} )\\
&=\calI_{\cdot,E^{(k)}} \sqrt n\widehat B^{(k)} - \calI_{\cdot, E}\sqrt n\widehat\beta_E + \Lambda \begin{pmatrix} S^{(k)} \\ \widehat{Z}^{(k)} \end{pmatrix} + \begin{pmatrix} \mathbf{0} \\ \sqrt n \widehat\beta_{-E}^\perp \end{pmatrix}.
\end{align*}
\end{lemma}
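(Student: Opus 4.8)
The plan is to convert $\sqrt n\,\omega^{(k)}$ into a score-type statistic through the stationarity condition of the randomized program \eqref{GL:regression:randomized}, and then to linearize that statistic so that it becomes the stated affine function of $\sqrt n\widehat B^{(k)}$, $\widehat Z^{(k)}$, $\sqrt n\widehat\beta_E$ and $\sqrt n\widehat\beta^\perp_{-E}$. Substituting the subgradient relation $-\nabla\ell^{(k)}(\widehat\beta^{\Lambda,(k)};D^{(k)})=\gamma^{(k)}$, equivalently $\tfrac{1}{n_k}X^{(k),\intercal}(\nabla A(X^{(k)}\widehat\beta^{\Lambda,(k)})-Y^{(k)})=-\tfrac{1}{\sqrt n}\gamma^{(k)}$, into the definition \eqref{randomization:distributed} yields the exact identity
\[
\sqrt n\,\omega^{(k)}=\tfrac{1}{\sqrt n}X\tran\!\left(\nabla A(X\widehat\beta^{\Lambda,(k)})-Y\right)+\gamma^{(k)} .
\]
Since $\widehat\beta^{\Lambda,(k)}$ is supported on $E^{(k)}$ we may replace $X\widehat\beta^{\Lambda,(k)}$ by $X_{E^{(k)}}\widehat B^{(k)}$, and $\gamma^{(k)}=\Lambda(S^{(k)};\widehat Z^{(k)})$ is already the third summand of $\bbT^{(k)}$. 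It therefore suffices to show that $\tfrac{1}{\sqrt n}X\tran(\nabla A(X_{E^{(k)}}\widehat B^{(k)})-Y)$ equals $\calI_{\cdot,E^{(k)}}\sqrt n\widehat B^{(k)}-\calI_{\cdot,E}\sqrt n\widehat\beta_E+(\mathbf 0;\sqrt n\widehat\beta^\perp_{-E})$ up to $o_p(1)$.

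I would prove this row block by row block, splitting the rows into $E$ and $-E$ and, in each block, adding and subtracting $\nabla A(X_E\widehat\beta_E)$. In the $-E$ block the piece $\tfrac1{\sqrt n}X_{-E}\tran(\nabla A(X_E\widehat\beta_E)-Y)$ is exactly $\sqrt n\widehat\beta^\perp_{-E}$ by the definition in Proposition \ref{prop: asymp indep}, which furnishes the only nonzero entry of $(\mathbf 0;\sqrt n\widehat\beta^\perp_{-E})$. In the $E$ block the analogous piece $\tfrac1{\sqrt n}X_E\tran(\nabla A(X_E\widehat\beta_E)-Y)$ is $o_p(1)$: a first-order expansion of $\nabla A$ at $\beta_{E,n}$ together with the asymptotically linear representation $\sqrt n(\widehat\beta_E-\beta_{E,n})=-\calI_{E,E}^{-1}\nabla\ell(\beta_E)+o_p(1)$ of Assumption \ref{assump: glm regularity} shows that the full-data score evaluated at the aggregated MLE vanishes to first order. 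In both blocks the common remainder $\tfrac1{\sqrt n}X_\bullet\tran(\nabla A(X_{E^{(k)}}\widehat B^{(k)})-\nabla A(X_E\widehat\beta_E))$ survives; expanding $\nabla A$ to first order around $X_E\beta_{E,n}$ with $W=\diag(\nabla^2 A(X_E\beta_{E,n}))$, and using $X_{E^{(k)}}\widehat B^{(k)}-X_E\widehat\beta_E=O_p(n^{-1/2})$ from Lemma \ref{lem: property of beta*} to absorb the quadratic term, reduces it to $\tfrac1n X_\bullet\tran W X_{E^{(k)}}\,\sqrt n\widehat B^{(k)}-\tfrac1n X_\bullet\tran W X_E\,\sqrt n\widehat\beta_E+o_p(1)$.

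The main obstacle is the last step: replacing the empirical matrices $\tfrac1n X_\bullet\tran W X_{E^{(k)}}$ and $\tfrac1n X_\bullet\tran W X_E$ by the population Fisher information $\calI_{\bullet,E^{(k)}}$ and $\calI_{\bullet,E}$. This is delicate because $\sqrt n\widehat B^{(k)}$ and $\sqrt n\widehat\beta_E$ diverge, so the crude bound $(\text{empirical}-\text{population})\times\sqrt n\widehat B^{(k)}=O_p(n^{-1/2})\cdot O_p(\sqrt n)$ is only $O_p(1)$, not $o_p(1)$. The way around it is to center: writing $\widehat B^{(k)}=\beta^*_{E^{(k)},n}+O_p(n^{-1/2})$ and $\widehat\beta_E=\beta_{E,n}+O_p(n^{-1/2})$, the $O_p(1)$ fluctuations are harmless against the $O_p(n^{-1/2})$ matrix error, and the divergent centers enter only through the misalignment $u=X_{E^{(k)}}\beta^*_{E^{(k)},n}-X_E\beta_{E,n}$. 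Here Assumption \ref{assump: missed variables} is indispensable: it forces $u$ into the form $u=M\beta_{\calE_k,n}$ with $\beta_{\calE_k,n}=O(n^{-1/2})$ and a bounded matrix $M$, whence the centered average $\tfrac1n X_\bullet\tran W u-\EE{\tfrac1n X_\bullet\tran W u}$ is $O_p(n^{-1})$ and its $\sqrt n$-scaled contribution is $o_p(1)$. This is precisely the cancellation mechanism already exploited in Lemmas \ref{lem: R_1 - r_1} and \ref{lem: R_2 - r_2}; stacking the two row blocks then assembles $\sqrt n\bar\omega^{(k)}$ and closes the proof.
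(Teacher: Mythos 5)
Your proposal is correct and follows essentially the same route as the paper's own proof: start from the exact K.K.T. identity $\sqrt n\omega^{(k)}=\frac{1}{\sqrt n}X\tran(\nabla A(X\widehat\beta^{\Lambda,(k)})-Y)+\gamma^{(k)}$, add and subtract $\nabla A(X_E\widehat\beta_E)$, Taylor-expand at $X_E\beta_{E,n}$, and replace the empirical weighted Gram matrices by the Fisher-information blocks $\calI_{\cdot,E^{(k)}}$ and $\calI_{\cdot,E}$. You additionally spell out two error controls the paper leaves implicit -- the vanishing of the full-data score at the aggregated MLE via Assumption \ref{assump: glm regularity}, and the centering-plus-Assumption \ref{assump: missed variables} argument (the same mechanism as in Lemmas \ref{lem: R_1 - r_1} and \ref{lem: R_2 - r_2}) needed to justify the matrix replacement against the diverging $\sqrt n\widehat B^{(k)}$ -- which is a faithful reconstruction rather than a departure.
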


\begin{proof}
The K.K.T. conditions of stationarity for the Lasso on machine $k$ are summarized by
\begin{equation}
\begin{aligned}
\sqrt n\omega^{(k)}&=  \frac{1}{\sqrt n}X\tran(\nabla A(X \widehat\beta^{(k),\Lambda}) - Y) + \widehat\gamma^{(k)}\\
&=\frac1{\sqrt n} X\tran( \nabla A(X_E \widehat\beta_E) - Y  + \nabla A(X_{E^{(k)}} \widehat B^{(k)}) -  \nabla A(X_{E}\widehat\beta_E) ) + \widehat\gamma^{(k)}\\
&=\begin{pmatrix}
\mathbf{0} \\ \sqrt n \widehat\beta_{-E}^\perp
\end{pmatrix} +
\frac{1}{\sqrt n}X\tran \diag(\nabla^2 A(X_E\beta_{E,n})) X_{E^{(k)}} \widehat B^{(k)}  \\ 
&\qquad\qquad -\frac{1}{\sqrt n}X\tran \diag(\nabla^2 A(X_E\beta_{E,n})) X_E\widehat\beta_E +\widehat\gamma^{(k)}+o_p(1)\\
&=\calI_{\cdot,E^{(k)}} \sqrt n\widehat B^{(k)} - \calI_{\cdot,E} \sqrt n\widehat\beta_E + \Lambda \begin{pmatrix} S^{(k)} \\ \widehat{Z}^{(k)} \end{pmatrix} + \begin{pmatrix}
\mathbf{0} \\ \sqrt n \widehat\beta_{-E}^\perp
\end{pmatrix} \\
&= \sqrt{n} \bar{w}^{(k)} + o_p(1).
\end{aligned}
\label{rand:map}
\end{equation}

\end{proof}

\section{Proofs for Section~\ref{sec: asymp sel likelihood}}

\subsection{Proof of Theorem \ref{thm:consistency}}
Before proving Theorem \ref{thm:consistency}, we provide a supporting result in Lemma \ref{MDP:V}.

First, let $s^{(k)}= \begin{pmatrix} g_k^{(k)} \\ \mathbf{0} \end{pmatrix} \in \mathbb{R}^p$ be the active components of the subgradient vector $\gamma^{(k)}$ padded with a vector of all zeros.
Now, let $\bfs \in \mathbb{R}^{pK}$ be formed by stacking the vectors $s^{(k)}$ for $k\in [K]$, and let $G \in \mathbb{R}^{\bar{d}}$ be formed by stacking $g_k^{(k)}$ for $k\in [K]$.
Recall that $\widebar{W}$ is formed by stacking the vectors $\bar{\omega}^{(k)}$, $k\in [K]$ that we previously defined in Lemma \ref{lem: rand:map}. 
Suppose that 
$$\sqrt{n}\bar\Omega = \sqrt{n}\widebar{W} - \bfs.$$

\begin{lemma}
Define
$$V_n = \begin{pmatrix}\sqrt{n}\widehat{\beta}_E \\ \sqrt{n}\widehat{\beta}^\perp_{-E} \\ \sqrt{n}\bar\Omega \end{pmatrix}.$$
Let $\mathcal{S}$ be a convex subset of $\mathbb{R}^{p(K+1)}$.
Under the conditions in Assumption \ref{as:1} and Assumption \ref{as:2}, we have
\begin{equation*}
\begin{aligned} 
& \displaystyle\lim_{n\to \infty} -\dfrac{1}{a_n^2} \log \mathbb{P}\left[ \frac{1}{a_n} V_n \in \mathcal{S}\right]=\inf_{b, b^{\perp}, \omega \in \mathcal{S}}R(b, b^{\perp}, \omega),
\end{aligned}
\end{equation*}
where 
$$R(b, b^{\perp}, \omega)=  \Bigg\{  \frac{1}{2}(b - \beta_E)\tran \calI_{E,E}(b - \beta_E) + \frac{1}{2}(b^\perp)\tran(\calI\setminus \calI_{E,E})^{-1}b^\perp + \frac{1}{2}\omega\tran \Sigma_{\Omega}^{-1}\ \omega\Bigg\}.$$
\label{MDP:V}
\end{lemma}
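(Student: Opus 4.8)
The plan is to recognize this claim as a moderate deviations principle (MDP) at speed $a_n^2$ for the shifted, standardized i.i.d.\ average underlying $V_n$, and then to transfer that MDP to $V_n$ itself by showing the various error terms are negligible at this speed. The rate function $R$ is precisely one half of the quadratic form associated with the inverse of the limiting covariance from Proposition~\ref{prop: asymp indep}, translated so that its minimizer sits at $(\beta_E,\mathbf{0},\mathbf{0})$.

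First I would put $V_n$ into a clean form. Using Lemma~\ref{lem: rand:map}, which gives $\sqrt n\,\omega^{(k)}=\sqrt n\,\bar\omega^{(k)}+o_p(1)$, together with the definition $\sqrt n\,\bar\Omega=\sqrt n\,\widebar W-\bfs$, I write $\sqrt n\,\bar\Omega=\sqrt n\,\Omega-\bfs+\xi_n$ with $\xi_n=o_p(1)$. Substituting the linear representation of Assumption~\ref{as:1} and using $\sqrt n\,\beta_{E,n}=a_n\beta_E$, I obtain
\[
\frac{1}{a_n}V_n = \frac{1}{a_n\sqrt n}\sum_{i=1}^n e_{i,n} + \begin{pmatrix}\beta_E\\ \mathbf{0}\\ \mathbf{0}\end{pmatrix} + \frac{1}{a_n}R_n - \frac{1}{a_n}\begin{pmatrix}\mathbf{0}\\ \mathbf{0}\\ \bfs\end{pmatrix} + \frac{1}{a_n}\begin{pmatrix}\mathbf{0}\\ \mathbf{0}\\ \xi_n\end{pmatrix},
\]
so that the leading term is the standardized partial sum $\tfrac{1}{a_n\sqrt n}\sum_i e_{i,n}$ translated by the constant vector $(\beta_E,\mathbf{0},\mathbf{0})$.

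At the heart of the argument I would invoke the classical moderate deviations principle for sums of i.i.d.\ random vectors with finite exponential moments: the bound $\mathbb{E}[\exp(\lambda\|e_{1,n}\|_2)]<\infty$ in Assumption~\ref{as:1} guarantees that $\tfrac{1}{a_n\sqrt n}\sum_i e_{i,n}$ satisfies an MDP with speed $a_n^2$ and good rate function $x\mapsto\tfrac12 x\tran\Sigma_0^{-1}x$, where $\Sigma_0$ is the limiting covariance of $e_{i,n}$ from Proposition~\ref{prop: asymp indep}, namely the block-diagonal matrix $\diag(\calI_{E,E}^{-1},\,\calI/\calI_{E,E},\,\Sigma_{\Omega})$. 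Invertibility of each block—in particular of $\Sigma_{\Omega}=U\otimes\calI$, which holds because $\rho_0>0$ makes $U$ invertible—ensures the rate is exactly this quadratic. Translating by the fixed vector $(\beta_E,\mathbf{0},\mathbf{0})$ converts the rate function into $R(b,b^\perp,\omega)$, with the $\calI/\calI_{E,E}$ block matching the Schur complement written as $(\calI\setminus\calI_{E,E})$ in the statement.

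Finally I would dispose of the remaining terms. The term $\tfrac{1}{a_n}R_n$ is exponentially equivalent to zero by the super-exponential decay condition $\tfrac{1}{a_n^2}\log\mathbb{P}[\tfrac{1}{a_n}\|R_n\|_2>\epsilon]\to-\infty$ in Assumption~\ref{as:1}, so the standard exponential-equivalence theorem leaves the MDP unchanged. The deterministic shift $-\tfrac{1}{a_n}\bfs$ and the stochastic term $\tfrac{1}{a_n}\xi_n$ are only of order $O_p(1)/a_n$ and hence vanish, but not super-exponentially; these I absorb using Assumption~\ref{as:2}, which states precisely that such $\tfrac{1}{a_n}O_p(1)$ perturbations do not alter the limiting log-probability. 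Combining these facts, $\tfrac{1}{a_n}V_n$ obeys the MDP with rate $R$, and since $\mathcal S$ is convex the infima of the continuous quadratic $R$ over $\mathcal S^\circ$ and over $\overline{\mathcal S}$ coincide, so the upper and lower MDP bounds match and yield $\lim_{n}-a_n^{-2}\log\mathbb{P}[a_n^{-1}V_n\in\mathcal S]=\inf_{(b,b^\perp,\omega)\in\mathcal S}R(b,b^\perp,\omega)$. I expect the main obstacle to be this last step: the remainder $R_n$ is killed by a genuine super-exponential bound via exponential equivalence, but the $O_p(1)$-scale perturbations from Lemma~\ref{lem: rand:map} and the offset $\bfs$ decay only at rate $1/a_n$, so they lie outside the reach of exponential equivalence and require the more delicate stability guaranteed by Assumption~\ref{as:2}.
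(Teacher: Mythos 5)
Your proof is correct and takes essentially the same route as the paper's: you first use Assumption \ref{as:2} to absorb the $O_p(1)$-scale discrepancy between $\sqrt{n}\bar\Omega$ and $\sqrt{n}\Omega$ (the $-\bfs+o_p(1)$ term from Lemma \ref{lem: rand:map}), and then apply Assumption \ref{as:1}'s linear representation, moment bound, and super-exponential remainder condition to obtain the Gaussian moderate-deviation limit with rate $R$. The paper's proof is simply a terser version of this same two-step argument, leaving implicit the classical i.i.d.\ MDP, the exponential-equivalence treatment of $R_n$, the translation by $(\beta_E,\mathbf{0},\mathbf{0})$, and the convexity argument matching upper and lower bounds, all of which you spell out correctly.
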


\begin{proof}
Based on the condition in Assumption \ref{as:2}, we have
\begin{equation*}
\begin{aligned} 
& \displaystyle\lim_{n\to \infty} \dfrac{1}{a_n^2} \left\{\log \mathbb{P}\left[ \frac{1}{a_n} V_n \in \mathcal{S}\right]- \log \mathbb{P}\left[ \frac{1}{a_n} \begin{pmatrix}\sqrt{n}\widehat{\beta}_E \\ \sqrt{n}\widehat{\beta}^\perp_{-E} \\ \sqrt{n}\Omega \end{pmatrix} \in \mathcal{S}\right] \right\} =0.
\end{aligned}
\end{equation*}
Because,
$$
\sqrt{n}\begin{pmatrix}\widehat{\beta}_E \\ \widehat{\beta}^\perp_{-E} \\ \Omega \end{pmatrix} = \sqrt{n} \bar{E}_n +R_n,
$$
we have the following large-deviation limit 
$$
\displaystyle\lim_{n\to \infty} -\dfrac{1}{a_n^2} \log \mathbb{P}\left[ \frac{1}{a_n} \begin{pmatrix}\sqrt{n}\widehat{\beta}_E \\ \sqrt{n}\widehat{\beta}^\perp_{-E} \\ \sqrt{n}\Omega \end{pmatrix}  \in \mathcal{S}\right]=\inf_{b, b^{\perp}, \omega \in \mathcal{S}}R(b, b^{\perp}, \omega)
$$
under Assumption \ref{as:1}.
\end{proof}

Now we are ready to prove Theorem \ref{thm:consistency}.
\begin{proof}
Let $\bbQ_1$, $\bbQ_2$ be defined according to \eqref{equ: def Q_1 Q_2}. 
Note that
$$\bfr =\bbQ_3 \widehat{Z} + \bbQ_4 \sqrt{n}\widehat\beta^\perp_{-E} +\bfs$$
for fixed matrices $\bbQ_3$ and $\bbQ_4$.
Thus, for  
$$h(B,Z; b, b^\perp)= \bbQ_2 B + \bbQ_3 Z -\bbQ_1 b + \bbQ_4 b^\perp,$$
we have
$$\sqrt{n}\bar\Omega =h(\sqrt{n}\widehat{B},\widehat{Z}; \sqrt{n}\widehat{\beta}_E, \sqrt{n}\widehat{\beta}^{\perp}_{-E}).$$
Denote by $U_n$ the vector
$$
\begin{pmatrix}
\sqrt{n}\widehat{\beta}_E\\ 
\sqrt{n}\widehat{\beta}^\perp_{-E}\\ 
\sqrt{n} \widehat{B}\\ 
\widehat{Z} 
\end{pmatrix},$$
and consider $V_n$ which we defined in Lemma \ref{MDP:V}.
Observe that 
\begin{equation}
\begin{aligned}
\label{CoV:LDP}
V_n &=  \begin{pmatrix}[1.5]\sqrt{n}\widehat{\beta}_E \\ \sqrt{n}\widehat{\beta}^\perp_{-E} \\ h\left(\sqrt{n}\widehat{B},\widehat{Z}; \sqrt{n}\widehat{\beta}_E, \sqrt{n}\widehat{\beta}^{\perp}_{-E}\right) \end{pmatrix}.
\end{aligned}
\end{equation}
Let the expression on the right-hand side display be $\mathbb{H}(U_n)$, where $\mathbb{H}:\mathbb{R}^{p(K+1)}\to \mathbb{R}^{p(K+1)}$ is a bijective mapping.
At last, noting that 
$\tau = \bbL_1 G + \bbL_0$, and $\kappa= \bbM_1 G + \bbM_0$,
we define $\bar{\tau}= \frac{1}{a_n}(\tau - \bbL_1 G)$, and $\bar{\kappa}= \frac{1}{a_n}(\kappa- \bbM_1 G)$, and let $\frac{1}{a_n} Z = \zeta$.

Applying the contraction principle for large-deviation limits together with Lemma \ref{MDP:V}, we observe that the vector 
$$\frac{1}{a_n} U_n= \mathbb{H}^{-1}\left(\frac{1}{a_n}V_n\right)$$
 satisfies a large deviation principle with the rate function $R\circ \mathbb{H}$.
Thus, it follows that
\begin{equation*}
\begin{aligned}
&\lim_{n\to \infty} -\dfrac{1}{a_n^2}\log \mathbb{P}\left[ \frac{\sqrt{n}}{a_n} \widehat{B}\in \mathcal{O}_S \ \Big\lvert \ \frac{1}{a_n}\widehat{Z}= \zeta\right] \\
&=\inf_{b, b^{\perp}, \bbD B > \mathbf{0}} \left\{R\circ \mathbb{H}(b, b^{\perp}, B, \zeta)-\inf_{b', b'^{\perp}, \bbD B'>0} R\circ \mathbb{H}(b', b'^{\perp}, B', \zeta)\right\} \\
&= \inf_{b, \bbD B > \mathbf{0}} \frac{1}{2}(b- \Pi \beta_E - \bar{\kappa})\tran \Theta^{-1}(b- \Pi \beta_E - \bar{\kappa}) + \frac{1}{2}(B- \Psi b -\bar\tau)\tran \Gamma^{-1}(B- \Psi b -\bar\tau).
\end{aligned}
\end{equation*}

To conclude the claim, we observe that
\begin{equation*}
\begin{aligned}
& \lim_{n\to \infty}  \inf_{b, B} \Bigg\{\frac{1}{2}\left(b- \Pi \beta_E - \frac{1}{a_n}\kappa\right)\tran \Theta^{-1}\left(b- \Pi \beta_E - \frac{1}{a_n}\kappa\right)   \\
&\;\;\;\;\;\;\;\;\;\;\;\;\;\;\;\;\;\;\;\;\;\;\;\;\;\;\;\;\;\;+  \frac{1}{2}\left(B- \Psi b -\frac{1}{a_n}\tau\right)\tran \Gamma^{-1}\left(B- \Psi b -\frac{1}{a_n}\tau\right) + \frac{1}{a_n^2}\text{Barr}_{\mathcal{O}}(a_n B)\Bigg\}\\
&= \inf_{b, \bbD B> \mathbf{0}} \frac{1}{2}(b- \Pi\beta_E - \bar{\kappa})\tran \Theta^{-1}(b- \Pi \beta_E - \bar{\kappa}) + \frac{1}{2}(B- \Psi b -\bar\tau)\tran \Gamma^{-1}(B- \Psi b -\bar\tau).
\end{aligned}
\end{equation*}
This is because the sequence of convex objectives in the left-hand side display converge (in a pointwise sense) to the convex objective on the right-hand side display which has a unique minimum.
\end{proof}

\subsection{Proof of Theorem \ref{est:eqns}}

\begin{proof}
Observe, the approximate selective likelihood is equal to
$$
(\sqrt{n}\widehat{\beta}_E)\tran \Theta^{-1}(\sqrt{n}\Pi\beta_{E,n} + \kappa) -Q_n^*\left(\Theta^{-1}(\sqrt{n}\Pi\beta_{E,n} + \kappa)\right),
$$
where 
\begin{equation}
Q_n^*(\alpha) = \sup_{v} (\sqrt{n}v)\tran \alpha - Q_n(\sqrt{n}v)
\label{sup:opt}
\end{equation}
and 
\begin{equation*}
\begin{aligned}
Q_n(\sqrt{n}v) &= \frac{1}{2}  (\sqrt{n} v)\tran \Theta^{-1}\sqrt{n} v +\inf_{V}\Big\{ \frac{1}{2}\left(\sqrt{n} V- \sqrt{n}\Psi  v -\tau\right)\tran \Gamma^{-1}\left(\sqrt{n}V- \sqrt{n}\Psi v -\tau\right) \\
&\;\;\;\;\;\;\;\;\;\;\;\;\;\;\;\;\;\;\;\;\;\;\;\;\;\;\;\;\;\;\;\;\;\;\;\;\;\;\;\;\;\;\;\;\;\;\;\;+ \text{Barr}_{\mathcal{O}}(\sqrt{n} V)\Big\}.
\end{aligned}
\end{equation*}
The score, based on the approximate selective likelihood, is equal to
\begin{equation*}
\begin{aligned}
&\sqrt{n}\Pi\tran\Theta^{-1}\left(\sqrt{n}\widehat\beta_E - \nabla Q_n^*\left(\Theta^{-1}(\sqrt{n}\Pi\beta_{E,n} + \kappa)\right)\right).
\end{aligned}
\end{equation*}
Thus, the selective MLE is given by
\begin{equation*}
\begin{aligned}
\Theta^{-1}(\sqrt{n}\Pi\widehat\beta^{(S)}_{E,n} + \kappa)&= (\nabla Q_n^*)^{-1}(\sqrt{n}\widehat\beta_E)\\
&= \nabla Q_n(\sqrt{n}\widehat\beta_E)\\
&= \Theta^{-1}\sqrt{n}\widehat\beta_E - \Psi\tran\Gamma^{-1}\left(\sqrt{n}\widehat{V}_{\widehat\beta_E}^{\star}- \sqrt{n}\Psi \widehat\beta_E -\tau\right).
\end{aligned}
\end{equation*}
That is, 
\begin{equation*}
\begin{aligned}
\sqrt{n}\widehat\beta^{(S)}_E &= \sqrt{n} \Pi^{-1}\widehat\beta_E - \Pi^{-1}\kappa + \Pi^{-1}\Theta\Psi\tran \Gamma^{-1}\left(\Psi \sqrt{n}\widehat\beta_E +\tau -\sqrt{n}\widehat{V}_{\widehat\beta_E}^{\star}\right)\\
&= \sqrt{n} \Pi^{-1}\widehat\beta_E - \Pi^{-1}\kappa + \widehat\calI_{E,E}^{-1}\Psi\tran \Theta^{-1}\left(\Psi \sqrt{n}\widehat\beta_E +\tau -\sqrt{n}\widehat{V}_{\widehat\beta_E}^{\star}\right).
\end{aligned}
\end{equation*}
Let $v^*$ be the solution of \eqref{sup:opt} when $\alpha= \sqrt{n}\Pi\widehat\beta^{(S)}_E + \kappa$.
The selective obs-FI matrix, derived from the curvature of the approximate selective likelihood, is given by
\begin{equation*}
\begin{aligned}
\widehat\calI^{(S)}_{E,E} &= \Pi\tran \Theta^{-1}\nabla^2Q_n^*\left(\sqrt{n}\Pi\widehat\beta^{(S)}_E + \kappa\right) \Theta^{-1}\Pi\\
&= \Pi\tran \Theta^{-1}\left(\nabla^2 Q_n\left(\sqrt{n}v^*\right) \right)^{-1}\Theta^{-1}\Pi\\
&= \Pi\tran \Theta^{-1}\left(\Theta^{-1}+ \Psi\tran \Gamma^{-1}\Psi - \Psi\tran \Gamma^{-1} \left(\Gamma^{-1}  + \nabla^2\text{\normalfont Barr}_{}\left(\sqrt{n}\widehat{V}_{\widehat\beta_E}^{\star}\right)\right)^{-1}\Gamma^{-1} \Psi\right)^{-1} \Theta^{-1}\Pi\\
&= \widehat\calI_{E,E}\left(\Theta^{-1}+ \Psi\tran \Gamma^{-1}\Psi - \Psi\tran \Gamma^{-1} \left(\Gamma^{-1}  + \nabla^2\text{Barr}_{}\left(\sqrt{n}\widehat{V}_{\widehat\beta_E}^{\star}\right)\right)^{-1}\Gamma^{-1} \Psi\right)^{-1}  \widehat\calI_{E,E}.
\end{aligned}
\end{equation*}
\end{proof}

\section{Sampling subsets with replacement}
\label{sec: with replacement}

\begin{lemma}
If the $K$ subsets $D^{(1)},\ldots,D^{(K)}$ are independent random samples of the dataset $D$ (rather than disjoint partitions) and each subset has size $[\rho n]$, then Theorem~\ref{thm: normality of omega} holds with
\[
\Sigma_{\Omega}=\frac{1-\rho}{\rho}I_K\otimes \calI.
\]
\label{prop: w r}
\end{lemma}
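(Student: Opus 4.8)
The plan is to follow the proof of Theorem~\ref{thm: normality of omega} verbatim, altering only the covariance computation so that it reflects the new sampling scheme: each $\calC^{(k)}$ is a simple random sample of size $m=[\rho n]$ drawn \emph{without replacement} within the subset, while the $K$ subsets are drawn \emph{independently} of one another. First I would reuse the decomposition \eqref{rep:omega}, writing $\sqrt n\,\omega^{(k)} = \sqrt n\,\tilde\omega^{(k)} + \sqrt n\,R^{(k)}$ with $R^{(k)} = R_1^{(k)}+R_2^{(k)}-r_1^{(k)}-r_2^{(k)}$ and
$$\tilde\omega^{(k)} = \frac1n X\tran(\nabla A(X_E\beta_{E,n})-Y) - \frac1m X^{(k),\intercal}(\nabla A(X^{(k)}_E\beta_{E,n})-Y^{(k)}).$$
The remainder bounds in Lemma~\ref{lem: R_1 - r_1} and Lemma~\ref{lem: R_2 - r_2} are per-machine statements: their proofs invoke only the Taylor expansions of Lemma~\ref{lem: property of beta*} and the convergence of empirical averages such as $\frac1{n_k}X^{(k),\intercal}\diag(\cdots)X^{(k)}$ to their population limits, and never use disjointness across machines. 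Since the average over a random subsample of growing size $m\to\infty$ obeys the same law of large numbers, $\sqrt n\,R^{(k)}=o_p(1)$ still holds, and it suffices to prove $\sqrt n\,\tilde\Omega\convdis\N(\mathbf 0,\Sigma_\Omega)$ with $\Sigma_\Omega=\frac{1-\rho}{\rho}I_K\otimes\calI$, where $\tilde\Omega$ stacks the $\tilde\omega^{(k)}$.

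With $e_i = x_i(\nabla A(x_{i,E}\tran\beta_{E,n})-y_i)$ i.i.d., mean $\mathbf 0$, and $\Var{e_i}\to\calI$, one has $\sqrt n\,\tilde\omega^{(k)}=\sqrt n(\bar e-\bar e^{(k)})$ with $\bar e=\frac1n\sum_{i\in[n]}e_i$ and $\bar e^{(k)}=\frac1m\sum_{i\in\calC^{(k)}}e_i$. For the \emph{diagonal} blocks I would argue by exchangeability: since the $e_i$ are i.i.d., the joint law of $(\sum_{i\in\calC^{(k)}}e_i,\sum_{i\notin\calC^{(k)}}e_i)$ does not depend on which size-$m$ subset $\calC^{(k)}$ is, so the marginal law of $\sqrt n\,\tilde\omega^{(k)}$ is identical to the one in Theorem~\ref{thm: normality of omega} with $\rho_k=\rho$, giving the limiting variance $\frac{1-\rho}{\rho}\calI$. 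For the \emph{off-diagonal} blocks with $j\neq k$, I would condition on the data $(X,Y)$: unbiasedness of simple random sampling gives $\EE{\bar e^{(k)}\mid X,Y}=\bar e$, hence $\EE{\sqrt n\,\tilde\omega^{(k)}\mid X,Y}=\mathbf 0$, so the ``between'' term of the total covariance vanishes; and independence of $\calC^{(j)},\calC^{(k)}$ given the data makes $\bar e^{(j)},\bar e^{(k)}$ conditionally independent, killing the ``within'' term as well. This is exactly where the result departs from Theorem~\ref{thm: normality of omega}: independence replaces the $-\calI$ blocks that disjointness produced by $\mathbf 0$, which is what collapses $U$ into the Kronecker factor $I_K$.

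To promote these moment computations to the stated weak limit I would again condition on $(X,Y)$. Given the data, each $\sqrt n\,\tilde\omega^{(k)}$ is a centered, scaled sum over a without-replacement subsample, the $K$ of them are conditionally independent, and a finite-population CLT (Hájek) applies blockwise under the moment condition on $e_1$; the conditional covariance of $\sqrt n\,\tilde\omega^{(k)}$ is $\frac nm\cdot\frac{n-m}{n-1}\widehat\Sigma_e\to\frac{1-\rho}{\rho}\calI$, where $\widehat\Sigma_e=\frac1n\sum_i(e_i-\bar e)(e_i-\bar e)\tran\to\calI$. Thus, conditionally on $(X,Y)$, $\sqrt n\,\tilde\Omega\convdis\N(\mathbf 0,\frac{1-\rho}{\rho}I_K\otimes\calI)$; since this conditional limit is free of the data, the unconditional limit coincides with it, and Slutsky's theorem absorbs the $o_p(1)$ remainder. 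The main obstacle I anticipate is the clean joint handling of the two sources of randomness—the common i.i.d.\ data and the independent subsampling—and in particular getting the finite-population correction right: it is sampling \emph{without} replacement within each subset that yields the factor $\frac{1-\rho}{\rho}$ (sampling with replacement would instead give $\frac1\rho$), while independence \emph{across} subsets is what annihilates the cross terms. Fixing the sampling convention and invoking the correct finite-population CLT is the crux; the remainder of the argument parallels Theorem~\ref{thm: normality of omega}.
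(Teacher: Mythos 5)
Your proof is correct, and its outer skeleton matches the paper's: reuse the decomposition \eqref{rep:omega} and the remainder lemmas (which are indeed per-machine statements), keep the diagonal blocks $\tfrac{1-\rho}{\rho}\calI$ (which the paper asserts without comment and you justify by exchangeability), and show the off-diagonal blocks vanish. The genuine difference is the choice of conditioning variable in the law-of-total-covariance step. The paper conditions on the random index sets $(\calC^{(j)},\calC^{(k)})$: given the sets, the conditional cross-covariance equals $\left(|\calC^{(j)}\cap\calC^{(k)}|/(n\rho^2)-1\right)\Var{e_i}$, and it vanishes in expectation because independent sampling gives $\EE{|\calC^{(j)}\cap\calC^{(k)}|}=n\rho^2$. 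You condition on the data $(X,Y)$ instead: unbiasedness of simple random sampling gives $\EE{\bar e^{(k)}\mid X,Y}=\bar e$, so the between term is zero, and independence of the subsets given the data makes $\bar e^{(j)}$ and $\bar e^{(k)}$ conditionally independent, so the within term is zero too---the cross-covariance is exactly zero, with no overlap computation at all. Your route also buys rigor that the paper leaves implicit: the blockwise H\'ajek finite-population CLT applied conditionally on the data, combined with the conditional independence of the $K$ blocks, actually establishes the joint weak limit $\N\left(\mathbf{0},\tfrac{1-\rho}{\rho}I_K\otimes\calI\right)$ (your finite-population correction $\tfrac{n}{m}\cdot\tfrac{n-m}{n-1}\widehat\Sigma_e\to\tfrac{1-\rho}{\rho}\calI$ is right), whereas the paper only verifies the limiting covariance structure and inherits joint normality tacitly from the proof of Theorem~\ref{thm: normality of omega}. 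What the paper's conditioning buys in return is elementarity---only second-moment calculations, no finite-population CLT---and it isolates exactly where across-machine independence enters, namely through the expected overlap $n\rho^2$. Your closing observation is also correct and worth keeping: the factor $\tfrac{1-\rho}{\rho}$, as opposed to $\tfrac{1}{\rho}$, hinges on the indices within each subset being distinct, which is precisely how the paper's proof treats $\calC^{(k)}$ despite the section heading's ``with replacement'' phrasing (which refers to sampling across machines).
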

\begin{proof}
It suffices to prove that for $j\neq k$ $\Cov{\sqrt n\omega^{(j)},\sqrt n\omega^{(k)} }\to0$.
Following the proof \ref{prf: normality of omega}, we only need to show
\begin{align*}
\Cov{\frac1{\sqrt n}\sum_{i=1}^n e_i - \frac{1}{\sqrt n\rho}\sum_{i\in\calC^{(j)}}e_i,\;  \frac1{\sqrt n}\sum_{i=1}^n e_i - \frac{1}{\sqrt n\rho}\sum_{i\in\calC^{(k)}}e_i }\to 0.
\end{align*}
Let $\calC^{(j)}$ denote the index set of $D^{(j)}$.
Since $\EE{\frac1{\sqrt n}\sum_{i=1}^n e_i - \frac{1}{\sqrt n\rho}\sum_{i\in\calC^{(j)}}e_i\mid \calC^{(j)} }=0$, it remains to show that
\begin{align*}
\EE{\Cov{\frac1{\sqrt n}\sum_{i=1}^n e_i - \frac{1}{\sqrt n\rho}\sum_{i\in\calC^{(j)}}e_i,\;  \frac1{\sqrt n}\sum_{i=1}^n e_i - \frac{1}{\sqrt n\rho}\sum_{i\in\calC^{(k)}}e_i \mid \calC^{(j)},\calC^{(k)} } }=0.
\end{align*}
Note that
\begin{align*}
&\Cov{\frac1{\sqrt n}\sum_{i=1}^n e_i - \frac{1}{\sqrt n\rho}\sum_{i\in\calC^{(j)}}e_i,\;  \frac1{\sqrt n}\sum_{i=1}^n e_i - \frac{1}{\sqrt n\rho}\sum_{i\in\calC^{(k)}}e_i \mid \calC^{(j)},\calC^{(k)} }\\
&\qquad=\Cov{e_i}-\Cov{e_i}-\Cov{e_i}+|\calC^{(j)}\cap \calC^{(k)} |\frac{1}{n\rho^2}\Cov{e_i}\\
&\qquad=(|\calC^{(j)}\cap \calC^{(k)} |\frac{1}{n\rho^2}-1 )\Cov{e_i}.
\end{align*}
The proof is completed by the fact that $\EE{|\calC^{(j)}\cap \calC^{(k)} |}=n\rho^2$.
\end{proof}

In this setting, the matrices $\Gamma,\Theta,\Psi,\tau,\nu,\Pi,\kappa$ are similarly found by Theorem~\ref{sel:event:rep} and its proof with $\Sigma_\Omega=\frac{1-\rho}{\rho}I_K\otimes \calI $. So now $\Sigma_{\Omega}^{-1}=\frac{\rho}{1-\rho}I_K\otimes \calI^{-1} $. Then 
\begin{align*}
\{\Gamma^{-1}\}_{j,k}=\{\bbQ_2\Sigma_\Omega^{-1}\bbQ_2\}_{j,k}=\frac{\rho}{1-\rho}\calI_{E^{(j)},E^{(j)}}
\end{align*}
if $j=k$ and $0$ otherwise. Other matrices are similarly computed.

\section{Selective inference with general aggregation rules}
\label{sec: general aggregation}

\subsection{Algorithm}
In the main manuscript, we focused on the union aggregation rule, i.e., the final model $E$ is the union of selected variables in the base models $E^{(k)},1\leq k\leq K$. 
We show that with a slight modification, our procedure can be adapted to accommodate other aggregation rules.

The new procedure is summarized in Algorithm~\ref{algo general}.
We note that the procedure remains almost the same, except that 
\begin{align}
g^{(j)}_k =J_{E^{(k)}} \gamma^{(j)} + \begin{pmatrix} \mathbf{0}_{E^{(k)}\cap E } \\ \widehat\beta^{\perp}_{E^{(k)} \setminus E }  \end{pmatrix},
\label{equ: gjk general}
\end{align}
where
\[
\widehat\beta^\perp = \frac1n X\tran(Y-X\widehat\beta_E ).
\]
If there exists a variable that is selected by machine $k$ but is not selected in the final model $E$, then we must compensate the subgradients by the correlation between that variable and the residual vector for a more general aggregation rule. 

To compute the vector $g_k^{(j)}$, we note that the central machine requires $\widehat\beta^{\perp}_{E^u\setminus E } $, where $E^u=\cup_{k\in[K]} E^{(k)}$ is the union of the base models.
Thus, each local machine must send
$$\widehat\beta^{\perp,(k) }_{E^{u}\setminus E }=X^{(k),\intercal}_{E^u\setminus E}(Y^{(k)} - X^{(k)}\widehat\beta_E )$$
to the central machine. 
Because this quantity depends on the MLE $\widehat\beta_E$, which is computed on the central machine, the central machine must first send $\widehat\beta_E$ to the local machines. 
Our modified procedure in Algorithm~\ref{algo general}, therefore, involves two more exchanges between the central machine and local machines: (1) the central machine sends $\widehat\beta_E$ to local machines; (2) local machines send $\widehat\beta^{\perp,(k) }_{E^{u}\setminus E }$ to the central machine. 
In comparison with Algorithm~\ref{algo}, the communication cost is $|E^u\cup E|$ per local machine.
Note that this cost is comparable to the overall cost of order $O(|E|^2)$, as long as $|E^u|$ is about the same order as $|E|$.

Of course, the modified $g^{(j)}_k$ in \eqref{equ: gjk general} change some matrices in the optimization that we solve for approximately-valid selective inference.
Theoretically, our selective likelihood is now obtained by conditioning further on $\widehat\beta^\perp_{E^u\setminus E}$ besides the information from the subgradient vectors.

\begin{algorithm}
\setstretch{1.2}
\caption{General aggregation rules.}
\label{algo general}
\SetKwInOut{Input}{Input}
\SetKwInOut{Output}{Output}
\vspace{2mm}

\algorithmicrequire{ \textbf{1:} Variable Selection at Local Machines}

Machine $k$ solves \eqref{GL:regression} and sends $E^{(k)} =\text{Support}(\widehat\beta^{\Lambda,(k)})$ to the central machine.

\vspace{2mm}
\algorithmicrequire{ \textbf{2:} Modeling with selected predictors}

Central Machine aggregates $E^{(k)}$ to get the final model $E$ and forms the selected GLM in \eqref{sel:model}.

\vspace{2mm}
\algorithmicrequire{ \textbf{3:} Communication with Central Machine} 

\hspace*{0.2cm} Exchange 1: Central machine sends the set $E$ as well as $E^{u}=\cup_{k\in[K]}E^{(k)} $ to the local machines.

\hspace*{0.2cm}  Exchange 2: Local machines send back the following information
\begin{align*}
&\text{local estimators: } \  \widehat\beta_{E}^{(k)}, \; \widehat\calI_{E,E}^{(k)};
&\text{subgradient at $\widehat\beta^{\Lambda,(k)}$: } \ \gamma^{(k)}_{E^u}.
\end{align*}

\hspace*{0.2cm}  Exchange 3: Central Machine computes the MLE $\widehat\beta_E$ and sends to local machines.

\hspace*{0.2cm}  Exchange 4: Local machines compute $\widehat\beta^{\perp,(k)}_{E^u\setminus E}=X^{(k),\intercal}_{E^u\setminus E}(Y^{(k)} - X^{(k)} \widehat\beta_E ) $ and send to Central Machine.

\algorithmicrequire{ \textbf{4:} Selective Inference at Central Machine}

\begin{enumerate}[label=(\Alph*),leftmargin=1.1cm]
\item Compute $\widehat\beta^\perp_{E^u\setminus E} =\frac1n X^{(0),\intercal}_{E^u\setminus E}(Y^{(0)} - X^{(0)} \widehat\beta_E ) + \frac1n \sum_{k=1}^K \widehat\beta^{\perp,(k)}_{E^u\setminus E}  $

\item Compute $\widehat\Gamma,\widehat\Psi,\widehat\tau,\widehat\Theta,\widehat\Pi,\widehat\kappa $ as defined in Section~\ref{sec: algorithm}, with $g_j^{(k)}$ defined according to Equation~\eqref{equ: gjk general}.

\item Remaining steps match with Algorithm~\ref{algo}.
\end{enumerate}

\end{algorithm}

\subsection{Experiments}

We illustrate the performance of Algorithm \ref{algo general}
on simulated data.
In the following experiment, we consider the same setting as that in Section \ref{sec: experiments} with prespecified groups of correlated predictors.
More specifically, we consider $20$ groups of predictors with size $5$; distinct groups of predictors are uncorrelated, while all pairs of predictors within the same group have correlation equal to $0.9$.
As before, we assume there are $5$ non-zero coefficients $\beta_j$ and these nonzero coefficients are present in $5$ different groups.

Suppose that 
\[
G= \bigcup_{k\in[K]}\{G_j:\, j\in E^{(k)} \},
\]
i.e., $G$ contains groups which have at least one predictor selected by at least one of the $K$ local machines.
Our final aggregated model is formed by randomly picking one predictor from each of the selected groups (in $G$) with highly correlated predictors.

The results of our experiment are shown in Figure~\ref{fig: group shaving}.
 We see that our proposed method achieves the desired coverage probability.
 Similar patterns hold up for the lengths and power of our confidence intervals as was already noted for the previous aggregation rule.

\begin{figure}
\centering
\begin{subfigure}{.8\textwidth}
\centering
\includegraphics[width=\textwidth]{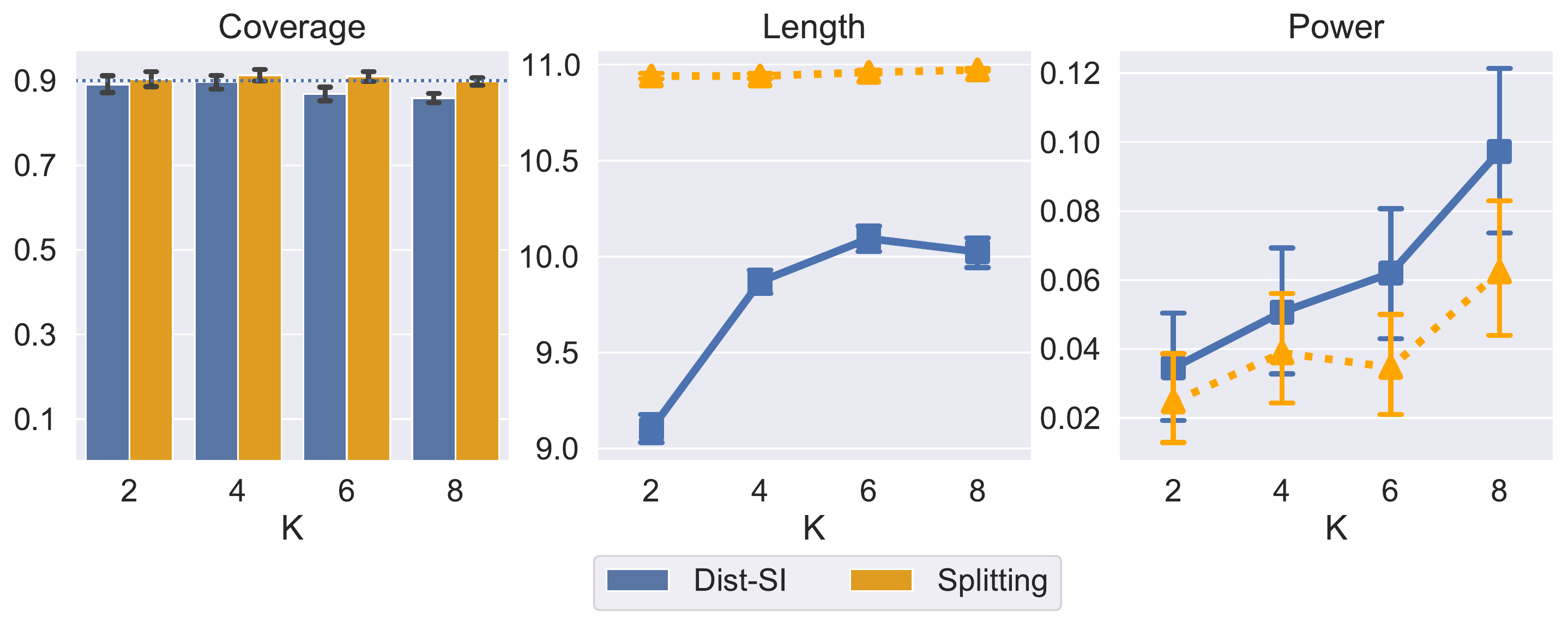}
\caption{Varying $K$. Each local machine has $[8000/K]$ data points for each $K$. 
}
\label{fig: linear vary K group}
\end{subfigure}
\begin{subfigure}{.8\textwidth}
\centering
\includegraphics[width=\textwidth]{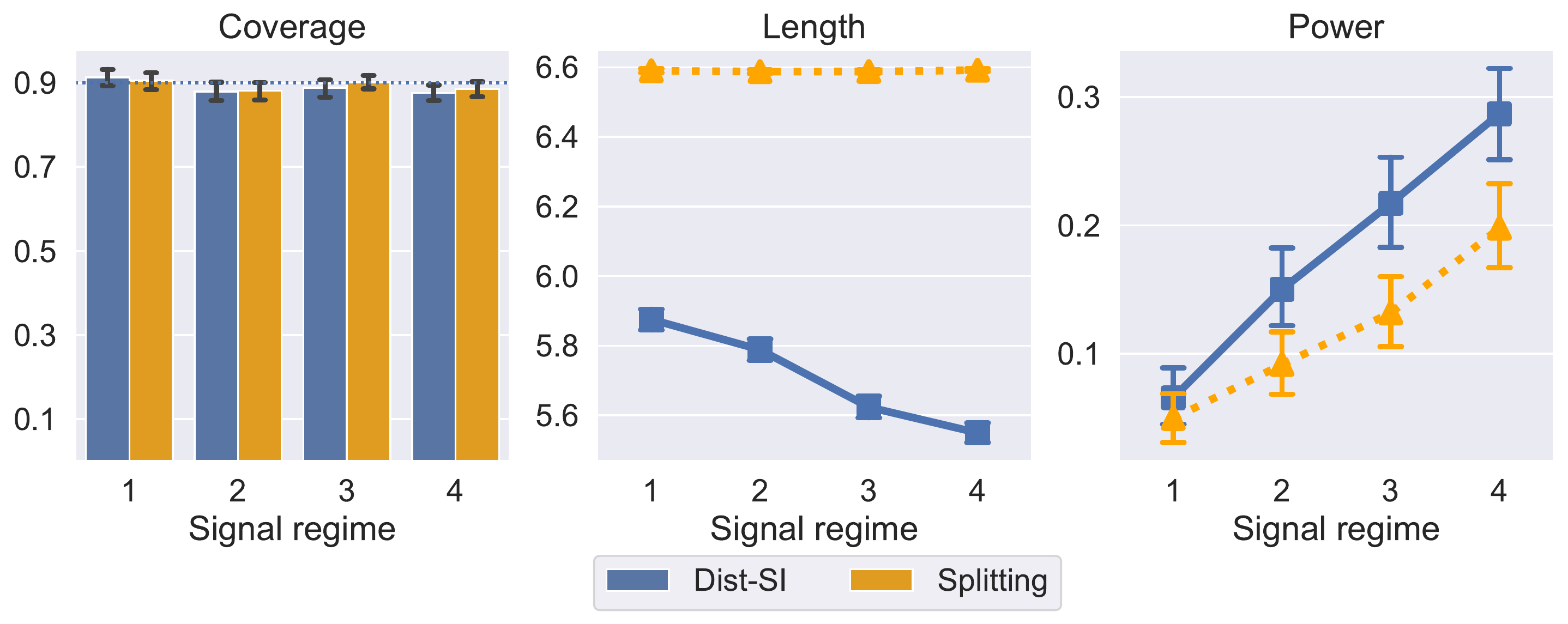}
\caption{Varying signal strength. The nonzero $\beta_j$ equals $\pm\sqrt{2c\log p}$ with random signs for $c=0.3,0.5,0.7,0.9$ in the four signal regimes.}
\label{fig: linear vary signal group}
\end{subfigure}
\begin{subfigure}{.8\textwidth}
\centering
\includegraphics[width=\textwidth]{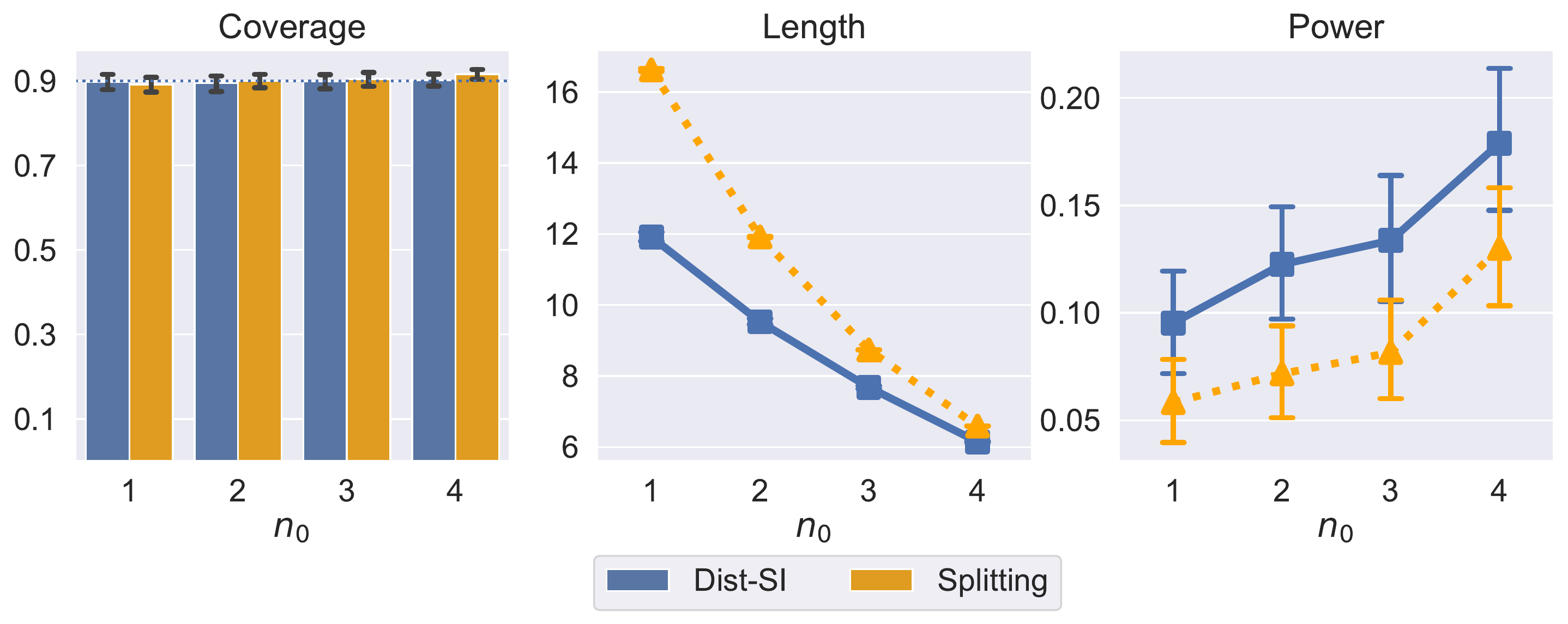}
\caption{Varying $n_0$, the sample size in the central machine.}
\label{fig: linear vary n0 group}
\end{subfigure}
\caption{Results for the grouped aggregation rule}
\label{fig: group shaving}
\end{figure}

\vskip 0.2in
\bibliographystyle{apalike}
\bibliography{paper.bbl}

\end{document}